\documentclass[12pt,reqno]{amsart}

\usepackage[OT1]{fontenc}
\usepackage{amsthm,amsmath,amsfonts,amssymb,amsaddr}
\usepackage{hyperref}

\usepackage{booktabs, multirow}
\usepackage{lmodern}

\usepackage{mathrsfs}
\usepackage{ascmac}
\usepackage{bm}
\usepackage{natbib}
\usepackage{ascmac}
\usepackage{fancybox}
\usepackage{graphicx}
\usepackage{xcolor}
\usepackage{tabularx}
\usepackage{subcaption}
\usepackage{algorithm}
 \usepackage{algpseudocode}

\usepackage{amsaddr}
\usepackage{fancyhdr}
\usepackage{natbib}
\usepackage{fullpage}
\linespread{1.2}
\usepackage{newtxtext,newtxmath}
\usepackage{comment}

\newtheorem{theorem}{Theorem}

\newtheorem{lemma}{Lemma}
\newtheorem{proposition}{Proposition}

\theoremstyle{definition}
\newtheorem{definition}{Definition}

\newcommand{\R}{\mathbb{R}}

\newcommand{\mD}{\mathcal{D}}

\newcommand{\mK}{\mathcal{K}}
\newcommand{\mS}{\mathcal{S}}
\newcommand{\mT}{\mathcal{T}}

\newcommand{\mM}{\mathcal{M}}

\renewcommand{\hat}{\widehat}
\renewcommand{\tilde}{\widetilde}

\newcommand{\argmin}{\operatornamewithlimits{argmin}}
\newcommand{\argmax}{\operatornamewithlimits{argmax}}

\newcommand\NoDo{\renewcommand\algorithmicdo{}}
\newcommand{\given}{\,|\,}
\newcommand{\T}{\mathrm{\scriptscriptstyle T}}

\newenvironment{psmallmatrix}
  {\left(\begin{smallmatrix}}
  {\end{smallmatrix}\right)}

\title{Process-based Inference for Spatial Energetics \\ Using Bayesian Predictive Stacking}

\author{Tomoya Wakayama$^\dagger$ \and Sudipto Banerjee$^{\ddagger}$}
\address{$^\dagger$The University of Tokyo / $^\ddagger$ University of California, Los Angeles}

\date{\today}

\begin{document}

\begin{abstract}
 Rapid developments in streaming data technologies have enabled real-time monitoring of human activity that can deliver high-resolution data on health variables over trajectories or paths carved out by subjects as they conduct their daily physical activities. Wearable devices, such as wrist-worn sensors that monitor gross motor activity, have become prevalent and have kindled the emerging field of ``spatial energetics'' in environmental health sciences. We devise a Bayesian inferential framework for analyzing such data while accounting for information available on specific spatial coordinates comprising a trajectory or path using a Global Positioning System (GPS) device embedded within the wearable device. We offer full probabilistic inference with uncertainty quantification using spatial-temporal process models adapted for data generated from ``actigraph'' units as the subject traverses a path or trajectory in their daily routine. Anticipating the need for fast inference for mobile health data, we pursue exact inference using conjugate Bayesian models and employ predictive stacking to assimilate inference across these individual models. This circumvents issues with iterative estimation algorithms such as Markov chain Monte Carlo. We devise Bayesian predictive stacking in this context for models that treat time as discrete epochs and that treat time as continuous. We illustrate our methods with simulation experiments and analysis of data from the Physical Activity through Sustainable Transport Approaches (PASTA-LA) study conducted by the Fielding School of Public Health at the University of California, Los Angeles.
\end{abstract}

\maketitle

\section{Introduction}\label{sec:intro}
Spatial energetics is a rapidly emerging area in biomedical and health sciences that aims to examine how environmental characteristics, space, and time are linked to activity-related health behaviors \citep{JAMES2016792,Burgi2016}. Examples include, but are not limited to, using data from wearable devices as biomarkers and risk factors in studying adverse health outcomes for respiratory health \citep[][]{Kim2024}. Inferential objectives for spatial energetics comprise two exercises: (i) estimate measured health variables, typically related to metabolic activities, over paths or trajectories traversed by subjects while conducting their daily physical activities; and (ii) predict health variables for a subject at arbitrary trajectories. Spatial-temporal process models seem a natural choice as they use space-time coordinates from Global Positioning Systems (GPS) embedded in actigraph units.

Some salient features of spatial energetics require consideration. Unlike in some clinical studies associated with mobile health data, where only the temporal nature of streaming data is of inferential interest, here inferential interest centers around estimation and prediction of metabolic measurements over arbitrary spatial trajectories or paths. This differs from customary geostatistics and spatial-temporal data analysis \citep[see, e.g.,][and references therein]{banerjee2014hierarchical, cresswikle2015statistics} where statistical inference proceeds from spatial-temporal processes $\{w(s,t) : s\in \mS,\ t\in\mT\}$, where $\mS \subset \mathbb{R}^d$ with $d=2$ or $3$ and $\mT \subset \mathbb{R}^{+}\cup\{0\}$. For mobile health applications, the spatial domain is typically an arbitrary string of spatial coordinates defining the path or trajectory traversed by the subject. This trajectory is completely arbitrary and need not enjoy mathematically attractive features as are available for Riemannian manifolds to carry out inference \citep{liManifoldsGP2023}. Importantly, treated as continuously evolving over time, the spatial coordinates are functions of time.

The emerging literature on actigraph data has largely focused on temporal modeling \citep{crouter2006estimating,chang2022empirical, luo2023streaned, banker2023accelerometer} and have excluded spatial attributes of trajectories. Spatial extensions seek stochastic processes using either dynamic models \citep[][]{stroud2001dynamic, pena2004forecasting, gamerman2008spatial} or continuous processes using covariance kernels. Gaussian processes are conspicuous, but have largely focused on Euclidean domains. Actigraph data, however, arise from arbitrary trajectories that do not offer covariance kernels with exploitable geometric properties. While inference on non-Euclidean domains is possible \citep[][]{hoef2006spatial, ver2010moving, santos2022bayesian}, they do not consider spatial coordinates as functions over time. This also preclude predicting metabolic variables at arbitrary space-time coordinates, which is central to our aims. 

We proffer our specific contributions here. While related, our current developments differ from \cite{hooten2017animal}, who modeled spatial-temporal animal movement, and \cite{jurekMobilityFlightPause}, who analyzed movement using flight-pause models. Such models seek inference on the evolving path itself, whereas we analyze data that are collected at high resolutions on subjects moving along trajectories. Unlike studies over fixed geographic structures, where the domain of interest is a fixed graph and one does not pursue inference beyond a fixed structure, we seek predictive inference on metabolic activity would subjects adopt posited new trajectories for their physical activity. We devise Bayesian inference, including predictions over trajectories, with fully model-based uncertainty quantification for mobile data. 
Rather than opting for excessively complex non-linear models and potentially cumbersome Markov chain Monte Carlo (MCMC) (with which we compare) or other iterative algorithms such as Integrated Nested Laplace Approximations or Variational Bayes \citep[see, e.g.,][]{Robert2004, rue2009approximate, gelman2013bayesian,pml2Book} for fitting them, we extend predictive stacking \citep{yao2018using,bhatt2017improved,zhang2023exact} for fast Bayesian inference on actigraph data from wearables. 

Here, we overcome two specific challenges: (i) we avoid degenerate stochastic processes that would normally result from a subject visiting the same location multiple times (typical in mobile health data from walking, running); and (ii) accommodating spatial-temporal dependence accounting for locations being functions of time. Notably, we stack over analytically tractable posterior distributions to achieve fast and robust inference, which also renders our learning framework suitable for artificially intelligent ``recommender'' systems based upon an individual's specific health attributes.

Section~\ref{sec:review} discusses Bayesian hierarchical models that treat space as continuous and time as discrete. In particular, we show how to adapt Bayesian dynamic linear models \citep{west1997bayesian, prado2021time, stroud2001dynamic} to actigraph data. Sections~\ref{sec:con-con} develops a spatial-temporal process model \citep{cressie1999classes, stein2005space, gneiting2002nonseparable} treating both space and time as continuous. Section~\ref{section:stacking} develops predictive stacking by averaging over sets of conjugate Bayesian models with accessible posterior distributions. Section~\ref{sec: theory} collects some results on distribution theory and offers theoretical insights. Sections~\ref{section:simulation}~and~\ref{section:application} present simulation experiments and an illustrative application, respectively. Section~\ref{section:discuss} concludes the article with a discussion and pointers to future research.

\section{Continuous space and discrete time models}\label{sec:review}
Spatial-temporal models are classified based on continuous or discrete processes for space and time. Bayesian dynamic linear models \citep[DLMs,][]{west1997bayesian,prado2021time} analyze temporal data using discrete time steps. Space is modeled as a continuous random field evolving over time. We first show how such models can be adapted to actigraph data after which we consider spatial-temporal covariance kernels for actigraph data.  

\subsection{Spatial-temporal Bayesian DLMs}\label{sec: bayesian_dlm}
Actigraph data, while by nature stream in a continuum, are often recorded over a discrete set of epochs. Each epoch consists of a time-interval that can range from a few seconds to hours, or even days, depending upon the application. Let $\mT = \{1,2\ldots,T\}$ be a finite set of labels for epochs and $y_t$ be an $n_t\times 1$ vector consisting of measurements recorded by the actigraph at time $t$. A fairly flexible process-based model posits that $y_t = X_t\beta_t + z_t + \eta_t$ for each $t\in \mT$, where $X_t$ is an $n_t\times p$ matrix of explanatory variables, $\beta_t$ is the corresponding $p\times 1$ vector of slopes that depend on $t$ and $z_t$ is $n_t\times 1$ consisting of random effects accounting for other extraneous effects at time $t$. We construct the Bayesian DLM as  
\begin{equation}\label{model:dlm}
    \begin{split}
        {y}_t &= F_t\theta_t + {\eta}_{t},\quad {\eta}_{t} \overset{\mathrm{ind}}{\sim} N({0},\sigma^2 V_t);\quad \theta_t = G_t\theta_{t-1} + \eta_{\theta, t},\quad \eta_{\theta,t} \overset{\mathrm{ind}}{\sim} N(0, \sigma^2 S_t)\;,
    \end{split}
\end{equation}
where $F_t = \begin{psmallmatrix} X_t & I_n \end{psmallmatrix}$ is $n_t\times (p+n_t)$, and $\theta_t = \begin{psmallmatrix} \beta_t^{\T} & z_t^{\T}\end{psmallmatrix}^{\T}$ is $(p+n_t)\times 1$. We specify the prior distributions $\sigma^2\sim IG(n_{\sigma}/2, n_{\sigma}s_{\sigma}/2)$ and ${\theta}_0 \mid \sigma^2 \sim N({m}_{0}, \sigma^2S_0)$ so that the joint distribution is from the Normal-IG family. The quantities $G_t$, $n_{\sigma}$, $s_{\sigma}$, ${m}_{0}$ and $S_{0}$ are constants, while $\{{\theta}_t,\sigma,V_t,S_t\}$ are unknown parameters.

Two adaptations are relevant for spatial energetics. First, $y_t$ consists of $n_t$ measurements recorded at epoch $t$ independently over a group of subjects. The collected measurements $\{y_t : t\in \mT\}$ constitute an actigraph time-sheet, where each epoch provides values for the elements of $X_t$. Our available data, therefore, is $\{(y_t, X_t) : t \in \mT\}$. While each epoch implicitly contains information on the spatial locations for the subjects' measurements, the inferential goals for these population-level studies do not entail spatial attributes and, instead, are concerned with inferring about relationships between metabolic measurements (representing levels of physical activity) and environmental variables (representing green spaces, climate and weather, local topography, nature of activity being performed by the subject, etc.). We reasonably assume that $V_t$ is diagonal since measurements on subjects are independent and shared features across subjects are accounted for in explanatory variables and random effects in $F_t$. The covariance matrix for the elements of $\theta_t$, $S_t$, is assumed to be diagonal if latent associations are adequately accounted for by $G_{t}$. Alternative models could specify $S_t$ from design considerations or be modeled using an appropriate prior distribution \citep[see, e.g.,][]{west1997bayesian, prado2021time}. 

The second adaption of \eqref{model:dlm} applies to actigraph data on a single subject. Now $y_t(s)$ represents the rendered value of a variable of interest observed on a given subject at epoch $t\in \mathcal{T}$ and $s$ is the spatial coordinates of the subject at that epoch. Our process model specifies $y_t(s) = {x}_t(s)^{\T} {\beta}_t + z_t(s) + \eta_{t}(s)$, where ${x}_t(s)$ is a $p\times 1$ vector consisting of $p$ explanatory variables, ${\beta}_t$ is the corresponding $p\times 1$ vector of time-varying slopes, $z_t(s)$ is a zero-centered stochastic process, and $\eta_{t}(s) \overset{\mathrm{i.i.d.}}{\sim} N(0,\sigma^2)$. Therefore, ${x}_t^{\T}(s){\beta}_t$ represents the time-varying trend while $z_t(s)$ models temporal evolution with spatial dynamics. Let $\chi = \{s_i \mid i=1,\ldots,n \} \subset \mathbb{R}^2$ be a finite set of distinct spatial locations, where $y_t(s)$ has been measured. Then, $y_t$ and $z_t$ are $n\times 1$ vectors with elements $y_t(s_i)$ and $z_t(s_i)$, respectively, $X_t$ is $n\times p$ with rows $x_t(s_i)^{\T}$, and $\displaystyle S_t = \begin{psmallmatrix} \delta_{\beta}^2I_p &  O \\ O &  \delta_{z}^2 K_{\phi}(\chi)  \end{psmallmatrix}$ with $K_{\phi}(\chi) = (K_{\phi}(s_i,s_j))_{s_i,s_j\in \chi}$ is $n\times n$ with elements $K_{\phi}(s_i, s_j)$ evaluated using a spatial correlation kernel with parameters $\phi$, and $\delta_z^2$ and $\delta^2_{\beta}$ act as relative variance scales with respect to $\sigma^2$. 

The model in \eqref{model:dlm} delivers inference through MCMC or forward filtering-backward sampling \citep{carter1994gibbs,fruhwirth1994data}. However, with high-dimensional parameters, these methods require substantial computational resources that render their practical application to be challenging or even infeasible. Consequently, we devise Bayesian predictive stacking that exploits analytically tractable posterior distributions.

\subsection{Dynamic trajectory model}\label{sec:con-dis}
A salient feature of actigraph data encoded with spatial positioning is that the locations themselves are functions of time. \cite{chang2022empirical} and \cite{alaimo2023bayesian} have, therefore, modeled mobile data as processes primarily evolving over time with the latter accounting for spatial variation using splines. Models that introduce spatial-temporal associations will need to construct processes over the collection of points $\{(\gamma(t),t) : t \in \mathcal{T}\}$, where $\mathcal{T} \subset \mathbb{R}^{+}\cup \{0\}$ and $\gamma(t) : \mT \rightarrow \mathbb{R}^2$. Consider a single subject who has worn an actigraph unit that has recorded measurements at each time point $t$. Typically, such data are received as averages over discrete epochs so we define our temporal domain $\mathcal{T} = \{1,2,\ldots,T\}$ as a finite set of epochs spanning the entire duration of data collection from the device. Designing the data collection from a wearable device is, by itself, a meticulous exercise that needs to account for various extraneous factors including, but not limited to, the technologies of accelerometers as well as the specific clinical study under consideration \citep[see, e.g.,][for one such case study]{alaimo2023bayesian}. Here, we will concern ourselves with Bayesian inference.

Let $y_{t}(\gamma(t))$ denote the measurement on a given subject at time $t$ and location $\gamma(t)$ for each $t=1,\ldots,T$, $\gamma(t)\in \mathbb{R}^2$, and consider the following regression model:
\begin{equation}\label{model:condis}
    y_t( \gamma(t) ) = {x}_t(\gamma(t))^{\T} {\beta}_t + w_t(\gamma(t)) + \eta_{1t}(\gamma(t)), \quad \eta_{1t}(\gamma(t)) \overset{\mathrm{i.i.d.}}{\sim} N(0,\sigma^2)\;,
\end{equation}
where ${x}_t(\gamma(t))$ is a $p$-dimensional explanatory variable, ${\beta}_t$ is a $p$-dimensional time-varying regression coefficient, $w_t(\gamma(t))$ and $\eta_{1t}$ are zero-centered spatial and white noise processes, respectively, at time $t$, and $\sigma^2$ is the variance of the white noise process. The domain of the process $w_t(\cdot)$ is not Euclidean, but an arbitrary trajectory defined by a string of coordinates mapped by $\gamma(t)$. Furthermore, a subject may revisit the same location multiple times, yielding multiple values at $\gamma(t)$, making the spatial covariance matrix singular and, therefore, precluding legitimate probabilistic inference. We obviate this as follows.

Let $\Gamma = \{\gamma(1),\ldots,\gamma(T) \}$ be a complete enumeration of spatial locations visited by the subject, of which $n \leq T$ are distinct spatial locations, and let $\tilde{\Gamma} = \{\tilde{\gamma}_1,\ldots,\tilde{\gamma}_n\} \subseteq \Gamma$ be the subset of distinct locations. Let $z_t(\gamma(t))$ be a latent process and map $w_t(\gamma(t)) = \sum_{j=1}^n b(\gamma(t), \tilde{\gamma}_j) z_t(\tilde{\gamma}_j)$, where $b(\gamma(t),\tilde{\gamma}_j) : \Gamma \times \tilde{\Gamma} \to \{0,1\}$ such that $b(\gamma(t),\tilde{\gamma}_j) = 1$ if $\gamma(t) = \tilde{\gamma}_j$ and $0$ otherwise. This yields $w = Bz$, where $w = (w_1(\gamma(1)),\ldots,w_T(\gamma(T)))^{\T}$ is $T\times 1$, ${z} = ({z}_1^{\T},\ldots,{z}_T^{\T} )^{\T}$ is $nT\times 1$ with each ${z}_t =({z}_t(\tilde{\gamma}_1),\ldots,{z}_t(\tilde{\gamma}_n) )^{\T}$ being $n\times 1$, and $B$ is $T\times nT$, whose $(t,n(t-1)+j)$th entry is $b(\gamma(t),\tilde{\gamma}_j)$. This maps the latent spatial effects, $z$, to those at the observed points, $w$. If ${\beta} = ( {\beta}_1^{\T},\ldots, {\beta}_T^{\T} )^{\T}$, then temporal autoregressive models for $\beta$ and $z$ are specified as ${\beta} = (A \otimes I_p) {\beta} + \eta_{2t}$ and $z = (A\otimes I_n)z + \eta_{3t}$, respectively, where $\eta_{2t} \overset{\mathrm{i.i.d.}}{\sim} N({0},\sigma^2\delta_{\beta}^2 (I_T \otimes W_p))$ and $\eta_{3t} \sim N(0,\sigma^2\delta_z^2( I_T\otimes K_{\phi}))$, $A = 
\begin{psmallmatrix} {0}^{\T} &  0 \\ 
I_{T-1} & 0 \\
\end{psmallmatrix} \in \mathbb{R}^{T\times T}$, $W_p\in \mathbb{R}^{p\times p}$ is a correlation matrix among the coefficients and $\otimes$ denotes the Kronecker product.  We construct the following augmented model,
\begin{equation} \label{model:lin_sys_dis}
    Y = X\theta + \eta, \quad {\eta}\sim N({0}, \sigma^2S)\;,
\end{equation}
where $Y = \begin{psmallmatrix} y \\ 0 \\ 0\end{psmallmatrix}$ is $(1+p+n)T\times 1$, $X = \begin{psmallmatrix} 
\oplus_{t=1}^T{x}_t(\gamma(t))^{\T} & B \\ I_{pT}  &  O  \\ O &  I_{nT} \end{psmallmatrix}$ is $(1+p+n)T\times (p+n)T$, $\theta = \begin{psmallmatrix} \beta \\ z \end{psmallmatrix}$ and $\oplus$ is the block-diagonal matrix operator so $\oplus_{t=1}^T({x}_t(\gamma(t))^{\T})$ is $T\times pT$ block diagonal with ${x}_t(\gamma(t))$'s along the diagonal. Furthermore, $S = I_T \oplus \{ \delta_{\beta}^2 (I_{pT} - A\otimes I_p )^{-1} (I_T\otimes W_p)(I_{pT} - A^{\T}\otimes I_p )^{-1}\} \oplus \{\delta_{z}^2 (I_{nT}-A\otimes I_n)^{-1} (I_T\otimes K_{\phi}) (I_{nT}-A^{\T}\otimes I_n)^{-1}\}$. We introduce the prior distribution $\sigma^2\sim IG(a_{\sigma}, b_{\sigma})$, where $a_{\sigma}$ and $b_{\sigma}$ are fixed rate and scale parameters for the inverse-Gamma distribution. We assume that $W_p$ is assumed to be known and taken as the identity matrix in the later experiments. The prior distribution for $\theta$ is absorbed into \eqref{model:lin_sys_dis} and fixing the values of $\{\delta_{\beta}, \delta_z, \phi\}$ yields the familiar Normal-IG conjugate posterior distribution for $\{\theta,\sigma^2\}$, which is utilized in predictive stacking.  

\section{Continuous space-time trajectory model}\label{sec:con-con}

We can treat actigraph data as a partial realization of a continuous spatial-temporal process. We write $y(\gamma(t),t)$ to be the measurement that can exist, conceptually, at any time $t \in \mathbb{R}^{+}$ and spatial location $\gamma(t) \in \mathbb{R}^2$ at $t$. We construct a regression model.
\begin{equation}\label{model:reg_con}
    y(\gamma(t),t) = {x}(\gamma(t),t)^{\T} {\beta}(t) + z(\gamma(t),t) + \eta_1(t), \quad \eta_1(t) \overset{\mathrm{i.i.d.}}{\sim} N(0,\sigma^2), 
\end{equation}
over a finite set of $n$ space-time coordinates $(\gamma(t),t)$, where ${x}(\gamma(t),t)$ is $p\times 1$ consisting of explanatory variables, ${\beta}(t)$ is the corresponding $p\times 1$ vector of slopes, $z(\gamma(t),t)$ is a zero-centered spatial-temporal process and $\eta_1(t)$ is the measurement error distributed as a zero-centered Gaussian distribution with variance $\sigma^2$. 

For the spatial-temporal process, we consider the following structure:
\begin{equation}\label{model:st_process}
    z(\gamma(t),t)  \overset{\mathrm{ind}}{\sim} GP(0, \sigma^2\delta_{z}^2K_{\phi} ),
\end{equation}
where $K_{\phi}$ is a valid spatial-temporal correlation kernel \citep{cressie1999classes, gneiting2002nonseparable, stein2005space}. Among rich classes of functions, we work with a non-separable function,
\begin{equation}\label{eq: stkernel}
    K_{\phi}((\gamma(t),t), (\gamma(t'),t') ) = \frac{1}{ \phi_1 |t-t'|^2+1 } \exp\left(  -\frac{\phi_2 \| \gamma(t)-\gamma(t') \| }{\sqrt{1+\phi_1 |t-t'|^2} }   \right),\quad \phi_1,\phi_2\in\mathbb{R}^{+}\;,
\end{equation}
which is a special case of a more general class in \cite{gneiting2002nonseparable}. For each regression coefficient, we consider the following process:
\begin{equation}\label{model:t_process}
        {\beta}_j(t) \overset{\mathrm{ind}}{\sim} GP(0, \sigma^2\delta_{\beta}^2C_{\xi} ), \quad \mathrm{for} \quad j=1,\ldots,p, 
\end{equation}
with temporal correlation kernel $C_{\xi}(t,t') = \exp(-\xi^2|t-t'|^2)$. We note that $K_{\phi}$ is a positive-definite kernel ensuring that the stochastic process~\eqref{model:st_process} is well-defined. In particular, it is crucial to note that even when $\gamma(t)=\gamma(t')$, i.e., the subject returns to the same location at a later time, the function $(1+\phi_1 |t-t'|^2)^{-1}$ is positive-definite. This is seen from
\begin{equation*}
    ( 1+\phi_1 |t-t'|^2)^{-1} =\int_{0}^{\infty} e^{- u ( 1+\phi_1|t-t'|^2) } du.
\end{equation*}
Since $|t-t'|^2$ and $1$ are conditionally negative-definite, the integrand is positive-definite from Schoenberg's theorem \citep[e.g.,][]{phillips2019extension}, which implies that $(1+ \phi_1 |t-t'|^2)^{-1}$ is also positive-definite. If we observe $y$ at space-time points $(\Gamma,\mT) = \{ (\gamma(t_i), t_i) \mid i=1,\ldots,n \} \subset \mathbb{R}^2\times\mathbb{R}^{+}$, then (\ref{model:reg_con}),(\ref{model:st_process}) and (\ref{model:t_process}) for $n$ space-time points yields:
\begin{equation} \label{model:lin_sys_con}
    \underbrace{\begin{pmatrix}{y} \\ {0} \\ {0} \end{pmatrix}}_{Y} = 
    \underbrace{\begin{pmatrix} 
    [{x}_1\mid \cdots \mid {x}_p ] &  I_n \\
    [I_n \mid \cdots \mid I_n ] &  O  \\
        O &  I_n 
    \end{pmatrix}}_{X}
    \underbrace{\begin{pmatrix}{\beta}_1 \\ \vdots \\ {\beta}_p \\ {z}
   \end{pmatrix}}_{{\theta}} + {\eta},\quad {\eta}\sim N({0}, \sigma^2 S)
\end{equation}
where each ${x}_j\in\R^{n\times n}$ is diagonal with entries $x_j(\gamma(t_i),(t_i))$ for $i=1,\ldots,n$ and $j=1,\ldots,p$; $X$ is $3n \times (p+1)n$, ${\theta}$ is $(p+1)n\times 1$ consisting of the $pn\times 1$ vector of regression coefficients $\beta = (\beta_1^{\T},\ldots,\beta_p^{\T})^{\T}$ and the $n\times 1$ vector $z = (z(\gamma(t_1),t_1),\ldots,z(\gamma(t_n),t_n))^{\T}$, $K_{\phi}(\Gamma,\mT) = (K_{\phi}((\gamma(t_i),t_i), (\gamma(t_j),t_j) ))$ and $C_{\xi}(\mT) = (C_{\xi} (t_i,t_j))$ are both $n\times n$, where $i,j=1,\ldots,n$ and $\displaystyle S = I_n \oplus\left(I_p\otimes \delta_{\beta}^2 C_{\xi}(\mT)\right) \oplus \delta_{z}^2 K_{\phi}(\Gamma,\mT)$. We further assign the prior distribution $\sigma^2\sim IG(a_{\sigma}, b_{\sigma})$, where $a_{\sigma}$ and $b_{\sigma}$ are fixed rate and scale parameters for the inverse-Gamma distribution. As in \eqref{model:lin_sys_dis}, the prior distribution for $\theta$ is absorbed into \eqref{model:lin_sys_con} and the posterior distribution for $\{\theta,\sigma^2\}$ for any fixed set $\{\delta_{\beta}, \delta_z, \xi, \phi\}$ is in the Normal-IG family. We exploit these distributions to devise stacked inference for $\beta_j(t)$ and $z(\gamma(t),t)$. 

\section{Prediction via stacking}\label{section:stacking}

We exploit the analytical closed forms for the posterior distributions and carry out inference using Bayesian stacking~\citep{le2017bayes,zhang2023exact}. In both discrete-time and continuous-time trajectory models, we are able to obtain closed-form posterior distributions if we fix some hyperparameters in the spatial-temporal covariance structures. We consider a collection of $G$ models, $\{\mathcal{M}_1,\ldots,\mathcal{M}_G\}$, where each $\mathcal{M}_g$ is specified by fixing a set of parameters so that the posterior distribution, given dataset $\mD$, $p_g(\cdot\given \mD )$, is in closed form. 

Specifically for the discrete time model in \eqref{model:lin_sys_dis}, the posterior distribution for $\mM_g$ is
\begin{equation}\label{eq: posterior_density_dlm}
    p_g\left(\theta,\sigma^2 \given \mD, \delta^2_g, \phi_g\right) = IG\left(\sigma^2 \given a_{\sigma}^*, b_{\sigma}^*\right)\times N\left(\theta \given m, \sigma^2 \Sigma\right)\;, 
\end{equation}
where $\delta^2_g = \{\delta^2_{\beta,g} \delta^2_{z,g}\}$ and $\phi_g$ are the fixed values of these parameters for $\mathcal{M}_g$. The posterior predictive distribution $p_g(y_{T+1} ( \gamma(T+1) )\mid \mD)$ and one for the latent process $p_{g}(z_{T+1} \given \mD)$ are both $t$-distributions with degrees of freedom, mean and scale supplied in Section~\ref{sec: con-dis-theory}. Similarly, for the continuous time model in~\eqref{model:lin_sys_con}, the posterior distribution for $\mM_g$ is
\begin{equation}\label{eq: posterior_density_con}
    p_g\left(\theta,\sigma^2 \given \mD, \delta^2_g, \phi_g, \xi_g \right) = IG\left(\sigma^2 \given a_{\sigma}^*, b_{\sigma}^*\right)\times N\left(\theta \given m, \sigma^2 \Sigma\right)\;, 
\end{equation}
where $\delta^2_g = \{\delta^2_{\beta,g}, \delta^2_{z,g}\}$, $\phi_g = \{\phi_{1,g} ,\phi_{2,g}\}$ and $\xi_g$ are the fixed values of these parameters for $\mathcal{M}_g$. The posterior predictive distributions of $y( \gamma(t_0), t_0 )$ and $z(\gamma(t_0),t_0)$ at the new time point $t_0$ are calculated from t-distributions; see Section~\ref{sec: con-con-theory} for details.

\subsection{Predictive stacking of means}\label{sec: stacking_means}
We divide the dataset $\mD$ into training data $\mD_{train}$ and validation data $\mD_{valid}$. We denote the predictive random variable by $\tilde{y}_t(\gamma(t))$ and $\tilde{y}(\gamma(t),t)$ at any given $t$ for the discrete and continuous time settings, respectively. We calculate the posterior predictive mean $\mathbb{E}_g [\tilde{y}_t(\gamma(t)) \mid \mD_{train} ]$ for each time point $t$ in the validation dataset in the discrete model in~\eqref{model:lin_sys_dis}, where $\mathbb{E}_g[\cdot]$ is the expectation with respect to the predictive density $p_g(\tilde{y}_t(\gamma(t)) \mid \mD_{train})$. Specifically, if $\tilde{y}_t$ is the vector with elements $\tilde{y}_t(\gamma(t))$ and $\tilde{X}_t$ is the matrix with rows $x_t(\gamma(t))^{\T}$ for each $\gamma(t)\in \mD_{valid}$, then
\begin{equation}\label{eq: predictive_mean_discrete}
    \mathbb{E}_g [\tilde{y}_t \mid \mD_{train}] = \tilde{X}_{t}\hat{\beta}_{t-1} + C_{z0}^{\T}{C}_z^{-1} \hat{z}_{t-1},
\end{equation}
where $\hat{\beta}_{t-1}$, $C_{z0}$, $C_{z}$ and $\hat{z}_{t-1}$ are described in Proposition~\ref{prop:pred-dis} of Section~\ref{sec: con-dis-theory}. We write $ \mathbb{E}_g [\tilde{y}_t(\gamma(t)) \mid \mD_{train}]$ to denote the element corresponding to $\gamma(t) \in \mD_{valid}$ in \eqref{eq: predictive_mean_discrete}. Likewise, in the continuous time model in \eqref{model:lin_sys_con}, the posterior predictive mean is $\mathbb{E}_g [\tilde{y}(\gamma(t), t) \mid \mD_{train}]$ for each $t$ in the validation set with $\mathbb{E}_g[\cdot]$ defined with respect to $p_g(\tilde{y}(\gamma(t),t) \mid \mD_{train})$, which is available in closed form as 
\begin{equation}\label{eq: predictive_mean_continuous}
    \mathbb{E}_g [\tilde{y}(\gamma(t),t) \mid \mD_{train}] = \sum_{j=1}^p{x}_{j,0} C_{\beta0}^{\T}{C}_{\beta}^{-1}\hat{\beta}_{j} + C_{z0}^{\T}{C}_z^{-1}\hat{z},
\end{equation}
where $\hat{\beta}_{j}$, $C_{\beta 0}$, $C_{\beta}$ and $\hat{z}$ are defined in Proposition~\ref{prop:pred-con} of Section~\ref{sec: con-con-theory}.

\begin{algorithm}[t]\caption{: Predictive stacking of means \label{algo:point} }
    \begin{algorithmic}[1]
    \State \textbf{Input:} $\mM_1, \ldots, \mM_G$, and dataset $\mD$. 
    \Statex \hspace{-1em}\textbf{Step 1: Calculate Predictions for Each Model} 
    \State Split the dataset into $\{\mD_{train}, \mD_{valid} \}$.
        \NoDo
        \For{$g = 1,2,\ldots,G$}
            \State Compute $\mathbb{E}_g [\tilde{y}_t \given \mD_{train}]$ or $\mathbb{E}_g [\tilde{y}(\gamma(t),t) \given \mD_{train}]$ from \eqref{eq: predictive_mean_discrete}~or~\eqref{eq: predictive_mean_continuous} for $t$ in $\mD_{valid}$. 
        \EndFor
    \Statex \hspace{-1em}\textbf{Step 2: Determine Weights}
    \State Determine the weights $\{\hat{a}_g\}_{g=1}^G$ to minimize the MSE, $\sum_{y\in \mD_{valid}} \left( y - \sum_{g=1}^G a_g \mathbb{E}_g \left[ \tilde{y} \mid \mD_{train}\right] \right)^2$, by the quadratic programming.
    \Statex \hspace{-1em}\textbf{Step 3: Compute Final Prediction}
    \State Compute the final prediction $\sum_{g=1}^G \hat{a}_g \mathbb{E}_g [\tilde{y} \given \mD_{train}]$.
    \end{algorithmic}
\end{algorithm}

Predictive stacking calculates the optimal weights to be used for model averaging. For stacking of means, we predict using $\sum_{g=1}^G a_g \mathbb{E}_g [\cdot \mid \mD_{train}]$, where $a_1,\ldots,a_G$ are the weights for model averaging selected from $\Delta = \{ \{a_g\}_{g=1}^G \mid \sum_{g=1}^Ga_g=1, a_g\ge 0 \}$, which yields a simplex of predictions on candidate models $\{\mM_g\}_{g=1}^G$. We determine the optimal weights $\{\hat{a}_g\}_{g=1}^G$ using the validation dataset as $\argmin_{a_1,\ldots,a_G} \sum_{{y}\in \mD_{valid}} \left(y - \sum_{g=1}^G a_g \mathbb{E}_g \left[\tilde{y} \mid \mD_{train}\right] \right)^2$, where the sum is over all values of the outcome in the validation dataset. This is a quadratic programming problem~\citep{Goldfarb1983, Boyd_Vandenberghe_2004}. The obtained weights are subsequently used to predict the outcomes using the stacked mean $\sum_{g=1}^G \hat{a}_g \mathbb{E}_g [\tilde{y} \given \mD]$, where $\tilde{y}$ corresponds to a specified $t_0$ in the sequence of time-points in the discrete-time case, while in continuous time $\tilde{y}$ represents the value $y(\gamma(t_0),t_0)$ for an arbitrary $t_0\in \mathbb{R}^{+}$. Algorithm~\ref{algo:point} summarizes these steps.

\subsection{Predictive stacking of distributions}\label{sec: stacking_dist}
Spatial energetics pursues predictive inference on trajectories entailing interpolation of the latent process at arbitrary points. This drives predictions for the outcomes. We achieve this by stacking the posterior predictive distributions for each $\mM_g$ using $\mD_{train}$, which is a multivariate t-distribution. Similar to the stacking of means, we consider the weights. Stacking maximizes the score function $S \left( \sum_{g=1}^G a_g p_g(\cdot \mid \mD_{train}) , q_t(\cdot\mid \mD_{train})\right)$ to obtain the weights, where $q_t$ is a posterior distribution with true underlying parameters. If we employ a logarithmic score, corresponding to the Kullback--Leibler divergence \citep{yao2018using, zhang2023exact}, the weights are obtained as $\argmax_{a_1,\ldots,a_G}\sum_{y\in \mD_{valid} } \log \left( \sum_{g=1}^G a_g p_g( y \mid \mD_{train}) \right)$, the logarithm acts on the pseudo-posterior probabilities, given the weighted models. Thus, we define the distributional prediction by maximizing the pseudo-log joint posterior probability. Note that $p_g(\cdot \mid \mD_{train})$ is a multivariate t-distribution, and hence, evaluating the posterior probability of the validation data is readily available. This optimization problem can be solved as a linearly constrained problem via an adaptive barrier algorithm~\citep{Lange2010}. Algorithm~\ref{algo:dist} presents the steps involved in stacking of predictive densities.

\begin{algorithm}[t]\caption{: Predictive stacking of distributions \label{algo:dist} }
    \begin{algorithmic}[1]
    \State \textbf{Input:} $\mM_1, \ldots, \mM_G$ and dataset $\mD$.
    \Statex \hspace{-1em} \textbf{Step 1: Calculate Predictive Distributions for Each Model on Each Fold}
    \State Split the dataset into $\{\mD_{train}, \mD_{valid} \}$.
        \NoDo
        \For{$g = 1,2,\ldots,G$}
            \State $p_g(\cdot\given \mD_{train}) \gets$ the posterior predictive distribution, by Proposition~\ref{prop:pred-dis} or~\ref{prop:pred-con}.
        \EndFor
    \Statex \hspace{-1em} \textbf{Step 2: Determine Weights}
    \State Determine the weights $\{\hat{a}_g\}_{g=1}^G$ to maximize $\sum_{y\in \mD_{valid} } \log \left( \sum_{g=1}^G a_g p_g(y\mid \mD_{train}) \right)$ through adaptive barrier method.
    \Statex \hspace{-1em} \textbf{Step 3: Compute Final Prediction}
    \State Compute the final predictive distribution $\sum_{g=1}^G \hat{a}_g p_g(\tilde{y}\given \mD)$.
    \end{algorithmic}
\end{algorithm}

The $G$ candidate models in Algorithms~\ref{algo:point}~and~\ref{algo:dist} are computed in parallel and computation of the weights is negligibly small compared to that of the posterior distribution. Further, optimization is supported by many packages in various statistical programming languages. In particular, for the subsequent illustrations, we employed the \citep[``stats'' and ``quadprg'' packages][]{stats,quadprog} in the \texttt{R} statistical computing environment. By contrast, MCMC demands a substantial number of iterations for convergence, and the issue is exacerbated with the larger values of $n$ and $T$.

\subsection{Reconstructing stacked posterior distributions}\label{sec: stacked_posterior}
Once the stacking weights are calculated from either Algorithm~\ref{algo:point}~or~\ref{algo:dist}, we use them to reconstruct the posterior distributions of interest as
\begin{equation}\label{eq: poterior density_stacked}
    p(\cdot \given \mD_{}) = \sum_{g=1}^G \hat{a}_g p_g(\cdot \given \mD_{})\;,
\end{equation}
where $\cdot$ represents the inferential quantity of interest. This embodies stacked inference for $\{\theta,\sigma^2\}$ in \eqref{model:lin_sys_dis}~and~\eqref{model:lin_sys_con}, predictions of the outcome $y_t(\gamma(t))$ at a future time point on a given trajectory or $y(\gamma(t),t)$ for any arbitrary time point on a trajectory, and inference for the latent process $z_t(\gamma(t))$ or $z(\gamma(t),t)$ in the discrete and continuous time settings, respectively. 

\section{Theoretical properties}\label{sec: theory}
\subsection{Distribution theory for Bayesian DLMs}\label{sec: bayesian_dlm_theory}
Fixing $\delta_{\beta},\delta_z,\phi$ yields closed-form posterior distributions for ${\theta}_t$ and $\sigma$, which aids stacking. We collect recursion equations used in calculating the posterior distribution for \eqref{model:dlm}.

\begin{proposition}\label{prop:post} Consider the model in \eqref{model:dlm}. Let $\mD_{t}$ denote all the data obtained until time $t$. Assume $\sigma^2\mid \mD_{\chi,t-1} \sim IG(n_{t-1}/2, n_{t-1}s_{t-1}/2)$ and ${\theta}_{t-1} \mid \sigma^2, \mD_{\chi,t-1} \sim N({m}_{t-1},\sigma^2W_{t-1})$. If $\delta_{\beta},\delta_z,\phi,G_t$ are fixed, the following distributional results hold, for $t \geq 1$,
\begin{align*}
    \sigma^2 \mid \mD_{t} &\sim IG\left( \frac{n_t}{2},\frac{n_t s_t}{2}  \right)\;;\;\quad {\theta}_{t} \mid \sigma^2, \mD_{t} \sim N({m}_{t},\sigma^2 W_{t}),
\end{align*}
where $n_t = n_{t-1}+n$, $n_ts_t = n_{t-1}s_{t-1} + ({y_t} - {f}_t )Q_t^{-1}({y_t} - {f}_t ), {f}_t = F_t G_t {m}_{t-1}, Q_t = F_t R_t F_t^{\T} + I_n, {m}_{t} = G_t{m}_{t-1} + R_t F_t^{\T} Q_t^{-1} ({y}_t-{f}_t),R_t = G_tW_{t-1}G_t^{\T} + S$ and $W_t = R_t - R_tF_t^{\T}Q_t^{-1}F_tR_t$.
The marginal posterior distribution of ${\theta}_{t}$ is $t_{n_t}({m}_{t},s_tW_{t})$.
\end{proposition}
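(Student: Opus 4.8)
The plan is to treat this as the standard forward-filtering recursion for a Normal--Inverse-Gamma conjugate dynamic linear model: I would carry out an evolution (prediction) step followed by an updating (conditioning) step while propagating the common scale $\sigma^2$ throughout, and then finish by marginalizing $\sigma^2$ to obtain the $t$-distribution for $\theta_t$. The whole argument is inductive in $t$, with the stated inductive hypothesis supplying the time-$(t-1)$ Normal--Inverse-Gamma pair.

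First I would establish the one-step-ahead prior. Conditioning on $\sigma^2$, the inductive hypothesis $\theta_{t-1}\mid\sigma^2,\mD_{t-1}\sim N(m_{t-1},\sigma^2 W_{t-1})$ together with the linear state equation $\theta_t = G_t\theta_{t-1}+\eta_{\theta,t}$ and the independent innovation $\eta_{\theta,t}\sim N(0,\sigma^2 S)$ gives, by closure of Gaussians under affine maps and convolution, $\theta_t\mid\sigma^2,\mD_{t-1}\sim N(G_t m_{t-1},\sigma^2 R_t)$ with $R_t = G_t W_{t-1}G_t^{\T}+S$. Because the observation equation $y_t = F_t\theta_t+\eta_t$ is again linear with independent Gaussian error $\eta_t\sim N(0,\sigma^2 I_n)$, I can then assemble the joint law of $(\theta_t,y_t)$ given $\sigma^2,\mD_{t-1}$ as a single multivariate normal with means $(G_t m_{t-1}, f_t)$ where $f_t = F_t G_t m_{t-1}$, diagonal covariance blocks $\sigma^2 R_t$ and $\sigma^2 Q_t$ with $Q_t = F_t R_t F_t^{\T}+I_n$, and cross-covariance block $\sigma^2 R_t F_t^{\T}$.

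Next I would perform the update through two complementary reads of this joint normal. Taking the marginal gives the one-step forecast $y_t\mid\sigma^2,\mD_{t-1}\sim N(f_t,\sigma^2 Q_t)$, while applying the Gaussian conditioning (Schur-complement) formulas gives $\theta_t\mid\sigma^2,\mD_t\sim N(m_t,\sigma^2 W_t)$ with $m_t = G_t m_{t-1}+R_t F_t^{\T}Q_t^{-1}(y_t-f_t)$ and $W_t = R_t - R_t F_t^{\T}Q_t^{-1}F_t R_t$, exactly the stated forms. For the variance, the forecast density factors as $p(y_t\mid\sigma^2,\mD_{t-1})\propto(\sigma^2)^{-n/2}\exp\{-(y_t-f_t)^{\T}Q_t^{-1}(y_t-f_t)/(2\sigma^2)\}$; multiplying by the $IG(n_{t-1}/2,n_{t-1}s_{t-1}/2)$ prior and collecting powers of $\sigma^2$ yields $\sigma^2\mid\mD_t\sim IG(n_t/2,n_t s_t/2)$ with $n_t = n_{t-1}+n$ and $n_t s_t = n_{t-1}s_{t-1}+(y_t-f_t)^{\T}Q_t^{-1}(y_t-f_t)$. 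The crucial step enabling this is the density factorization $p(\theta_t,y_t\mid\sigma^2,\mD_{t-1}) = p(y_t\mid\sigma^2,\mD_{t-1})\,p(\theta_t\mid\sigma^2,\mD_t)$, so that the $\theta_t$-dependence integrates out and only the forecast quadratic form enters the $\sigma^2$ update. Marginalizing $\sigma^2$ from the Normal--Inverse-Gamma pair $N(m_t,\sigma^2 W_t)\times IG(n_t/2,n_t s_t/2)$ is then the textbook computation returning the multivariate $t_{n_t}(m_t,s_t W_t)$.

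The main obstacle I anticipate is the linear-algebra bookkeeping in the conditioning step: confirming that the Schur complement of the joint covariance collapses exactly to $R_t - R_t F_t^{\T}Q_t^{-1}F_t R_t$, and that the quadratic form driving the $\sigma^2$ update is precisely $(y_t-f_t)^{\T}Q_t^{-1}(y_t-f_t)$ with no residual $\theta_t$-dependence. The delicate feature is that the single scale $\sigma^2$ multiplies every covariance block, so I must verify that it factors out uniformly from the Gaussian conditioning, leaving the $\sigma^2$-free Kalman gain $R_t F_t^{\T}Q_t^{-1}$, while simultaneously being updated through the marginal forecast density; the factorization above is exactly what reconciles these two roles and closes the induction in the same Normal--Inverse-Gamma form.
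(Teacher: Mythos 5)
Your proof is correct and is exactly the standard conjugate forward-filtering argument for Normal--Inverse-Gamma dynamic linear models: evolution step, joint Gaussian of $(\theta_t, y_t)$, Schur-complement conditioning for $(m_t, W_t)$, the forecast-density update for $\sigma^2$, and marginalization to the $t_{n_t}(m_t, s_t W_t)$ law. The paper states this proposition without proof, relying on precisely this classical recursion from Bayesian DLM theory \citep{west1997bayesian, prado2021time}, so your argument supplies the intended (and only natural) derivation, including the key observation that $\sigma^2$ factors uniformly out of the Kalman gain while being updated only through the quadratic form $(y_t - f_t)^{\T} Q_t^{-1} (y_t - f_t)$.
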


Propositions \ref{prop:pred-s}~and~\ref{prop:pred-t} provide the spatial and temporal posterior predictive distributions.

\begin{proposition}\label{prop:pred-s}
Consider the setup for the model in \eqref{model:dlm} adapted for spatial data over $n$ locations $\chi = \{s_1,\ldots, s_n\}$. Let $\chi_0$ be a set of $n_0$ locations where we seek to predict $y_t(s)$ and $\tilde{X}_{0}$ be an $n_0\times p$ matrix of explanatory variables with rows $x_t^{\T}(s)$ for $s\in \chi_0$. If ${y}_{0}$ and ${z}_{0}$ denote the $n_0\times 1$ random variables corresponding to $y_t(s)$ and spatial effects $z_t(s)$ for all $s\in \chi_0$, then the posterior predictive distributions are
\begin{align*}
    {y}_{0} \mid {\theta}_{t}, {z}_{0} , \sigma^2, \mD_{t} &\sim N\left(\tilde{X}_{0} {\theta}_{t,(1:p)} + {z}_{0},\ \sigma^2 I_{n_0} +\sigma^2 \tilde{X}_{0} W_{t,(1:p,1:p)} \tilde{X}_{0}^{\T} \right),   \\
    {z}_{0} \mid {\theta}_{t}, \sigma^2, \mD_{t} &\sim N\left(C_0^{\T}{C}^{-1}{\theta}_{t,(p+1:p+n)},\ \sigma^2 (C_{00} - C_0^{\T}{C}^{-1}C_0) \right),
\end{align*}
where ${m}_{t,(1:p)}, {m}_{t,(p+1:p+n)}$ are the first $p$ elements and the remaining elements of ${m}_{t}$, $W_{t,(1:p,1:p)}$ is the top-left $p\times p$ square of $W_t$, $C = (\delta_z^2K_{\phi}(s,s'))_{s,s'\in \chi}$, $C_0=(\delta_z^2K_{\phi}(s,s_0))_{s\in\chi,s_0\in\chi_0}$ and $C_{00} = (\delta_z^2K_{\phi}(s_0,s'_0))_{s_0,s_0'\in \chi_0}$. Combined with Proposition~\ref{prop:post}, the marginal predictive distribution for ${y}_{t0}$ is $t_{n_t}( \tilde{X}_{0} {m}_{t,(1:p)} + C_0^{\T}{C}^{-1}{m}_{t,(p+1:p+n)}, s_t( I_{n_0} + \tilde{X}_{0} W_{t,(1:p,1:p)} \tilde{X}_{0}^{\T}+ C_{00} - C_0^{\T}{C}^{-1}C_0) )$.
\end{proposition}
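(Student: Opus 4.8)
The plan is to assemble the predictive law hierarchically from two conditionally Gaussian pieces and then integrate out $\sigma^2$ using the inverse-Gamma posterior of Proposition~\ref{prop:post}. Write $\beta_t = \theta_{t,(1:p)}$ and $z_t = \theta_{t,(p+1:p+n)}$, and recall from Proposition~\ref{prop:post} that $\theta_t \mid \sigma^2, \mD_t \sim N(m_t, \sigma^2 W_t)$ and $\sigma^2 \mid \mD_t \sim IG(n_t/2, n_t s_t/2)$. The two pieces are (i) the Gaussian kriging law of the unobserved spatial effects $z_0$ given the observed effects $z_t$, which yields the second display, and (ii) the observation equation of the spatial model, which yields the first display. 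Composing them and marginalizing $\sigma^2$ produces the multivariate-$t$.

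For the kriging step I would use that $z_t(\cdot)$ is a draw from the mean-zero process with covariance $\sigma^2\delta_z^2 K_\phi$, so that $z_t$ and $z_0$ are jointly Gaussian given $\sigma^2$ with block covariance $\sigma^2\begin{psmallmatrix} C & C_0 \\ C_0^{\T} & C_{00}\end{psmallmatrix}$ for the $C$, $C_0$, $C_{00}$ defined in the statement. Since $z_t$ is exactly the sub-vector $\theta_{t,(p+1:p+n)}$ of $\theta_t$, conditioning on $\theta_t$ is conditioning on $z_t$, and the standard Gaussian conditioning identity gives mean $C_0^{\T} C^{-1} z_t$ and covariance $\sigma^2(C_{00} - C_0^{\T} C^{-1} C_0)$, which is the second display.

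For the observation step I would start from $y_0 = \tilde{X}_0\beta_t + z_0 + \eta_0$ with $\eta_0 \sim N(0,\sigma^2 I_{n_0})$ independent of $(\beta_t, z_0)$. Given $\beta_t$ and $z_0$ the mean is $\tilde{X}_0\beta_t + z_0$ and the only variance is $\sigma^2 I_{n_0}$; the additional term $\sigma^2\tilde{X}_0 W_{t,(1:p,1:p)}\tilde{X}_0^{\T}$ recorded in the first display is precisely the increment obtained once $\beta_t \mid \sigma^2 \sim N(m_{t,(1:p)}, \sigma^2 W_{t,(1:p,1:p)})$ is folded in via the law of total variance, the location then being anchored at $m_{t,(1:p)}$. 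This isolates the contribution of posterior uncertainty in the slope vector from that of the spatial interpolation.

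For the marginal $y_{t0}$ I would apply the tower property: substituting the kriging mean into the observation equation gives a Gaussian whose location is $\tilde{X}_0 m_{t,(1:p)} + C_0^{\T} C^{-1} m_{t,(p+1:p+n)}$ (plugging in the posterior means of $\beta_t$ and $z_t$) and whose covariance, conditional on $\sigma^2$, is the sum of the measurement term $I_{n_0}$, the slope term $\tilde{X}_0 W_{t,(1:p,1:p)}\tilde{X}_0^{\T}$, and the kriging term $C_{00} - C_0^{\T} C^{-1} C_0$, all scaled by $\sigma^2$; integrating $\sigma^2$ against the $IG(n_t/2, n_t s_t/2)$ posterior via the familiar Normal--inverse-Gamma-to-$t$ identity converts this into the stated $t_{n_t}$ with scale $s_t$ times that covariance. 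I expect the main obstacle to be the bookkeeping in this final composition, namely verifying that the posterior uncertainty in the observed spatial effects $z_t$ enters the predictive location through $C_0^{\T}C^{-1} m_{t,(p+1:p+n)}$ but is treated as fixed at its posterior mean in the variance, so that only the slope block $W_{t,(1:p,1:p)}$ of $W_t$, and neither the $z_t$-block nor the cross term, survives in the stated scale matrix; reconciling the conditional displays with this marginal is the delicate point that must be handled with care.
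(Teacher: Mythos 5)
The paper states Proposition~\ref{prop:pred-s} without an explicit proof (only Lemma~\ref{lem:equivalence-t} and the theorems are proved in the appendix), so your proposal must be judged against the standard Normal--inverse-Gamma derivation the paper implicitly invokes. Your treatment of the two conditional displays is correct and is the intended route: the kriging step from the joint Gaussian law with blocks $C$, $C_0$, $C_{00}$ gives the second display, and your reading of the first display as a hybrid---conditional mean written in terms of $\theta_{t,(1:p)}$ but variance already carrying the increment $\sigma^2\tilde{X}_0 W_{t,(1:p,1:p)}\tilde{X}_0^{\T}$ from folding in $\beta_t \mid \sigma^2, \mD_t \sim N(m_{t,(1:p)}, \sigma^2 W_{t,(1:p,1:p)})$---is the only coherent interpretation. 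The final Normal--inverse-Gamma-to-$t$ conversion ($IG(n_t/2, n_t s_t/2)$ mixing giving $t_{n_t}$ with scale $s_t$ times the conditional covariance) is also handled correctly.

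However, your derivation of the marginal predictive has a genuine gap, which you flag in your last sentence but do not resolve. An exact application of the tower property over the full posterior $\theta_t \mid \sigma^2, \mD_t \sim N(m_t, \sigma^2 W_t)$ gives, conditional on $\sigma^2$, the predictive covariance $\sigma^2\left( I_{n_0} + L W_t L^{\T} + C_{00} - C_0^{\T}C^{-1}C_0 \right)$ with $L = [\tilde{X}_0 \mid C_0^{\T}C^{-1}]$, and $L W_t L^{\T}$ contains, besides $\tilde{X}_0 W_{t,(1:p,1:p)}\tilde{X}_0^{\T}$, the block $C_0^{\T}C^{-1}W_{t,(p+1:p+n,\,p+1:p+n)}C^{-1}C_0$ and the cross terms $\tilde{X}_0 W_{t,(1:p,\,p+1:p+n)}C^{-1}C_0$ plus its transpose. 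These do not vanish in general, so the stated scale matrix cannot be reached by the exact tower-property computation you invoke. The stated formula is exact only under the convention---consistent with how the two displays are phrased---that the kriging step is anchored at the posterior mean, i.e., $z_0 \mid \sigma^2, \mD_t \sim N\left(C_0^{\T}C^{-1}m_{t,(p+1:p+n)},\ \sigma^2(C_{00}-C_0^{\T}C^{-1}C_0)\right)$ taken independent of $\beta_t$, so that posterior uncertainty in $z_t$ is not propagated into the predictive variance. With that convention the three variance components (measurement, slope, kriging) add as independent Gaussian contributions and the inverse-Gamma mixture yields the claimed $t_{n_t}$. Your proof should either adopt this convention explicitly from the outset, replacing the appeal to the exact tower property, or record that the exact Bayes predictive carries the additional $W_t$ terms; as written, the final step asserts the proposition's scale while conceding that the reconciliation is unestablished, which is precisely the part a proof must supply.
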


\begin{proposition}\label{prop:pred-t}
Consider the assumptions in Proposition~\ref{prop:post}. The one-step ahead forecast and corresponding predictive distributions for the state vector are ${y}_{t+1} \mid {\theta}_{t+1}, \sigma^2, \mD_{t} \sim N\left(F_{t+1}{\theta}_{t+1} ,\ \sigma^2 I_n\right)$ and ${\theta}_{t+1} \mid \sigma^2, \mD_{t} \sim N\left(G_{t+1}{m}_{t},\ \sigma^2 (G_{t+1} W_{t} G_{t+1}^{\T} + S_{t+1}) \right)$, respectively.
The predictive distribution for ${y}_{t+1}$ is $t_{n_t}(F_{t+1}G_{t+1} {m}_{t}, s_t (I_n+ F_{t+1}( G_{t+1} W_{t} G_{t+1}^{\T} + S_{t+1})F_{t+1}^{\T}) )$. A general h-step ahead forecast can be obtained using recursive calculations.
\end{proposition}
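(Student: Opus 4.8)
The plan is to derive the one-step-ahead forecast by propagating the filtered posterior from Proposition~\ref{prop:post} through the observation and state equations of the DLM in~\eqref{model:dlm}, and then to marginalize over $\sigma^2$ using the Normal--Inverse-Gamma structure. The two conditional displays are essentially read off from the model, and the marginal $t$-law is obtained by a standard composition-and-integration argument.

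First I would obtain the conditional law $y_{t+1}\mid\theta_{t+1},\sigma^2,\mD_t$ directly from the observation equation: given $\theta_{t+1}$, we have $y_{t+1}=F_{t+1}\theta_{t+1}+\eta_{t+1}$ with $\eta_{t+1}\sim N(0,\sigma^2 I_n)$ (taking $V_{t+1}=I_n$, consistent with $Q_t=F_tR_tF_t^{\T}+I_n$ in Proposition~\ref{prop:post}), and $\eta_{t+1}$ is independent of $\mD_t$ and of $\theta_{t+1}$; this yields the first stated distribution. For the state forecast, I would substitute the filtered posterior $\theta_t\mid\sigma^2,\mD_t\sim N(m_t,\sigma^2 W_t)$ into the evolution equation $\theta_{t+1}=G_{t+1}\theta_t+\eta_{\theta,t+1}$, where $\eta_{\theta,t+1}\sim N(0,\sigma^2 S_{t+1})$ is future innovation noise, hence independent of $\theta_t$ given $\mD_t$. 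As this is a linear image of a Gaussian plus independent Gaussian noise, $\theta_{t+1}\mid\sigma^2,\mD_t$ is Gaussian with mean $G_{t+1}m_t$ and covariance $\sigma^2(G_{t+1}W_tG_{t+1}^{\T}+S_{t+1})$, giving the second stated distribution.

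Next, to obtain the marginal predictive for $y_{t+1}$, I would compose these two displays. Integrating $\theta_{t+1}$ out of $y_{t+1}\mid\theta_{t+1},\sigma^2,\mD_t$ against its forecast law uses the standard rule that a linear-Gaussian observation of a Gaussian state is Gaussian, with the observation and propagated state covariances adding; this gives mean $F_{t+1}G_{t+1}m_t$ and covariance $\sigma^2(I_n+F_{t+1}(G_{t+1}W_tG_{t+1}^{\T}+S_{t+1})F_{t+1}^{\T})$. The final step marginalizes over $\sigma^2\mid\mD_t\sim IG(n_t/2,n_ts_t/2)$ via the Normal--Inverse-Gamma-to-$t$ identity: if $X\mid\sigma^2\sim N(\mu,\sigma^2\Sigma)$ and $\sigma^2\sim IG(a,b)$, then $X\sim t_{2a}(\mu,(b/a)\Sigma)$. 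With $2a=n_t$ and $b/a=s_t$, this produces exactly the claimed $t_{n_t}$ distribution with the stated scale matrix.

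Finally, the $h$-step forecast follows by iterating the state recursion: composing $\theta_{t+1}=G_{t+1}\theta_t+\eta_{\theta,t+1}$ repeatedly expresses $\theta_{t+h}\mid\sigma^2,\mD_t$ as a Gaussian whose mean and covariance accumulate the successive products of the $G$ matrices and the propagated innovation covariances, after which the same observation-composition and $\sigma^2$-marginalization steps apply unchanged. Since every step is a linear-Gaussian manipulation, I do not anticipate a genuine obstacle; the only point requiring care is bookkeeping the $\sigma^2$ scaling consistently through each composition, so that the marginalization returns precisely the scale $s_t$ and degrees of freedom $n_t$ rather than mismatched constants, and (for the $h$-step case) correctly tracking which innovation variances enter the accumulated covariance.
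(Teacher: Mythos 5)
Your proposal is correct and follows exactly the argument the paper relies on: the paper gives no separate proof of this proposition, treating it as standard Bayesian DLM forecasting theory (read the conditional law off the observation equation, propagate the filtered posterior $\theta_t\mid\sigma^2,\mD_t\sim N(m_t,\sigma^2W_t)$ through the evolution equation, compose the linear-Gaussian laws, and apply the Normal--Inverse-Gamma-to-$t$ marginalization with $2a=n_t$, $b/a=s_t$). Your explicit note that the stated form presumes $V_{t+1}=I_n$, consistent with $Q_t=F_tR_tF_t^{\T}+I_n$ in Proposition~\ref{prop:post}, is a correct and worthwhile clarification of an assumption the paper leaves implicit.
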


\subsection{Distribution theory for discrete time trajectory model}\label{sec: con-dis-theory}
Fixing $\delta_{\beta}$, $\delta_z$, and $\phi$ produce accessible posterior distributions for the model in the trajectory regression~\eqref{model:condis}, which facilitates predictive stacking discussed in Section~\ref{section:stacking}. We present these posterior distributions below.
\begin{proposition}\label{prop: pos-dis}Posterior distribution of $({\theta},\sigma^2)$ in \eqref{model:condis}~and~\eqref{model:lin_sys_dis} is given by
\begin{align*}
    p({\theta},\sigma^2 \mid \mD) &= p(\sigma^2 \mid \mD)\times p({\theta} \mid \sigma^2,\ \mD) = IG(\sigma^2 \given a_{\sigma}^*,b_{\sigma}^* ) \times N(\theta \given {{m}}, \sigma^2 \Sigma),
\end{align*}
where $a_{\sigma}^* = a_{\sigma} + T/2,\ b_{\sigma}^* = b_{\sigma} + (Y-X{{m}}) S^{-1} (Y-X{m}),\ {m}=\Sigma X^{\T}S^{-1}Y ,\ \Sigma^{-1} = X^{\T} S^{-1} X$. The marginal posterior distribution of ${\theta}$ is $t_{2a_{\sigma}^*} \left({m},\ (b_{\sigma}^*/a_{\sigma}^*)\Sigma \right)$.
\end{proposition}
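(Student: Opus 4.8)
The plan is to treat the augmented system \eqref{model:lin_sys_dis} as a standard Normal--Inverse-Gamma conjugate linear regression, where the two zero blocks of $Y$ are pseudo-observations that have already absorbed the Gaussian prior on $\theta$. First I would write the joint posterior as proportional to the augmented Gaussian ``likelihood'' times the $IG(a_\sigma,b_\sigma)$ prior,
\[
p(\theta,\sigma^2\mid\mD) \propto (\sigma^2)^{-(1+p+n)T/2}\exp\!\left(-\frac{1}{2\sigma^2}(Y-X\theta)^{\T}S^{-1}(Y-X\theta)\right)(\sigma^2)^{-a_\sigma-1}\exp\!\left(-\frac{b_\sigma}{\sigma^2}\right),
\]
noting at the outset that $S$ is invertible because each diagonal block is: the $\beta$-block uses the nilpotency of $A$ so that $I_{pT}-A\otimes I_p$ is unit (block) lower-triangular, and the $z$-block inherits positive-definiteness from the spatial kernel $K_{\phi}$ over the distinct locations. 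The identity blocks of $X$ also force $X$ to have full column rank, so $X^{\T}S^{-1}X$ is positive definite; this is what keeps the problem well-posed even when the genuine data $y$ are sparse.

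Next I would complete the square in $\theta$. Expanding and collecting terms gives $(Y-X\theta)^{\T}S^{-1}(Y-X\theta)=(\theta-m)^{\T}\Sigma^{-1}(\theta-m)+(Y-Xm)^{\T}S^{-1}(Y-Xm)$ with $\Sigma^{-1}=X^{\T}S^{-1}X$ and $m=\Sigma X^{\T}S^{-1}Y$; the identity $m^{\T}\Sigma^{-1}m=m^{\T}X^{\T}S^{-1}Y$ collapses the residual to the generalized sum of squares $(Y-Xm)^{\T}S^{-1}(Y-Xm)$, which supplies the update to $b_\sigma^*$. The first summand is exactly the exponent of the conditional $N(\theta\mid m,\sigma^2\Sigma)$.

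The step I expect to require the most care is tracking the powers of $\sigma^2$. The augmented Gaussian carries $(\sigma^2)^{-(1+p+n)T/2}$, but factoring out the normalizing constant of $N(\theta\mid m,\sigma^2\Sigma)$ removes $(\sigma^2)^{-(p+n)T/2}$, leaving precisely $(\sigma^2)^{-T/2}$ attributable to the $T$ genuine observations in $y$; together with the prior's $(\sigma^2)^{-a_\sigma-1}$ this yields shape $a_\sigma^*=a_\sigma+T/2$ rather than the naive $(1+p+n)T/2$. This cancellation is the crux: the pseudo-observation rows inflate the dimension of $Y$ but, being the prior in disguise, must not contribute to the effective sample size. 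Identifying the remaining factor as $IG(\sigma^2\mid a_\sigma^*,b_\sigma^*)$ completes the first factorization.

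Finally I would obtain the marginal posterior of $\theta$ by integrating out $\sigma^2$, which is the textbook Normal--Inverse-Gamma marginalization: $\int N(\theta\mid m,\sigma^2\Sigma)\,IG(\sigma^2\mid a_\sigma^*,b_\sigma^*)\,\mathrm{d}\sigma^2$ is a multivariate $t$ with $2a_\sigma^*$ degrees of freedom, location $m$, and scale $(b_\sigma^*/a_\sigma^*)\Sigma$, matching the stated $t_{2a_\sigma^*}(m,(b_\sigma^*/a_\sigma^*)\Sigma)$. Since the argument uses only invertibility of $S$ and full column rank of $X$, it is insensitive to the particular Kronecker and trajectory structure, so no properties of $B$, $A$, or $K_{\phi}$ beyond well-definedness are needed.
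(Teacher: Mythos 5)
Your proposal is correct and takes essentially the same route the paper implicitly relies on (the paper states this proposition without an explicit proof, as standard conjugate distribution theory): rewrite the augmented system as a Normal--Inverse-Gamma regression, complete the square in $\theta$, and observe that the pseudo-observation rows' powers of $\sigma^2$ cancel against the normalizing constant of $N(\theta \mid m, \sigma^2\Sigma)$, leaving exactly $(\sigma^2)^{-T/2}$ and hence $a_\sigma^* = a_\sigma + T/2$, with the marginal $t_{2a_\sigma^*}(m,(b_\sigma^*/a_\sigma^*)\Sigma)$ following from the textbook Normal--IG mixture. One remark: carefully tracking the $\tfrac{1}{2}$ in the Gaussian exponent gives $b_\sigma^* = b_\sigma + \tfrac{1}{2}(Y-Xm)^{\T}S^{-1}(Y-Xm)$, so the paper's displayed expression for $b_\sigma^*$ (which omits both the factor $\tfrac{1}{2}$ and the transpose) contains a typo that your derivation implicitly corrects.
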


The posterior predictive distributions for future points on a trajectory is also available.

\begin{proposition}\label{prop:pred-dis}
Consider the setup for \eqref{model:lin_sys_dis} and Proposition~\ref{prop: pos-dis}. Let $\Gamma_0 = \{\gamma(1),\ldots,\gamma(T+1)\}$ be an enumeration of spatial locations and $\tilde{\Gamma}_0= \{\tilde{\gamma}_1,\ldots,\tilde{\gamma}_{n_0} \} \subseteq \Gamma_0$ be the set of $n_0$ distinct locations. Given $\mD$ up to time $T$, the predictive distributions at $T+1$ in \eqref{model:condis}~and~\eqref{model:lin_sys_dis} are
\begin{align*}
    y_{T+1} ( \gamma(T+1) )\mid {\beta}_{T+1},{z}_{T+1}, \sigma^2, \mD &\sim N\left({x}_{T+1}(\gamma(T+1))^{\T} {\beta}_{T+1} + \tilde{B} z_{T+1},\sigma^2 \right)\\
    {z}_{T+1} \mid {\theta}, \sigma^2, \mD &\sim N\left( C_{z0}^{\T}{C}_z^{-1}{z}_T , \sigma^2(C_{z00} - C_{z0}^{\T}{C}_z^{-1}C_{z0})\right),\\
    {\beta}_{T+1} \mid {\theta}, \sigma^2, \mD &\sim N\left( {\beta}_T ,\sigma^2\delta_{\beta}^2 W_p \right),
\end{align*}
where $C_z = (\delta_z^2K_{\phi}(\gamma,\gamma'))_{\gamma,\gamma'\in \tilde{\Gamma}}$, $C_{z0}=(\delta_z^2K_{\phi}(\gamma,\gamma_0))_{\gamma\in\tilde{\Gamma},\gamma_0\in\tilde{\Gamma}_0}$, $C_{z00} = (\delta_z^2K_{\phi}(\gamma_0,\gamma'_0))_{\gamma_0,\gamma_0'\in \tilde{\Gamma}_0}$ and $\tilde{B} = (b(\gamma(T+1),\tilde{\gamma}_j))_{j}$ is the $1 \times n_0$, constructed by the kernel $b(\cdot,\cdot)$ defined in Section~\ref{sec:con-dis}. The marginal predictive distribution for $y_{T+1} ( \gamma(T+1) )$ is $t_{2a_{\sigma}^*} ( {x}_{T+1}(\gamma(T+1))^{\T} \hat{\beta}_{T} + \tilde{B}C_{z0}^{\T}{C}_z^{-1}\hat{z}_T , (b_{\sigma}^*/a_{\sigma}^*) (1 + \delta_{\beta}^2 {x}_{T+1}(\gamma(T+1))^{\T}  W_p {x}_{T+1}(\gamma(T+1)) +\tilde{B}( C_{z00} - C_{z0}^{\T}{C}_z^{-1}C_{z0})\tilde{B}^{\T} ))$, where $\hat{\beta}_{T}$ and $\hat{z}_T$ are the posterior means calculated as $m$ in Proposition~\ref{prop: pos-dis}.
\end{proposition}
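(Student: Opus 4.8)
The plan is to anchor every displayed distribution on the Normal--IG posterior for $(\theta,\sigma^2)$ established in Proposition~\ref{prop: pos-dis}, obtaining each conditional law by a single Gaussian-conditioning step and then recovering the marginal predictive law by integrating $\sigma^2$ out against its inverse-Gamma posterior. The first conditional is immediate: evaluating the observation equation~\eqref{model:condis} at the new epoch $T+1$ gives $y_{T+1}(\gamma(T+1)) = x_{T+1}(\gamma(T+1))^\T\beta_{T+1} + w_{T+1}(\gamma(T+1)) + \eta_{1,T+1}$, and substituting the indicator map $w_{T+1}(\gamma(T+1)) = \tilde B z_{T+1}$ from Section~\ref{sec:con-dis} together with $\eta_{1,T+1}\sim N(0,\sigma^2)$ yields the stated $N(x_{T+1}(\gamma(T+1))^\T\beta_{T+1}+\tilde B z_{T+1},\sigma^2)$.

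For the two state-forecast laws I would read off the temporal recursions that define the augmented system~\eqref{model:lin_sys_dis}. The coefficient block obeys the random walk $\beta_{T+1}=\beta_T+\eta_{2,T+1}$ with $\eta_{2,T+1}\sim N(0,\sigma^2\delta_\beta^2 W_p)$, so conditioning on $\theta$ (hence on $\beta_T$) returns $N(\beta_T,\sigma^2\delta_\beta^2 W_p)$ at once. The latent block is the only place needing genuine work: here I would write the joint Gaussian law of the spatial field over the retained locations $\tilde\Gamma$ and the extended set $\tilde\Gamma_0$ induced by the innovation covariance $\delta_z^2 K_\phi$, with blocks $C_z$, $C_{z0}$ and $C_{z00}$, and apply the standard conditional-Gaussian (kriging) formula. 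Persistence of the random walk forces the conditional mean at retained sites to coincide with $z_T$ (since the cross-covariance there is exactly $C_z$, so $C_{z0}^\T C_z^{-1}z_T$ reproduces $z_T$), so the predictor $C_{z0}^\T C_z^{-1}z_T$ and variance $\sigma^2(C_{z00}-C_{z0}^\T C_z^{-1}C_{z0})$ simultaneously carry the field forward in time and interpolate it to the freshly visited location $\gamma(T+1)$.

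To reach the marginal $t$-law for $y_{T+1}(\gamma(T+1))$ I would chain the three conditionals, all Gaussian given $\sigma^2$, so that the three variance sources add: the measurement error $\sigma^2$, the coefficient innovation pushed through the covariate vector, $\sigma^2\delta_\beta^2\,x_{T+1}(\gamma(T+1))^\T W_p x_{T+1}(\gamma(T+1))$, and the latent kriging variance mapped by $\tilde B$, namely $\sigma^2\tilde B(C_{z00}-C_{z0}^\T C_z^{-1}C_{z0})\tilde B^\T$. The location collapses to $x_{T+1}(\gamma(T+1))^\T\hat\beta_T+\tilde B C_{z0}^\T C_z^{-1}\hat z_T$ upon inserting the posterior means $\hat\beta_T,\hat z_T$ of Proposition~\ref{prop: pos-dis}, and integrating $\sigma^2$ out against $IG(a_\sigma^*,b_\sigma^*)$ turns the Gaussian into a Student-$t$ with $2a_\sigma^*$ degrees of freedom and scale $(b_\sigma^*/a_\sigma^*)$ times the assembled variance, matching the displayed $t_{2a_\sigma^*}$.

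The main obstacle I anticipate is the latent forecast step: reconciling the discrete-time random-walk dynamics with the spatial kriging extension so that one conditional-Gaussian identity produces both the temporal carry-forward at retained sites and the spatial interpolation at the new site, and keeping careful track of which uncertainties are propagated. In particular, I would need to justify the plug-in of the posterior means $\hat\beta_T,\hat z_T$ in the location, so that the reported scale collects only the forecast-step variances rather than the full posterior spread of $\theta$ carried by $\Sigma$ in Proposition~\ref{prop: pos-dis}.
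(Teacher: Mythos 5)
Your proposal is correct and follows essentially the route the paper intends: the paper states Proposition~\ref{prop:pred-dis} without an explicit proof precisely because it is the standard Normal--IG conjugate derivation --- Gaussian conditioning for each forecast step (observation equation, random-walk step for $\beta$, kriging step for $z$) followed by mixing over the $IG(a_{\sigma}^*,b_{\sigma}^*)$ posterior of $\sigma^2$ --- which is exactly your argument. The two difficulties you flag at the end are accurate observations about the statement itself rather than gaps in your derivation: the displayed $t$-scale is indeed a plug-in in $\hat{\beta}_T,\hat{z}_T$ that does not propagate the posterior spread $\Sigma$ of $\theta$, and the $z$-forecast is pure spatial kriging of $z_T$ (zero conditional variance at retained sites), carrying no temporal innovation term, in contrast to the $\beta$-forecast which does carry $\sigma^2\delta_{\beta}^2 W_p$.
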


\subsection{Distribution theory for continuous time trajectory model}\label{sec: con-con-theory}
To exploit familiar results concerning \eqref{model:reg_con} that are used for stacking, as discussed in Section~\ref{section:stacking}, we fix $\delta_{\beta},\delta_z,\xi$ and $\phi$. The analytical posterior distributions are described below.
\begin{proposition} \label{prop: pos-cos} Posterior distribution of $({\theta},\sigma^2)$ in~\eqref{model:reg_con}--\eqref{model:lin_sys_con} is given by
\begin{equation*}
    p({\theta},\sigma^2 \mid \mD) = p(\sigma^2 \mid \mD)\times p({\theta} \mid \sigma^2,\ \mD) = IG(a_{\sigma}^*,b_{\sigma}^* ) \times N({{m}} , \sigma^2 \Sigma),
\end{equation*}
where $a_{\sigma}^* = a_{\sigma} + n/2,\ b_{\sigma}^* = b_{\sigma} + (Y-X{{m}}) S^{-1} (Y-X{m}),\ {m}=\Sigma X^{\T}S^{-1}Y ,\ \Sigma^{-1} = X^{\T} S^{-1} X$. The marginal posterior distribution of ${\theta}$ is $t_{2a_{\sigma}^*} \left({m}, (b_{\sigma}^*/a_{\sigma}^*)\Sigma \right)$.
\end{proposition}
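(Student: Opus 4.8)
The plan is to recognize \eqref{model:lin_sys_con} as a conjugate Normal--inverse-Gamma linear model and carry out the standard complete-the-square derivation; the argument is structurally identical to that for the discrete-time model in Proposition~\ref{prop: pos-dis}, with the number of epochs $T$ replaced by the number of space-time observations $n$, and with $S$ built from $C_{\xi}(\mT)$ and $K_{\phi}(\Gamma,\mT)$. First I would note that the augmented design absorbs the prior on $\theta$ into the lower block-rows of $X$ and the zero blocks of $Y$, so that the single relation $Y - X\theta = \eta \sim N(0,\sigma^2 S)$ simultaneously encodes the likelihood of $y$ given $(\theta,\sigma^2)$ (the top block, with covariance $\sigma^2 I_n$) and the conditional prior $\theta\mid\sigma^2 \sim N(0,\sigma^2 V)$, where $V$ is the lower-right block of $S$. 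Verifying this amounts to observing that $(Y-X\theta)^{\T}S^{-1}(Y-X\theta)$ splits, because $S$ is block diagonal, as the residual sum of squares from the observation equation plus the prior quadratic form $\theta^{\T}V^{-1}\theta$.

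Given this, the joint posterior factors as
\begin{equation*}
 p(\theta,\sigma^2\mid\mD) \propto (\sigma^2)^{-a_{\sigma}-1}\exp\!\left(-\frac{b_{\sigma}}{\sigma^2}\right)\,(\sigma^2)^{-(p+2)n/2}\exp\!\left(-\frac{1}{2\sigma^2}(Y-X\theta)^{\T}S^{-1}(Y-X\theta)\right),
\end{equation*}
combining the $IG(a_{\sigma},b_{\sigma})$ prior with the Gaussian kernel, where $(p+2)n$ is the total number of rows of the augmented system (the $n$ observation equations together with the $(p+1)n$ prior equations). The next step is the algebraic identity
\begin{equation*}
 (Y-X\theta)^{\T}S^{-1}(Y-X\theta) = (\theta-m)^{\T}\Sigma^{-1}(\theta-m) + (Y-Xm)^{\T}S^{-1}(Y-Xm),
\end{equation*}
obtained by completing the square with $\Sigma^{-1}=X^{\T}S^{-1}X$ and $m=\Sigma X^{\T}S^{-1}Y$; this requires $X^{\T}S^{-1}X$ to be invertible, which holds because the positive-definite prior block contributes to it. The $\theta$-dependent factor is then exactly the kernel of $N(m,\sigma^2\Sigma)$, yielding $p(\theta\mid\sigma^2,\mD)$.

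Collecting the remaining $\sigma^2$ terms gives the inverse-Gamma posterior. The crucial bookkeeping is the exponent of $\sigma^2$: after extracting $(\sigma^2)^{-(p+1)n/2}$ for the $(p+1)n$-dimensional Gaussian factor, the leftover power is $-(p+2)n/2 - a_{\sigma} - 1 + (p+1)n/2 = -(a_{\sigma}+n/2)-1$, so the shape becomes $a_{\sigma}^{*}=a_{\sigma}+n/2$, the count $n$ arising solely from the genuine observation rows and not from the absorbed prior rows, while the scale is $b_{\sigma}^{*}=b_{\sigma}+(Y-Xm)^{\T}S^{-1}(Y-Xm)$ (up to the parametrization convention for the factor of $\tfrac12$). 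Finally, marginalizing $\sigma^2$ out of the pair $\theta\mid\sigma^2\sim N(m,\sigma^2\Sigma)$, $\sigma^2\sim IG(a_{\sigma}^{*},b_{\sigma}^{*})$ gives, by the standard scale-mixture representation, the multivariate $t$-distribution $t_{2a_{\sigma}^{*}}(m,(b_{\sigma}^{*}/a_{\sigma}^{*})\Sigma)$.

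The computation is routine; the only point requiring genuine care is the exponent accounting that produces $a_{\sigma}^{*}=a_{\sigma}+n/2$ rather than a shape inflated by the $(p+1)n$ absorbed prior equations, together with confirming that $X^{\T}S^{-1}X$ is nonsingular so that $m$ and $\Sigma$ are well defined. Since the only substantive difference from Proposition~\ref{prop: pos-dis} is the replacement of $T$ by $n$ and of $S$ by its continuous-time form, no new ideas beyond these steps are needed.
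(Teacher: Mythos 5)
Your proof is correct and takes essentially the same route as the paper, which states this proposition without an explicit derivation precisely because it is the standard Normal--inverse-Gamma conjugate analysis of the augmented system \eqref{model:lin_sys_con}: absorb the prior on $\theta$ into the block rows, complete the square in $\theta$, read off $N(m,\sigma^2\Sigma)$, and collect the leftover $\sigma^2$ powers to get shape $a_{\sigma}+n/2$ (only the $n$ genuine observation rows survive after extracting the $(p+1)n$-dimensional Gaussian normalizer), then use the scale-mixture representation for the marginal $t$-distribution. Your parenthetical on the factor of $\tfrac12$ is also apt: the paper's displayed $b_{\sigma}^{*}=b_{\sigma}+(Y-Xm)S^{-1}(Y-Xm)$ omits both a transpose and the conventional $\tfrac12$ multiplying the quadratic form, so your bookkeeping is the internally consistent one.
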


The predictive distributions for new points on a trajectory are obtained as follows.
\begin{proposition} \label{prop:pred-con}
Consider the setup leading to \eqref{model:lin_sys_con} and Proposition~\ref{prop: pos-cos}. Let $(\Gamma_0, \mT_0)$ be the collection of $n_0$ new space-time points on a trajectory, $x_{j,0}$ be an $n_0\times n_0$ explanatory matrix at $(\Gamma_0, \mT_0)$ for $j=1,\ldots,p$, ${y}_{0}$ and ${z}_{0}$ be $n_0\times 1$ random variables corresponding to $y(\gamma(t),t)$ and $z(\gamma(t),t)$ for $(\gamma(t),t) \in \Gamma_0\times \mT_0$, and each $\beta_{j,0}$ is $n_0\times 1$ comprising $\beta_j(t)$ for $t\in \mT_0$. Then, the posterior predictive distributions are
\begin{align*}
    {y}_{0} \mid {\beta}_{1,0}, \ldots,{\beta}_{p,0} , {z}_0, \sigma^2, \mD &\sim N\left( \sum_{j=1}^p{x}_{j,0} {\beta}_{j,0} + {z}_{0},\ \sigma^2 I_{n_0} \right),  \\
    {z}_{0} \mid {\theta},\sigma^2, \mD &\sim N\left(C_{z0}^{\T}{C}_z^{-1}{\theta}_{np+1:np+n},\ \sigma^2(C_{z00} - C_{z0}^{\T}{C}_z^{-1}C_{z0}) \right),\\
    {\beta}_{j,0} \mid {\theta},\sigma^2, \mD &\sim N\left(C_{\beta0}^{\T}{C}_{\beta}^{-1}{\theta}_{(j-1)+1:nj},\ \sigma^2(C_{\beta00} - C_{\beta0}^{\T}{C}_{\beta}^{-1}C_{\beta0})\right),\quad j=1,\ldots,p,
\end{align*} 
where $C_{z} = (\delta_z^2K_{\phi}((\gamma(t),t),(\gamma(t'),t')))_{t,t'\in \mT}$, $C_{z 0}=(\delta_z^2K_{\phi}((\gamma(t),t),(\gamma(t_0),t_0)))_{t\in\mT,t_0\in\mT_0}$, $C_{z 00}=(\delta_z^2K_{\phi}((\gamma(t_0),t_0),(\gamma(t_0'),t_0')))_{t_0,t_0'\in\mT_0}$, $C_{\beta} = (\delta_{\beta}^2C_{\xi}(t,t'))_{t,t'\in \mT}$, $C_{{\beta}0}=(\delta_{\beta}^2C_{\xi}(t,t_0))_{t\in\mT,t_0\in\mT_0}$ and $C_{{\beta}00} = (\delta_{\beta}^2C_{\xi}(t_0,t_0'))_{t_0,t_0'\in \mT_0}$. The predictive distribution for ${y}_{0}$ is $t_{2a_{\sigma}^*}( \sum_{j=1}^p{x}_{j,0} C_{\beta0}^{\T}{C}_{\beta}^{-1} \hat{\beta}_{j} +C_{z0}^{\T}{C}_z^{-1}\hat{z}
,(b_{\sigma}^*/a_{\sigma}^*)(I_{n_0} + C_{z00} - C_{z0}^{\T}{C}_z^{-1}C_{z0} + \sum_{j=1}^p{x}_{j,0} (C_{\beta00} - C_{\beta0}^{\T}{C}_{\beta}^{-1}C_{\beta0}) {x}_{j,0}^{\T} ) )$, where $\hat{z}$ and $\hat{\beta}_{j}$ for $j=1,\ldots,p$ are the posterior means calculated as $m$ in Proposition~\ref{prop: pos-cos}.
\end{proposition}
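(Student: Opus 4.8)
The plan is to build the joint predictive law hierarchically, exploiting the conditional-Gaussian structure of \eqref{model:lin_sys_con} together with the Normal--IG posterior of Proposition~\ref{prop: pos-cos}, and then to collapse the hierarchy one layer at a time. Throughout I would write $z = \theta_{np+1:np+n}$ and $\beta_j = \theta_{(j-1)n+1:jn}$ for the observed-point latent blocks, and denote their posterior means (the corresponding subvectors of $m$) by $\hat{z}$ and $\hat{\beta}_j$.

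First I would establish the three displayed conditional distributions. The law of ${y}_{0}$ given the prediction-point effects is immediate from the observation layer \eqref{model:reg_con} written at the $n_0$ new coordinates, since the measurement errors are i.i.d.\ $N(0,\sigma^2)$ and independent of everything else. For ${z}_{0}$ and ${\beta}_{j,0}$ I would invoke joint Gaussianity of the latent processes: by \eqref{model:st_process} the vector $({z}^{\T},{z}_0^{\T})^{\T}$ is zero-mean Gaussian with covariance assembled from $K_{\phi}$, so the conditional ${z}_0\mid {z}$ is the usual kriging formula with mean $C_{z0}^{\T}C_z^{-1}{z}$ and covariance $\sigma^2(C_{z00}-C_{z0}^{\T}C_z^{-1}C_{z0})$; likewise each ${\beta}_{j,0}\mid{\beta}_j$ follows from \eqref{model:t_process} with the temporal kernel $C_{\xi}$. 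The point needing care is that these conditionals are stated given the full ${\theta}$ and $\mD$: here I would use the hierarchical conditional-independence structure, namely that the prediction-point effects depend on the data only through the observed-point effects contained in ${\theta}$, together with the block-diagonality of $S$ in \eqref{model:lin_sys_con} (the independent GP priors render ${z}$ and the ${\beta}_j$ mutually independent), so that conditioning on all of ${\theta}$ reduces to conditioning on the relevant block.

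Next I would collapse these conditionals to obtain ${y}_0\mid{\theta},\sigma^2,\mD$. Substituting the kriging means for ${\beta}_{j,0}$ and ${z}_0$ into the observation layer and adding the independent Gaussian variances gives a Gaussian whose mean is linear in the blocks of ${\theta}$ and whose covariance is $\sigma^2\Lambda$ with $\Lambda = I_{n_0} + (C_{z00}-C_{z0}^{\T}C_z^{-1}C_{z0}) + \sum_{j=1}^p {x}_{j,0}(C_{\beta00}-C_{\beta0}^{\T}C_{\beta}^{-1}C_{\beta0}){x}_{j,0}^{\T}$. Replacing the blocks of ${\theta}$ by their posterior means $\hat{\beta}_j,\hat{z}$ yields the stated predictive mean, and marginalizing over $\sigma^2\sim IG(a_{\sigma}^*,b_{\sigma}^*)$ is then the standard Normal--Inverse-Gamma mixing identity: if $u\mid\sigma^2\sim N(\mu,\sigma^2\Lambda)$ and $\sigma^2\sim IG(a,b)$, then marginally $u\sim t_{2a}(\mu,(b/a)\Lambda)$. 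With $a=a_{\sigma}^*=a_{\sigma}+n/2$ this produces exactly the $t_{2a_{\sigma}^*}$ law with scale $(b_{\sigma}^*/a_{\sigma}^*)\Lambda$ in the statement.

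The main obstacle I anticipate is bookkeeping rather than conceptual: tracking the block indexing of ${\theta}$ so that the kriging weights $C_{z0}^{\T}C_z^{-1}$ and $C_{\beta0}^{\T}C_{\beta}^{-1}$ act on the correct subvectors, and verifying that the cross-terms between the ${z}$-block and the ${\beta}$-blocks vanish so that $\Lambda$ is additive across the $p+1$ latent components. If one additionally wishes to propagate the posterior uncertainty in ${\theta}$ rather than plugging in $m$, a law-of-total-variance step over ${\theta}\mid\sigma^2\sim N(m,\sigma^2\Sigma)$ would add $C_{z0}^{\T}C_z^{-1}\Sigma_{zz}C_z^{-1}C_{z0}$ and its ${\beta}$-analogues to $\Lambda$; I would flag where this enters but otherwise follow the plug-in form matching the stated result. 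The continuous-time derivation is structurally identical to Proposition~\ref{prop:pred-dis}, the only change being the use of the space-time kernel $K_{\phi}$ from \eqref{eq: stkernel} and the temporal kernel $C_{\xi}$ in place of the selection matrix $\tilde{B}$.
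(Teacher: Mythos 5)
Your proposal is correct and follows exactly the derivation the paper relies on (it states this proposition without an explicit proof, as standard conjugate distribution theory): kriging conditionals for $z_0$ and each $\beta_{j,0}$ from the independent Gaussian process priors, collapse through the observation layer, and the Normal--Inverse-Gamma mixing identity to obtain the multivariate $t$ law, in full parallel with Proposition~\ref{prop:pred-dis}. Your flag about propagating the posterior uncertainty in $\theta$ is also apt: the scale matrix as stated in the proposition is indeed the plug-in form (replacing $\beta_j$, $z$ by $\hat{\beta}_j$, $\hat{z}$ without the additional $A\Sigma A^{\T}$-type terms a full law-of-total-variance step would contribute), and your choice to match that form while noting where the extra terms would enter is the right reading of the statement.
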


\subsection{Repeated sampling properties of posterior distributions}\label{sec: poscon}
Here, we investigate some theoretical results for the state-space setting. At the outset, it is worth recognizing that classical inference is rather limited because trajectory data, by definition, do not admit replicates at a single time point. Theoretical accessibility requires multiple, say $n$, spatial locations at each time point. The relevant setting here is the second adaptation of \eqref{model:dlm} discussed in Section~\ref{sec: bayesian_dlm} with multiple spatial locations at each epoch. 

For convenient notation, let $y_t(s)$ and $z_t(s)$ denote the response and the spatial process, respectively, where $s$ is a generic spatial location in $\mathbb{R}^d$. We assume $n$ spatial replicates $\chi_n = \{s_1,\ldots,s_n\}$ at each $t$ and consider \eqref{model:dlm} without the trend, i.e., $\beta_t=0$. Let $\mD_{\chi_n,t}$ now be the entire dataset until time $t$ with spatial replicates in $\chi_n$. Hence, 
\begin{equation}\label{model: dlm_no_trend}
     y_t =  z_t + \eta_t, \quad \eta_t \overset{\mathrm{i.i.d.}}{\sim} N(0,\sigma^2I_n);\quad 
     z_t = \alpha z_{t-1} + \eta_{\theta,t}, \quad \eta_{\theta,t} \overset{\mathrm{i.i.d.}}{\sim} GP(0,\sigma^2\delta_z^2K_{\phi}(\cdot,\cdot)),
\end{equation}
where $y_t$ and $z_t$ are each $n\times 1$ with elements $y_t(s_i)$ and $z_t(s_i)$, respectively, $\alpha$ is a fixed real number, and $K_{\phi}(s_i,s_j) =  \frac{2^{1-\nu}}{\Gamma(\nu)} \left( \frac{\|{s}_i-{s}_j\|}{\phi} \right)^{\nu} \mK _{\nu}\left( \frac{\|{s}_i-{s}_j\|}{\phi} \right)$ is the Mat\'ern kernel \citep{stein1999interpolation} defined for any pair of spatial locations ${s}_i$ and ${s}_j$ in a bounded region $\mS\subset \R^2$. The parameters $\phi>0$ and $\nu>0$ model spatial decay and smoothness, respectively, and $\mK_{\nu}$ is the modified Bessel function of the second kind of order $\nu$. Here, we fix $\nu$
and employ prior distributions $\sigma^2\sim IG(n_{\sigma}/2, n_{\sigma}s_{\sigma}/2)$ and ${z}_0 \mid \sigma^2 \sim N({m}_{0}, \sigma^2S_0)$.

Let $\{\sigma_*, \phi_*, \delta_{z*}\}$ be fixed values of the model parameters that are used to generate data from \eqref{model: dlm_no_trend}, $\phi'$ and $\delta_z'$ be fixed values, $\mathbb{P}_*$ be the probability law of $y_t(s)$ corresponding to $\{\sigma_*, \phi_*, \delta_{z*}\}$, and $\mathbb{P}'$ be the law corresponding to $(\sigma_*, \phi', \delta_z')$. We require the notion of equivalence of probability measures for subsequent results.

\begin{definition}[Equivalence of probability measures]
Let $P_1$ and $P_2$ be two probability measures on the measurable space $(\Omega, \mathcal{F})$. Measures $P_1$ and $P_2$ are termed \textit{equivalent}, denoted $P_1 \equiv P_2$, if they are absolutely continuous with respect to each other. That is, $P_1 \equiv P_2$, if $P_1(A) = 0 \Leftrightarrow P_2(A) = 0$ for any $A \in \mathcal{F}$.
\end{definition}

\begin{lemma}[]\label{lem:equivalence-t}
For any $\phi'>0$, there exists $\delta_z'$ such that $\mathbb{P}'\equiv\mathbb{P}_*$.
\end{lemma}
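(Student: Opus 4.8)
The plan is to recast the statement as an instance of the classical dichotomy for Gaussian measures and to choose $\delta_z'$ so as to preserve the Mat\'ern \emph{microergodic} parameter. Because trajectory data admit no temporal replication, the nontrivial content of the claim lives in the infill regime, where the locations $\chi_n$ densify the bounded window $\mS\subset\mathbb{R}^2$ as $n\to\infty$ and $\mathbb{P}_*,\mathbb{P}'$ are the induced laws of the space--time field $\{y_t(s)\}$. By the Feldman--H\'ajek theorem these two zero-mean Gaussian laws are either equivalent or orthogonal, so it suffices to produce a single $\delta_z'$ for which the Hilbert--Schmidt condition on the transformed covariance operator holds. Guided by Zhang's characterisation of equivalent Mat\'ern measures on bounded domains in $\mathbb{R}^d$ with $d\le 3$ (here $d=2$ and the smoothness $\nu$ is held fixed), the natural candidate is the value equating $\sigma^2\delta_z^2/\phi^{2\nu}$ across the two models: with $\sigma$ pinned at $\sigma_*$, I would take $\delta_z' = \delta_{z*}\,(\phi'/\phi_*)^{\nu}$, which is positive and well defined for every $\phi'>0$ and yields $\sigma_*^2\delta_{z*}^2/\phi_*^{2\nu}=\sigma_*^2(\delta_z')^2/(\phi')^{2\nu}$.

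Second, I would remove the temporal autoregression from the problem. Since $\alpha$ is common to both measures, the covariance of the latent field separates as $R_T\otimes(\sigma_*^2\delta_z^2 K_\phi)$, where $R_T$ is the fixed $T\times T$ correlation matrix generated by the AR(1) recursion and is identical under $\mathbb{P}_*$ and $\mathbb{P}'$. Writing $R_T=U\Lambda U^{\T}$ and rotating in time decouples the field into $T$ independent spatial components, the $k$th having spatial covariance $\sigma_*^2 I+\lambda_k\sigma_*^2\delta_z^2 K_\phi$: a Mat\'ern field with the common nugget $\sigma_*^2$ and signal scale $\lambda_k\sigma_*^2\delta_z^2$. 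A finite product of Gaussian measures is equivalent to another exactly when each factor is, and in the microergodic parameter of the $k$th factor the common constant $\lambda_k\sigma_*^2$ cancels; hence all $T$ factors impose the single constraint $\delta_{z*}^2/\phi_*^{2\nu}=(\delta_z')^2/(\phi')^{2\nu}$, which the choice above satisfies simultaneously.

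The principal obstacle, and the step I would spend the most care on, is the measurement-error term. In the infill limit the nugget contributes the non-compact operator $\sigma_*^2 I$, so the path laws do not sit on $L^2(\mS)$ and the Feldman--H\'ajek operator must be read on the appropriate larger space. I would verify, within each temporal mode, that because the nugget variance is held fixed at $\sigma_*^2$ the transformed operator $\mathcal{C}_*^{-1/2}(\mathcal{C}'-\mathcal{C}_*)\mathcal{C}_*^{-1/2}$ is governed entirely by the difference of the two signal operators; boundedness of $\mathcal{C}_*^{-1/2}$ (from $\mathcal{C}_*\succeq\sigma_*^2 I$) then reduces the Hilbert--Schmidt requirement to square-integrability of the high-frequency discrepancy between the two Mat\'ern spectral densities. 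The crux is to show that equating $\sigma^2\delta_z^2/\phi^{2\nu}$ is precisely what forces the leading $\|\omega\|^{-(2\nu+2)}$ spectral tails to coincide and renders this discrepancy square-integrable, so that the chosen $\delta_z'$ makes the operator Hilbert--Schmidt and hence $\mathbb{P}'\equiv\mathbb{P}_*$. The algebraic selection of $\delta_z'$ is routine; making this operator and spectral analysis rigorous in the presence of the nugget and the infill limit is where the real work lies.
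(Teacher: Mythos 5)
Your strategy is sound and reaches the correct conclusion, but it takes a genuinely different---and in one respect more careful---route than the paper. The paper's proof is very short: it iterates the AR(1) recursion to write the time-$t$ marginal of the latent field as a Gaussian process with covariance $\sigma^2\Delta_{zt}^{2}K_{\phi}$, so that each $y_t$ is a Mat\'ern-plus-nugget field, and then invokes Theorem~2.1 of \cite{tang2021identifiability} to conclude $\mathbb{P}_t'\equiv\mathbb{P}_t^*$ for every $t$, asserting in one sentence that equivalence of the joint laws follows. Your Kronecker argument justifies that temporal step explicitly: the AR(1) correlation matrix depends only on $\alpha$, which is common to both measures, so the rotation $U^{\T}\otimes I_n$ is a single fixed linear bijection under both laws, it leaves the nugget $\sigma_*^2 I_{nT}$ invariant, and it splits the joint law into $T$ independent nugget-plus-Mat\'ern factors; a finite product of independent Gaussian measures is equivalent to another exactly when each factor is, the common constant $\lambda_k\sigma_*^2$ cancels in the microergodic parameter of the $k$th factor, and all modes collapse to the single constraint $\delta_{z*}^2/\phi_*^{2\nu}=(\delta_z')^2/(\phi')^{2\nu}$, i.e., your explicit $\delta_z'=\delta_{z*}(\phi'/\phi_*)^{\nu}$ (a formula the paper never writes down). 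This is a cleaner treatment of the joint distribution than the paper's bare assertion, and it buys transparency about why a single $\delta_z'$ works simultaneously for all $t$.

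The one place your proposal is incomplete is exactly the step you defer as ``the real work'': equivalence, under infill asymptotics, of the laws of a Mat\'ern field observed with a nugget. You plan to prove this by hand via Feldman--H\'ajek together with matching of the $\|\omega\|^{-(2\nu+2)}$ spectral tails, taking care that the nugget pushes the laws off $L^2(\mS)$ onto a sequence space. That analysis is nontrivial, but it is also unnecessary: it is precisely the content of Theorem~2.1 of \cite{tang2021identifiability}, which covers the Mat\'ern-plus-nugget model on a bounded domain in dimension $d\le 3$ and gives equivalence when the nugget variances agree (here both equal $\sigma_*^2$) and the microergodic parameters agree---exactly the condition your choice of $\delta_z'$ enforces in every temporal mode. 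The paper's entire proof is essentially a citation of that theorem; splicing the same citation into your mode-by-mode reduction closes your argument completely, whereas re-deriving the operator and spectral estimates from scratch would leave you reproving a known result.
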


Lemma~\ref{lem:equivalence-t} implies that if the parameters are fixed at values different from the true (data generating) parameters, then the incorrectly specified model is equivalent to that with the true parameters with regard to the distribution of $y$. This extends Theorem~2.1 in \cite{tang2021identifiability}. Based on this, the following result on the error variance is available.

\begin{theorem}\label{thm:var}
Assume that the fixed parameters are $\phi'$ and $\delta_z'$, satisfying $\mathbb{P}'\equiv\mathbb{P}_*$, and that
\begin{equation}
\label{eq:space}
\max_{s \in \mS} \min_{1 \le i \le n} |s - s_i| \asymp n^{-\frac{1}{d}},
\end{equation}
where $a_n\asymp b_n$ means $a_n$ is bounded both above and below by $b_n$ asymptotically.
If we set $m_0 = 0$, $S_0 = K_{\phi'}(\chi)$, $n_{\sigma} < \infty$, $s_{\sigma} < \infty$ and $\alpha<\infty$ in the Mat\'ern model~\eqref{model: dlm_no_trend}, then the posterior distribution of $\sigma^2$ converges, as $n \to \infty$, to the degenerate distribution at $\sigma^2_{\ast}$, i.e.,
\begin{equation*}
    p(\sigma^2\mid \mD_{\chi_n,t} ) \rightsquigarrow  \delta(\sigma^2_*),\quad  \mathrm{ \mathbb{P}_*-a.s. }
\end{equation*}
where $\rightsquigarrow$ denotes weak convergence of $p(\sigma^2 \mid \mD_{\chi_n,t} )$ and $\delta(x)$ is Dirac measure at $x$.
\end{theorem}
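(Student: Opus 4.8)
The plan is to read the posterior of $\sigma^2$ directly off the Kalman recursions and reduce the assertion to a law-of-large-numbers statement for the accumulated standardized innovations. Specializing Proposition~\ref{prop:post} to \eqref{model: dlm_no_trend} with no trend ($\beta_t=0$), $F_t=I_n$, $G_t=\alpha I_n$ and innovation covariance $\sigma^2\delta_z'^2 K_{\phi'}(\chi_n)$, we obtain $\sigma^2\mid\mathcal{D}_{\chi_n,t}\sim IG(n_t/2,\,n_t s_t/2)$ with $n_t=n_\sigma+tn$ and $n_t s_t=n_\sigma s_\sigma+\sum_{\tau=1}^t e_\tau^{\T}Q_\tau^{-1}e_\tau$, where $e_\tau=y_\tau-f_\tau$ are the one-step prediction residuals and $Q_\tau$ their correlation-scale covariances. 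Thus the whole behaviour of the posterior is controlled by the single scalar $s_t$, and it suffices to prove that $s_t\to\sigma^2_*$ as $n\to\infty$, $\mathbb{P}_*$-a.s.

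Granting such a.s. convergence of $s_t$, the weak convergence to $\delta(\sigma^2_*)$ is a routine property of the Inverse-Gamma law. Fixing a data realization, the posterior is $IG(n_t/2,\,n_t s_t/2)$ with shape $n_t/2\to\infty$ and scale-to-shape ratio equal to $s_t\to\sigma^2_*$ (the hypotheses $n_\sigma,s_\sigma,\alpha<\infty$ ensure the prior and the dynamics contribute only $O(1)$). Its mean $\tfrac{n_t s_t/2}{n_t/2-1}\to\sigma^2_*$ and its variance $\tfrac{(n_t s_t/2)^2}{(n_t/2-1)^2(n_t/2-2)}$, which is of order $s_t^2/n_t\to0$, so Chebyshev's inequality applied to the posterior places all its mass arbitrarily near $\sigma^2_*$; hence $p(\sigma^2\mid\mathcal{D}_{\chi_n,t})\rightsquigarrow\delta(\sigma^2_*)$ for $\mathbb{P}_*$-almost every data sequence.

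The heart of the argument is $s_t\to\sigma^2_*$, and here I would exploit the equivalence supplied by Lemma~\ref{lem:equivalence-t}. First I would work under $\mathbb{P}'$, the law with parameters $(\sigma_*,\phi',\delta_z')$, under which the fitted covariance parameters coincide with the generating ones, so the model is correctly specified apart from the fixed true value $\sigma^2_*$. Because the kernel matrices factor $\sigma^2$ out, the innovations representation gives $e_\tau\sim N(0,\sigma^2_* Q_\tau)$ independent across $\tau$ (the prescribed $m_0=0$, $S_0=K_{\phi'}(\chi)$ aligns the prior of $z_0$ with the generating initial law, making this exact). Consequently $\sum_{\tau=1}^t e_\tau^{\T}Q_\tau^{-1}e_\tau\sim\sigma^2_*\chi^2_{tn}$, whence $s_t=(n_\sigma s_\sigma+\sigma^2_*\chi^2_{tn})/(n_\sigma+tn)$. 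A Laurent--Massart tail bound gives $\mathbb{P}'(|\chi^2_{tn}/(tn)-1|>\varepsilon)\le 2\exp(-c\,tn\,\varepsilon^2)$, which is summable in $n$, so Borel--Cantelli yields $\chi^2_{tn}/(tn)\to1$ and hence $s_t\to\sigma^2_*$, $\mathbb{P}'$-a.s. Finally, since $\mathbb{P}'\equiv\mathbb{P}_*$ by Lemma~\ref{lem:equivalence-t}, the two measures share null sets, so the event $\{s_t\to\sigma^2_*\}$ carrying $\mathbb{P}'$-probability one also carries $\mathbb{P}_*$-probability one, which closes the reduction.

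The main obstacle is the transfer step together with the exactness of the chi-squared representation: one must verify that under $\mathbb{P}'$ the prediction-error decomposition produces genuinely independent $N(0,\sigma^2_* Q_\tau)$ residuals, which hinges on $S_0=K_{\phi'}(\chi)$ matching the data-generating distribution of $z_0$, since any mismatch injects an $n$-dimensional perturbation that is not automatically negligible. A secondary point requiring care is that \eqref{eq:space}, the quasi-uniform infill design, is the regime under which the Gaussian-measure equivalence framework underlying Lemma~\ref{lem:equivalence-t} operates, so the whole $\mathbb{P}'\to\mathbb{P}_*$ passage is available only under that spacing; within the chi-squared step itself only $tn\to\infty$ is used. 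As a fallback bypassing the equivalence, I would estimate $s_t$ directly under $\mathbb{P}_*$ by writing $\sum_\tau e_\tau^{\T}Q_\tau^{-1}e_\tau=\sigma^2_*\,\xi^{\T}M\xi$ with $\xi\sim N(0,I)$ and $M=\Omega_*^{1/2}\Omega'^{-1}\Omega_*^{1/2}$ (here $\Omega_*,\Omega'$ are the stacked correlation-scale covariances under the true and fitted parameters), and use the Hilbert--Schmidt control $\sum_i(\lambda_i-1)^2<\infty$ from Feldman--Hájek to obtain $\tr(M)/n_t\to1$ and $\tr(M^2)/n_t^2\to0$; this reproves part of Lemma~\ref{lem:equivalence-t} by hand and is only worth keeping in reserve.
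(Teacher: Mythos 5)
Your high-level reduction (the posterior is $IG(n_t/2,\,n_ts_t/2)$, so it suffices to prove $s_t\to\sigma_*^2$ almost surely, after which Chebyshev's inequality gives the weak convergence to $\delta(\sigma_*^2)$) and your transfer step (prove the a.s. statement under $\mathbb{P}'$, then move it to $\mathbb{P}_*$ via the shared null sets guaranteed by Lemma~\ref{lem:equivalence-t}) both coincide with the paper. The genuine gap is in the heart of your argument: the claim that under $\mathbb{P}'$ the standardized innovations are exactly independent $N(0,\sigma_*^2Q_\tau)$, so that $\sum_\tau e_\tau^{\T}Q_\tau^{-1}e_\tau\sim\sigma_*^2\chi^2_{tn}$. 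This exactness fails in the paper's setting, for precisely the reason you flag as ``the main obstacle'': the prior $z_0\mid\sigma^2\sim N(0,\sigma^2K_{\phi'}(\chi))$ does \emph{not} coincide with the generating initial condition. In the paper's proof, $\mathbb{P}'$ assigns $\mathrm{Cov}(y_1)=\sigma_*^2\bigl(I_n+\delta_z'^2K_{\phi'}(\chi)\bigr)$ (the generating process effectively starts from $z_0=0$; compare the proof of Lemma~\ref{lem:equivalence-t}), whereas the filter's one-step covariance is $\sigma^2Q_1=\sigma^2\bigl(I_n+(\alpha^2+\delta_z'^2)K_{\phi'}(\chi)\bigr)$ because $W_0=S_0=K_{\phi'}(\chi)$. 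Hence $n_1s_1-n_0s_0=y_1^{\T}Q_1^{-1}y_1$ is not $\sigma_*^2\chi^2_n$ but a weighted sum $\sum_{i=1}^nA_iu_i^2$ with $A_i=\sigma_*^2(1+\delta_z'^2\lambda_i^{(n)})/(1+(\alpha^2+\delta_z'^2)\lambda_i^{(n)})\neq\sigma_*^2$, and the same mismatch propagates through $m_t$, $W_t$, $R_t$ at every later epoch, producing cross terms that must be shown to be $o(1)$. The paper deals with exactly this: it proceeds by induction on $t$, diagonalizes each quadratic form, invokes the eigenvalue decay $\lambda_i^{(n)}\le Cni^{-2\nu/d-1}$ from Corollary~2 in \cite{tang2021identifiability} --- which is where the design assumption \eqref{eq:space} enters --- to prove $\frac1n\sum_iA_i\to\sigma_*^2$ and to dispose of the cross terms, and then applies a Kolmogorov-type strong law (Lemma~\ref{lem:asmcon}) to get the almost-sure statement.

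This also means your accounting of where \eqref{eq:space} is used is backwards: the equivalence in Lemma~\ref{lem:equivalence-t} needs only boundedness of the domain (it is a statement about the continuous-parameter Mat\'ern laws), whereas the quadratic-form/LLN step --- which in your main line ``only uses $tn\to\infty$'' --- is exactly the place that requires quasi-uniform design, through the kernel-matrix eigenvalue bounds. Your fallback (write the accumulated quadratic form as $\sigma_*^2\,\xi^{\T}M\xi$ with $M=\Omega_*^{1/2}\Omega'^{-1}\Omega_*^{1/2}$ and control traces) is the right direction and is essentially the paper's route, but it cannot rest on the qualitative Feldman--H\'ajek Hilbert--Schmidt condition alone: the matrices $\Omega_*,\Omega'$ change with $n$ (a triangular array over designs $\chi_n$), so you need quantitative per-$n$ spectral bounds of the Tang-et-al.\ type, plus an argument like Lemma~\ref{lem:asmcon} to upgrade convergence of normalized traces to almost-sure convergence of the quadratic forms. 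Promote the fallback to the main argument and supply those two ingredients, and the proof closes; as written, your main line proves the theorem only for an idealized matched-initialization model, which is not the model of Theorem~\ref{thm:var}.
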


If the spatial locations are not overly concentrated within $\mS$, then, as the number of replicates increases, the posterior distribution of $\sigma^2$ degenerates to a point-mass distribution at the true parameter value \citep[formally referred to as posterior strong consistency,][]{ghosal2017fundamentals}. See the Appendix for details on the assumption in Theorem~\ref{thm:var}.

Turning to prediction at a new point $s_0\in \mS$, let $\tilde{z}_t(s_0)$ and $\tilde{y}_t(s_0)$ be predictive random variables at any given time $t$ and $Z_{tn}(s_0)$ be a random variable with density $p(\tilde{z}_t(s_0)\mid \mD_{\chi_n,t})$ and $Y_{tn}(s_0)$ be a random variable with density $p(\tilde{y}_t(s_0)\mid \mD_{\chi_n,t})$. Let $\mathbb{E}_*[\cdot]$ denote the expectation with respect to $\mathbb{P}_*$. The prediction errors $\mathbb{E}_* [(Z_{tn}(s_0)-\tilde{z}_t(s_0))^2]$ for the latent process and $\mathbb{E}_* [(Y_{tn}(s_0)-\tilde{y}_t(s_0))^2]$ for the response are of interest.

\begin{theorem}\label{thm:ypred}
With assumptions in Theorem~\ref{thm:var}, the following results hold for $t=1,\ldots,T$:
    \begin{equation*}
      \mathbb{E}_* [(Z_{tn}(s_0)-\tilde{z}_t(s_0))^2] = E_{n,t}^A + E_{n,t}^B~\mbox{  and  }~ \mathbb{E}_* \left[(Y_{tn}(s_0)-\tilde{y}_t(s_0))^2\right] - E_{n,t}^A - E_{n,t}^B \to 2\sigma_*^2~\mathrm{as}~ n\to \infty,
    \end{equation*}
    where $E_{n,t}^A = \sigma_*^2\left(\delta_z'^2h_\phi(\chi_n,s_0) + g(\chi_n, s_0)\right) + o(1)$, $h_\phi(\chi_n,s_0)$ and $g_\phi(\chi_n,s_0)$ are functions defined as $h_\phi(\chi_n,s_0) = 1- K_{\phi}(\chi_n,s_0)^{\T} K_{\phi}^{-1}(\chi_n)K_{\phi}(\chi_n,s_0)$ and $g(\chi_n, s_0) = K_{\phi}(\chi_n,s_0)^{\T} K_{\phi}^{-1}(\chi_n) R_{t,n}(R_{t,n}  + I_n)^{-2}R_{t,n} K_{\phi}^{-1}(\chi_n)K_{\phi}(\chi_n,s_0)^{\T}$, $K_\phi(\chi_n,s_0) = \{K_{\phi}(s,s_0)\}_{s\in\chi_n}$ is $n\times 1$, and
    $E_{n,t}^B= \mathbb{E}_* [(\tilde{z}_t(s_0)- K_{\phi}(\chi_n,s_0)^{\T} K_{\phi}^{-1}(\chi_n) {m}_t )^2]$ with $R_{t,n} = \alpha ^2 W_{t-1} + \delta_z^{'2} K_{\phi'} (\chi_n)$, $W_t = R_{t,n} - R_{t,n}^{\T}Q_{t,n}^{-1}R_{t,n}$ and $Q_{t,n} =  R_{t,n}  + I_n$.
\end{theorem}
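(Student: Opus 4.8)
The plan is to split each mean–squared prediction error into a \emph{posterior–spread} piece and a \emph{predictor–bias} piece, and then to read the two pieces off the one–step Kalman recursion of Proposition~\ref{prop:post} together with the kriging formula of Proposition~\ref{prop:pred-s}. Writing $\lambda=K_{\phi'}^{-1}(\chi_n)K_{\phi'}(\chi_n,s_0)$ for the kriging weights, a draw $Z_{tn}(s_0)$ from the posterior predictive $p(\tilde z_t(s_0)\mid\mathcal{D}_{\chi_n,t})$ has conditional mean $\lambda^{\T} m_t$ and, given the data, is independent of the data–generating value $\tilde z_t(s_0)$. Hence the cross term vanishes under $\mathbb{E}_*$ and
\begin{equation*}
\mathbb{E}_*[(Z_{tn}(s_0)-\tilde z_t(s_0))^2]=\mathbb{E}_*\big[\Var(Z_{tn}(s_0)\mid\mathcal{D}_{\chi_n,t})\big]+\mathbb{E}_*\big[(\lambda^{\T} m_t-\tilde z_t(s_0))^2\big],
\end{equation*}
the second term being exactly $E_{n,t}^B$ by definition, so it remains only to identify the first term with $E_{n,t}^A$.

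First I would compute the conditional predictive variance. By Proposition~\ref{prop:pred-s}, conditioning $z_t(s_0)$ on the state $z_t(\chi_n)$ through the spatial kernel contributes the interpolation variance $\sigma^2\delta_{z}'^2 h_\phi(\chi_n,s_0)$, while marginalizing the filtered state $z_t(\chi_n)\mid\sigma^2,\mathcal{D}_{\chi_n,t}\sim N(m_t,\sigma^2 W_t)$ through $\lambda$ contributes an estimation variance that is a quadratic form in the filtering covariance $W_t=R_{t,n}-R_{t,n}^{\T}Q_{t,n}^{-1}R_{t,n}$; substituting the Kalman–gain identities collapses this to $g(\chi_n,s_0)$, giving $\Var(Z_{tn}(s_0)\mid\mathcal{D}_{\chi_n,t})=\mathbb{E}[\sigma^2\mid\mathcal{D}_{\chi_n,t}]\,(\delta_z'^2 h_\phi(\chi_n,s_0)+g(\chi_n,s_0))$. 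The crucial simplification is that the bracketed factor depends only on $\chi_n$ and $s_0$ and not on the observed responses, so the expectation under $\mathbb{P}_*$ touches only the scalar $\mathbb{E}[\sigma^2\mid\mathcal{D}_{\chi_n,t}]$. Invoking Theorem~\ref{thm:var}, the inverse–Gamma posterior mean $n_t s_t/(n_t-2)\to\sigma_*^2$ $\mathbb{P}_*$–a.s., which delivers $E_{n,t}^A=\sigma_*^2(\delta_z'^2 h_\phi+g)+o(1)$ once the residual is controlled through boundedness of the kriging variance.

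For the response, decompose $\tilde y_t(s_0)=\tilde z_t(s_0)+\tilde\eta_t(s_0)$ and $Y_{tn}(s_0)=Z_{tn}(s_0)+\xi$, where $\tilde\eta_t(s_0)\sim N(0,\sigma_*^2)$ is the fresh measurement error attached to the new response and $\xi$ is the predictive measurement error with $\Var(\xi\mid\mathcal{D}_{\chi_n,t})=\mathbb{E}[\sigma^2\mid\mathcal{D}_{\chi_n,t}]$. Both are mean zero and independent of the latent discrepancy $Z_{tn}(s_0)-\tilde z_t(s_0)$ and of each other, so all cross terms vanish and
\begin{equation*}
\mathbb{E}_*[(Y_{tn}(s_0)-\tilde y_t(s_0))^2]=\mathbb{E}_*[(Z_{tn}(s_0)-\tilde z_t(s_0))^2]+\mathbb{E}_*[\xi^2]+\sigma_*^2.
\end{equation*}
Using $\mathbb{E}_*[\xi^2]=\mathbb{E}_*[\mathbb{E}[\sigma^2\mid\mathcal{D}_{\chi_n,t}]]\to\sigma_*^2$ again by Theorem~\ref{thm:var} and subtracting $E_{n,t}^A+E_{n,t}^B=\mathbb{E}_*[(Z_{tn}(s_0)-\tilde z_t(s_0))^2]$ yields the stated limit $2\sigma_*^2$, the two copies of $\sigma_*^2$ being the predictive noise and the true response noise.

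The main obstacle will be the passage from the weak/almost–sure concentration of $\sigma^2$ supplied by Theorem~\ref{thm:var} to convergence of the $\mathbb{P}_*$–expectations entering $E_{n,t}^A$ and $\mathbb{E}_*[\xi^2]$: this requires a uniform–integrability argument for $\mathbb{E}[\sigma^2\mid\mathcal{D}_{\chi_n,t}]=n_t s_t/(n_t-2)$, for which I would bound the scale $s_t$ using the equivalence $\mathbb{P}'\equiv\mathbb{P}_*$ of Lemma~\ref{lem:equivalence-t} and the space–filling condition~\eqref{eq:space}. A secondary bookkeeping obstacle is verifying that the filtering recursion keeps every relevant cross–covariance proportional to $K_{\phi'}$, so that the kriging weights $\lambda$ are identical at every epoch and the estimation–variance quadratic form reduces cleanly to $g$; this is precisely where the absence of a trend ($\beta_t=0$) and the pointwise–in–space AR(1) dynamics of~\eqref{model: dlm_no_trend} are exploited.
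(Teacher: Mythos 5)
Your overall skeleton agrees with the paper's proof: the same orthogonal decomposition of $\mathbb{E}_*[(Z_{tn}(s_0)-\tilde z_t(s_0))^2]$ into posterior spread plus bias (with the cross term killed by conditional independence given the data), the same treatment of the response error (predictive measurement noise plus fresh true noise supplying the extra $2\sigma_*^2$), and the same appeal to Theorem~\ref{thm:var} to replace the posterior scale by $\sigma_*^2$. The uniform-integrability caveat you raise in passing is a fair technical point that the paper's own proof glosses over.

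The genuine gap is in your identification of the spread term with $E_{n,t}^A$. You run the law of total variance through the Kalman filtering posterior of the latent state, $z_t(\chi_n)\mid\sigma^2,\mD_{\chi_n,t}\sim N(m_t,\sigma^2 W_t)$, which gives $\Var(Z_{tn}(s_0)\mid\mD_{\chi_n,t},\sigma^2)=\sigma^2\bigl(\delta_z'^2 h_\phi(\chi_n,s_0)+\lambda^{\T}W_t\lambda\bigr)$ with $\lambda=K_{\phi}^{-1}(\chi_n)K_{\phi}(\chi_n,s_0)$, and you then assert that ``Kalman-gain identities'' collapse $\lambda^{\T}W_t\lambda$ to $g(\chi_n,s_0)$. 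No such identity exists. Since $Q_{t,n}=R_{t,n}+I_n$, one has $W_t=R_{t,n}-R_{t,n}Q_{t,n}^{-1}R_{t,n}=R_{t,n}(R_{t,n}+I_n)^{-1}$, whereas the kernel appearing in $g$ is $R_{t,n}(R_{t,n}+I_n)^{-2}R_{t,n}=W_t^2$. The discrepancy $W_t-W_t^2=R_{t,n}(R_{t,n}+I_n)^{-2}$ is positive definite, and the quadratic form $\lambda^{\T}R_{t,n}(R_{t,n}+I_n)^{-2}\lambda$ is not $o(1)$ in general: nothing in your argument, nor in the space-filling condition~\eqref{eq:space}, controls it, and the paper itself remarks that $E_{n,t}^A$ is analytically intractable. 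The paper reaches $g$ by a different conditioning scheme: it kriges from the filter mean, $Z_{tn}(s_0)\mid\mD_{\chi_n,t},\sigma^2,m_t\sim N(\lambda^{\T}m_t,\ \sigma^2\delta_z'^2 h_\phi(\chi_n,s_0))$, so the inner variance carries no $W_t$ contribution at all, and the outer layer of the total-variance formula is taken over the sampling fluctuation of the filter mean itself, $m_t\mid\sigma^2,f_t,R_{t,n}\sim N(f_t,\ \sigma^2 R_{t,n}(R_{t,n}+I_n)^{-2}R_{t,n})$, which produces exactly $g$. Your conditioning (posterior of the state given data) and the paper's (distribution of $m_t$ around $f_t$) are not interchangeable — yours yields a strictly larger quadratic form — so as written your route cannot terminate at the stated expression for $E_{n,t}^A$ without an additional, and currently missing, argument that $\lambda^{\T}R_{t,n}(R_{t,n}+I_n)^{-2}\lambda$ vanishes asymptotically.
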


The initial finding relates to estimates of spatial-temporal effects. The estimation error at time $t$ is decomposed by $E_{n,t}^A$ and $E_{n,t}^B$, where $E_{n,t}^A$ is the variance term with its asymptotic form expressible without expectations, and $E_{n,t}^B$ is a bias term representing the difference between a true random variable and linear predictor by filtering $m_t$. The subsequent result indicates that the prediction error is articulated in relation to the measurement error scale and the two terms. As the number of measurement points increases while the observed area remains fixed, more points are close to the new points. Consequently, prediction accuracy is enhanced at new points, and $E_{n,t}^A$ and $E_{n,t}^B$ become small. An exploration of the convergence of $E_{n,t}^B$ in a limited scenario is presented in the Appendix. Since $E_{n,t}^A$ is challenging to analyze theoretically, we present numerical experiments on the decay of $E_{n,t}^A$ in the Appendix. We develop the following result concerning stacking.

\begin{theorem}\label{thm:stack} 
Under the assumptions in Theorem~\ref{thm:var}, if $E_{n,t}^B\to 0$ as $n\to \infty$, then
    \begin{equation*}
        \mathbb{E}_* \left[\left(\tilde{y}_t(s_0) - \sum_{g=1}^G a_g \mathbb{E}_g[ \tilde{y}_t(s_0)\mid \mD_{\chi_n,t} ] \right)^2\right] \to \sigma_*^2 , \quad  t=1,\ldots,T,
    \end{equation*}
where $\{a_g\}_{g=1}^G$ satisfies $\sum_{g=1}^G a_g=1$.
\end{theorem}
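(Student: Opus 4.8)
The plan is to reduce the stacked prediction error to the single-model bias terms $E_{n,t}^B$ that already appear in Theorem~\ref{thm:ypred} and then invoke the hypothesis $E_{n,t}^B\to 0$. First I would isolate the irreducible measurement error. Writing the true new observation as $\tilde{y}_t(s_0)=\tilde{z}_t(s_0)+\tilde{\eta}_t(s_0)$ with $\tilde{\eta}_t(s_0)\sim N(0,\sigma_*^2)$ independent of both $\tilde{z}_t(s_0)$ and the data $\mD_{\chi_n,t}$, and noting that the stacked predictor $\hat{y}:=\sum_{g=1}^G a_g\,\mathbb{E}_g[\tilde{y}_t(s_0)\mid \mD_{\chi_n,t}]$ is $\mD_{\chi_n,t}$-measurable, the cross term $\mathbb{E}_*[\tilde{\eta}_t(s_0)(\tilde{z}_t(s_0)-\hat{y})]$ vanishes by independence. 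Hence
\begin{equation*}
\mathbb{E}_*\!\left[(\tilde{y}_t(s_0)-\hat{y})^2\right]=\mathbb{E}_*\!\left[(\tilde{z}_t(s_0)-\hat{y})^2\right]+\sigma_*^2,
\end{equation*}
and it remains to show that the first term tends to zero.

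Next I would identify each model's predictive mean with the corresponding kriging predictor. Because the measurement error has mean zero under every candidate model, $\mathbb{E}_g[\tilde{y}_t(s_0)\mid \mD_{\chi_n,t}]=\mathbb{E}_g[\tilde{z}_t(s_0)\mid \mD_{\chi_n,t}]=K_{\phi_g}(\chi_n,s_0)^{\T}K_{\phi_g}^{-1}(\chi_n)\,m_{t,g}$, which is exactly the quantity subtracted from $\tilde{z}_t(s_0)$ in the definition of $E_{n,t}^B$ in Theorem~\ref{thm:ypred}. Therefore $\mathbb{E}_*[(\tilde{z}_t(s_0)-\mathbb{E}_g[\tilde{z}_t(s_0)\mid \mD_{\chi_n,t}])^2]$ is precisely the bias term, which I will write $E_{n,t}^{B,(g)}$, for model $g$. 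The point is that stacking here averages predictive \emph{means}, so the single-model quantity that governs the error is the bias term $E_{n,t}^B$ alone, not the full predictive mean-squared error $E_{n,t}^A+E_{n,t}^B$ of Theorem~\ref{thm:ypred}.

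Finally I would combine the per-model errors. Using $\sum_g a_g=1$ one writes $\tilde{z}_t(s_0)-\hat{y}=\sum_g a_g\bigl(\tilde{z}_t(s_0)-\mathbb{E}_g[\tilde{z}_t(s_0)\mid \mD_{\chi_n,t}]\bigr)$. Since the stacking weights lie in the simplex $\Delta$, a pointwise application of Jensen's inequality (convexity of $x\mapsto x^2$) followed by the bound $a_g\le 1$ gives
\begin{equation*}
\mathbb{E}_*\!\left[(\tilde{z}_t(s_0)-\hat{y})^2\right]\le \mathbb{E}_*\!\left[\sum_{g=1}^G a_g\bigl(\tilde{z}_t(s_0)-\mathbb{E}_g[\tilde{z}_t(s_0)\mid \mD_{\chi_n,t}]\bigr)^2\right]\le \sum_{g=1}^G E_{n,t}^{B,(g)}.
\end{equation*}
As $G$ is finite and each $E_{n,t}^{B,(g)}\to 0$ by hypothesis, the right-hand side vanishes, so $\mathbb{E}_*[(\tilde{y}_t(s_0)-\hat{y})^2]\to \sigma_*^2$, as claimed.

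The main obstacle is not analytic but bookkeeping: (i) one must verify that the relevant single-model error is the \emph{bias} term $E_{n,t}^B$ rather than the full predictive MSE, which hinges on the fact that a mean predictor carries no posterior-variance inflation; and (ii) the learned weights $a_g$ are themselves functions of the validation data and hence random and correlated with the residuals $c_g:=\tilde{z}_t(s_0)-\mathbb{E}_g[\tilde{z}_t(s_0)\mid \mD_{\chi_n,t}]$, so one cannot factor $\mathbb{E}_*[a_g c_g^2]$. Confinement of the weights to the simplex is exactly what permits the pointwise bound $a_g\le 1$ that dominates the random weights by $1$ and lets the finite convex combination of vanishing terms vanish uniformly. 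The genuine spatial-infill content (the decay of $E_{n,t}^B$) is assumed, so no further estimates are needed here.
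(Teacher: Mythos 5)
Your proposal is correct and follows essentially the same route as the paper: isolate the irreducible noise variance $\sigma_*^2$ via independence of the measurement error from the data-measurable stacked predictor, observe that each model's predictive mean for $\tilde{y}_t(s_0)$ equals its mean for $\tilde{z}_t(s_0)$ so that (using $\sum_g a_g = 1$) the stacked residual is a combination of per-model latent biases, and bound the squared combination by a finite sum of bias terms $E_{n,t}^B$ that vanish by hypothesis. The only difference is the final elementary inequality: the paper applies Cauchy--Schwarz, yielding the bound $G\left(\sum_g a_g^2\right)E_{n,t}^B$, which is valid for any real weights summing to one, whereas your Jensen-plus-$a_g\le 1$ step additionally requires $a_g\ge 0$ --- harmless in context since stacking weights are confined to the simplex, but strictly narrower than the theorem as stated, which assumes only $\sum_{g=1}^G a_g = 1$.
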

This theorem validates stacking as an inferential procedure. 
This is an asymptotic result so for finite samples the estimator of the process exhibits bias which can diminish predictive accuracy. Model averaging enhances stability and, in the context of statistical learning, the convex hull increases the flexibility of the model (or reduces the training error) without increasing its Rademacher complexity, i.e., the generalization gap. \citep[See Chapter~6 in][for details]{mohri2018foundations}. Hence, stacked inference delivers better predictive performance than a single model \citep[see][for further theoretical results]{vanderLaan2007}.

\section{Simulation}\label{section:simulation}
We illustrate implementation and inferential effectiveness of our proposed methods with numerical experiments. Section~\ref{sec:infill} explores in-fill prediction on a continuous trajectory, while Section \ref{sec:sim_proposed} illustrates the performance of our proposed methods. 

\begin{figure}[t]
\centering
\includegraphics[width=0.85\textwidth]{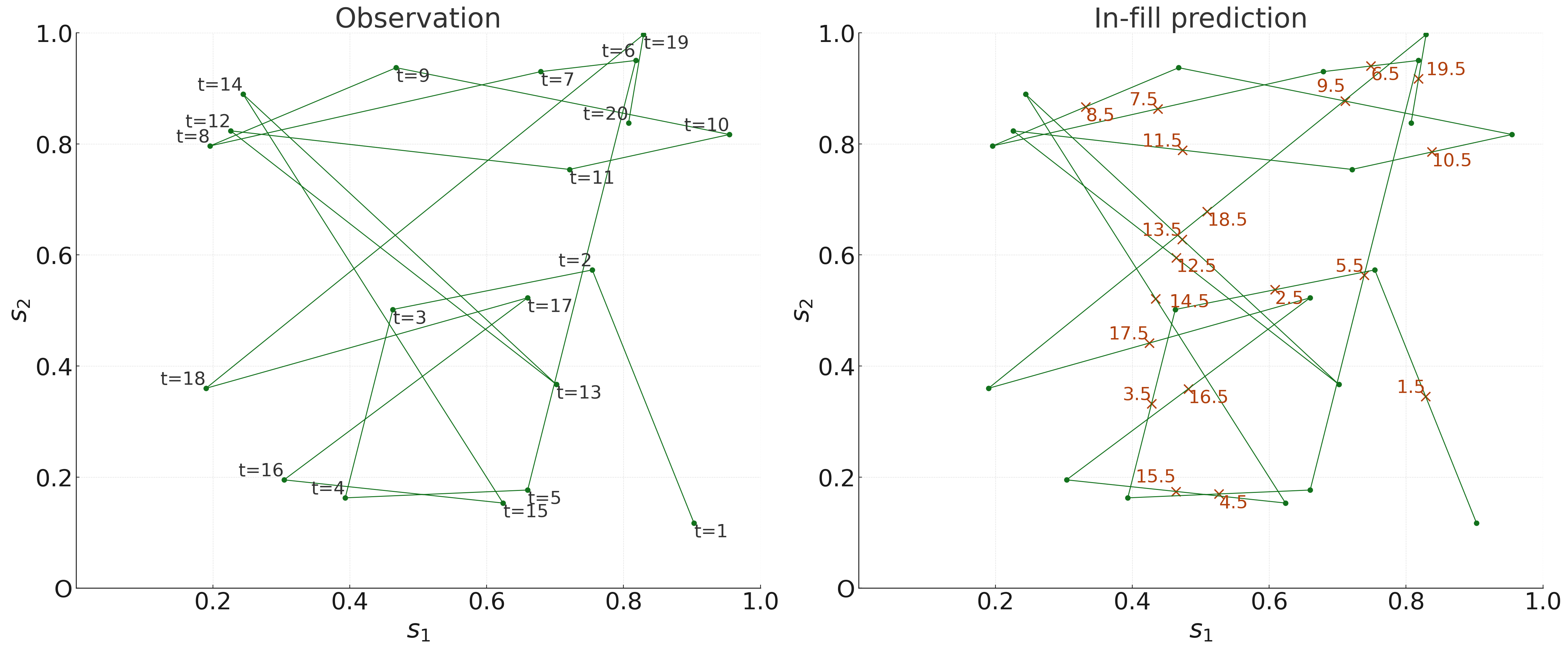}
\caption{Left: observation points and trajectory (green line). Right: unobserved points (red cross marks) on the trajectory, where forecasts are made. \label{fig:trajectory-continuous} }
\end{figure}

\subsection{In-fill prediction} \label{sec:infill}

We consider a single trajectory defined by $\gamma:\mT\to \mathbb{R}^2$, where $\gamma(t)$ over time interval $\mT$. Because data are observed at discrete points on this trajectory, the accuracy of in-fill prediction is expected to improve as the number of discrete observed points increases. We consider interpolation of the outcome over a space-time trajectory composed of line segments. As an example of in-fill prediction, the conceptual diagram in Figure~\ref{fig:trajectory-continuous} depicts $T=20$ observations in the interval $\mT= [1,20]$. The left panel displays the locations at each time point with the line segments comprising the trajectory shown by a green line. The right panel illustrates the space-time coordinates where we seek to predict the outcome.

We generate data along a trajectory comprising a fixed domain in accordance with \eqref{model:reg_con}--\eqref{model:t_process}. We generate $300$ points on the trajectory over $t\in \mT=[1,300]$ using $\gamma(t) = \gamma(t-1) + \mathcal{WN}(0,1)$, where $\mathcal{WN}(0,1)$ denotes white noise with zero mean and unit variance. We randomly include $n$ space-time coordinates over the trajectory and generate $n$ values of $z(\gamma(t),t)$ from the Gaussian process with $0$ mean and covariance kernel as \eqref{model:st_process}. The parameters defining the spatial-temporal process $z(\gamma(t),t)$ in \eqref{model:st_process} are set to $\phi_1=1/2$ and $\phi_2=1/2$. We then generate $y(\gamma(t),t)$ from \eqref{model:reg_con} using elements of $x(\gamma(t),t)$ generated from $N(0,4)$ and $\beta_j(t)$ using a zero-centered Gaussian process with covariance kernel in \eqref{model:t_process} specified by $\xi=1/2$. Additionally, we set $\sigma=1$, $\delta_{\beta}=1$, and $\delta_z=1$.

For our experiments, we include $n=20, 40,\ldots, 200$ points for training the data. Based on these observations, we evaluate performance of interpolation over trajectories by comparing the posterior predictive means of $y(\gamma(t),t)$ and $z(\gamma(t),t)$ over $100$ randomly selected points on the trajectory that were excluded from the training data. For predictive stacking, a set of candidate parameters is specified with $\phi_i \in \{1, 1/5\}$ for $i=1,2$ in \eqref{eq: stkernel}, $\xi\in \{1,1/5\}$ in \eqref{model:t_process}, and $\{3,1/3\}$ for both $\delta_{\beta}$ and $\delta_z$ in \eqref{model:st_process}~and~\eqref{model:t_process}. We employ $20$-fold cross-validation to obtain the stacking weights described in Sections~\ref{sec: stacking_means}~and~\ref{sec: stacking_dist} and use posterior estimates drawn from \eqref{eq: poterior density_stacked}. Our subsequent posterior summaries refer to \eqref{eq: poterior density_stacked}. Furthermore, we implement $M_0$, an oracle method with the true parameters assigned and Bayesian model averaging (BMA, \cite{Hoeting1999}) with a uniform prior on candidate models, which yields a weighted average of multivariate t-distributions; see the Appendix for details on BMA. As measures of performance, we adopted three metrics: mean squared prediction error (MSPE) and mean squared error for $z$ (MSE$z$) for stacking of means, and mean log predictive density (MLPD) for predictive stacking of distributions.

\begin{figure}[t]
    \centering
    \includegraphics[width=\textwidth]{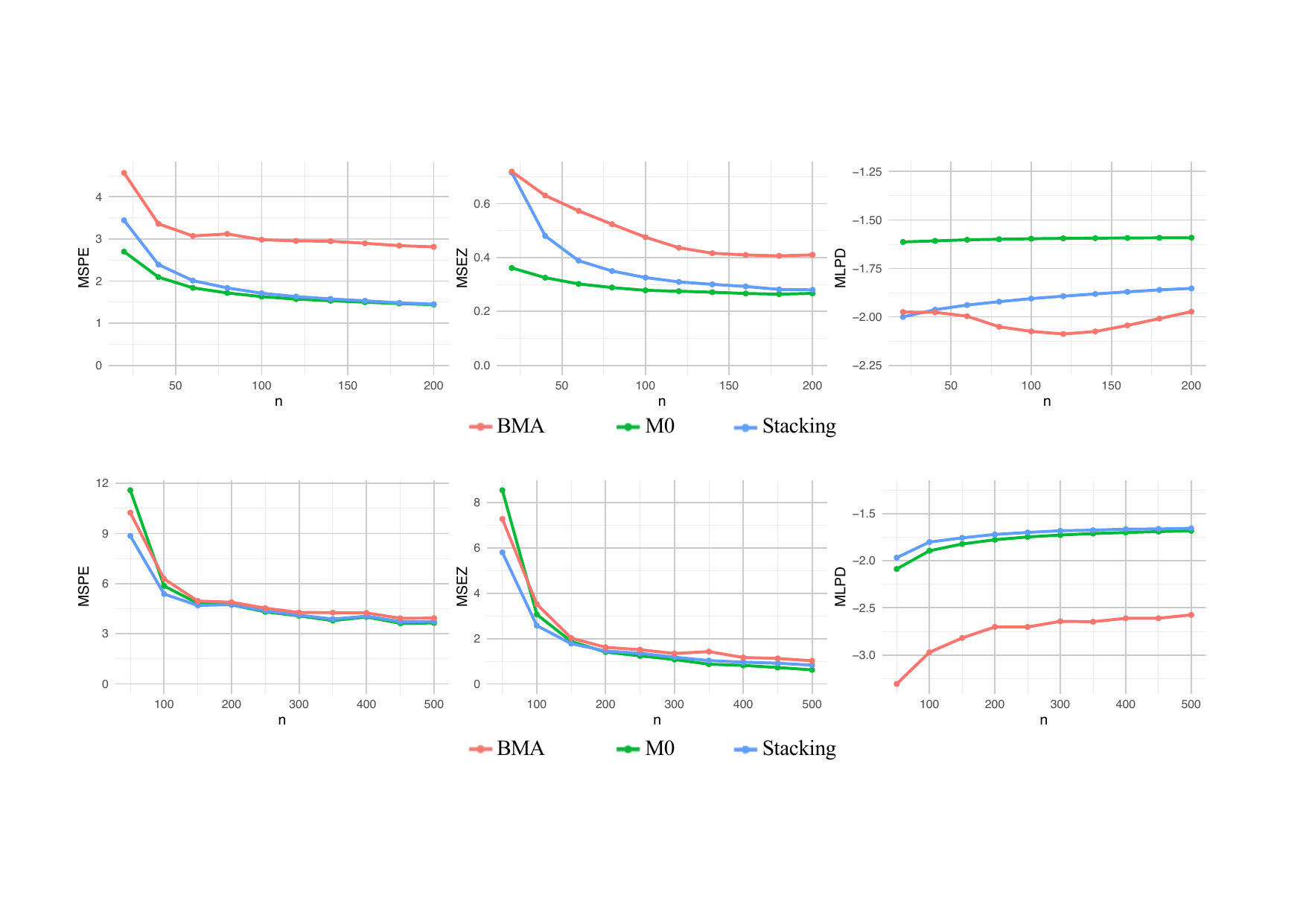}
    \caption{Predictive performances of stacking, BMA and $M_0$ (method with the oracle parameters) when $n$ grows from $20$ to $200$. Left: MSPE (mean squared prediction error). Middle: MSE$z$ (mean squared error for $z$). Right: MLPD (mean log predictive density). Stacking outperforms BMA and approaches the oracle more rapidly with increasing $n$. \label{fig:infill-continuous} }
\end{figure}

Figure~\ref{fig:infill-continuous} illustrates the overall predictive error behavior, where we generated 50 different datasets and report the average of the aforementioned metrics over the different datasets. The left and center panels, which plot MSPE and MSE$z$, respectively, demonstrate that an increasing number of points in the training data ($n$) over the fixed domain (trajectory) enhances the precision of predictions for outcomes and spatial effects. Similarly, the right panel reveals improvement in predictive accuracy in terms of MLPD as $n$ increases. Notably, stacking significantly outperforms BMA as its metrics approach those for the oracle model more rapidly with increasing $n$. While results established in Theorems~\ref{thm:var}--\ref{thm:stack} apply to Euclidean domains for theoretical tractability, our empirical findings on non-Euclidean trajectories in this experiment still appear to be consistent with those theoretical results.

\subsection{Estimation of proposed models}\label{sec:sim_proposed}

We conduct simulation experiments to assess discrete- and continuous-time trajectory models, comparing their performance in terms of estimation errors and model fitting. First, we sample data using the discrete time trajectory in \eqref{model:condis}--\eqref{model:lin_sys_dis}. We generate $n=T=50$ and $70$ points for two experiments on a trajectory using the same random walk model for $\gamma(t)$ as in Section~\ref{sec:infill}. We let $p=2$ in \eqref{model:condis} and generate values for the elements of ${\beta}_0$ and ${z}_0$ from $N({0},4)$. With these fixed values, we sequentially generate ${\beta}_t$ and ${z}_t$ using the autoregressive model (Section~\ref{sec:con-dis}) with $\delta_{\beta} = 1$, $\delta_z = 1$, $\sigma = 1$ and $K_{\phi}$ taken as the Mat\'ern kernel introduced in Section~\ref{sec: poscon} with $\phi = 1/7$ and $\nu = 1$. Each element of ${x}_t(\gamma(t))$ is generated from $N(0,4)$ and fixed thereafter. Then, $y_t(\gamma(t))$ is generated using \eqref{model:condis}.

We generate $30$ different datasets and analyze each of them using the models in \eqref{model:lin_sys_dis}~and~\eqref{model:lin_sys_con}. For predictive stacking, we set candidate parameters in \eqref{model:lin_sys_dis} as $\phi\in \{1, 1/10\}$, $\nu\in \{3, 1/3\}$ and $\{5,1/5\}$ for both $\delta_{\beta}$ and $\delta_z$. For \eqref{model:lin_sys_con}, we consider $\phi_i\in \{3, 1/10\}$ for $i=1,2$, $\xi\in \{3, 1/10\}$, and $\{3,1/3\}$ for $\delta_{\beta}$ and $\delta_z$. We employ $20$-fold expanding window cross-validation and select the stacking weights described in Sections~\ref{sec: stacking_means}~and~\ref{sec: stacking_dist}. We also estimate a non-spatial dynamic linear model (NSDLM),
\begin{equation*}
    y_t = {x}_t^{\top} {\beta}_t + \eta_{1t} , \quad \eta_{1t}\overset{\mathrm{i.i.d.}}{\sim} N({0},\sigma^2); \qquad   {\beta}_t = {\beta}_{t-1}  + \eta_{2t} , \quad \eta_{2t}\overset{\mathrm{i.i.d.}}{\sim} N({0},W ).
\end{equation*}
We assign priors ${\beta}_0\sim N({0},10I_2)$, $W\sim IW(10,10I_2)$ and $\sigma\sim IG(1/2,1/2)$. This model solely captures the temporal structure of the data without accounting for spatial locations.

We use MSE to compare posterior predictive means for $y_{T+1}$ with underlying signals in \eqref{model:condis}. We use relative mean squared errors (rMSE) defined as $\sum_{t=1}^T (\hat{\omega}_t - \omega_t)^2/(\sum_{t=1}^T\omega_t^2)$, where $\hat{\omega}_t$ is the posterior mean and $\omega_t$ is the true value, to compare our inferential effectiveness, We apply rMSE to each element of $\beta_t$ and $z_t$, while we use $(\hat{\sigma}-\sigma)^2/\sigma^2$ for $\sigma$. To evaluate model fit, we report MLPD, the deviance information criterion \citep[DIC,][]{spiegelhalter2002bayesian} and the widely applicable information criterion \citep[WAIC,][]{watanabe2010asymptotic}.

\begin{table}[t]
\caption{Comparison of methods for data generated by discrete-time trajectory models. MSE$y$ denotes the mean squared error of the response variable $y$, rMSE$z$ is the relative MSE (rMSE) of the spatial-temporal effect $z$, rMSE$\beta_1$ and rMSE$\beta_2$ are the rMSE of the two slope coefficients, respectively, and rMSE$\sigma$ is the rMSE of the $\sigma$. 
\label{tab:sim_comparison1}}
\centering
\begin{tabular}{lccccccc}
\toprule
& \multicolumn{3}{c}{\textbf{n=50}} & \multicolumn{3}{c}{\textbf{n=70}} \\
\cmidrule(lr){2-4} \cmidrule(lr){5-7}
\textbf{Metrics} & \textbf{Discrete} & \textbf{Continuous} & \textbf{NSDLM} & \textbf{Discrete} & \textbf{Continuous} & \textbf{NSDLM} \\
\midrule
MSE$y$ & 1.038 & 1.028 & 2.171 & 0.996 & 0.989 & 2.163 \\
rMSE$z$ & 0.761 & 0.908 & — & 0.766 & 0.915 & — \\
rMSE$\beta_1$ & 0.166 & 0.576 & 1.709 & 0.197 & 0.620 & 2.042 \\
rMSE$\beta_2$ & 0.201 & 0.648 & 1.698 & 0.142 & 0.570 & 1.651 \\
rMSE$\sigma$ & 0.564 & 0.583 & 0.295 & 0.208 & 0.227 & 0.298 \\
MLPD & -1.793 & -1.772 & -1.774 & -1.784 & -1.743 & -1.784 \\
DIC & 50.682 & 151.500 & 417.482 & 62.454 & 216.382& 586.966 \\
WAIC & 43.674 & 225.633 & 1909.787 & 80.012 & 269.553& 2058.120 \\
\bottomrule
\end{tabular}
\end{table}

Table~\ref{tab:sim_comparison1} compares the three models with regard to predictions for datasets generated by the discrete-time model. The entries show the average values of the metrics over $30$ datasets. The discrete and continuous trajectory models outperform NSDLM, as the latter fails to capture spatial structure. We note insignificant differences between the discrete and continuous time models with respect to MSE of $y$ (MSE$y$) and the rMSE of $z$ (rMSE$z$). This indicates that both methods are nearly comparable in their ability to evaluate the true signal and spatial-temporal effects in this dataset and indicates the competitiveness of the continuous model. However, the discrete model outperforms the continuous model in terms of rMSE of the regression coefficients in $\beta$ (rMSE$\beta_1$ and rMSE$\beta_2$), where the discrete model unsurprisingly captures its own data generating process better than the continuous model. For model evaluation, the discrete-time model excels in terms of DIC and WAIC, although the MLPDs are nearly equal across all the models.

\begin{figure}[t]
\centering
\includegraphics[width=0.8\textwidth]{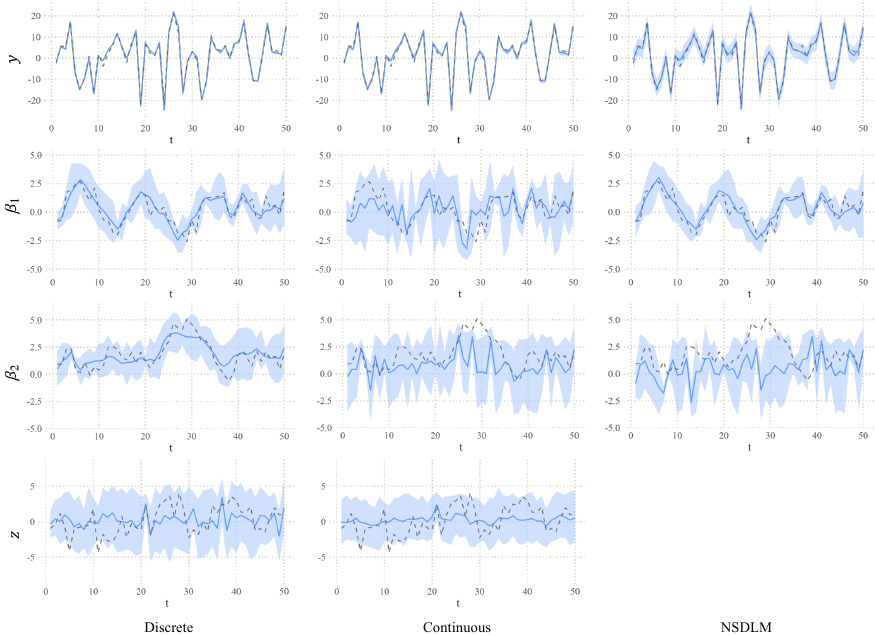}
\caption{Posterior bands of parameters obtained by discrete time, continuous time and NSDLM models for one representative dataset generated from a discrete time trajectory model with $n=50$. Each panel represents the posterior summaries for $y$, $z$ and $\beta$, with solid blue lines representing posterior means, and the shaded blue bands representing the 95\% credible intervals, and dashed lines representing true values.}\label{fig:sim1}
\end{figure}

Figure~\ref{fig:sim1} presents the posterior means (solid curve) for $y_t(\gamma(t))$, $z_t(\gamma(t))$ and two regression coefficients in each $\beta_t$ as a function of time for one representative dataset with $n=50$ in the simulation experiment. Also shown are the true values generating the data in the form of a dashed line. The posterior bands from the discrete model are effective in containing the dashed line (truth) with the continuous model and NSDLM still performing fairly well but missing the truth for the second coefficient in $\beta$.

Next, we generate $30$ datasets from the continuous-time trajectory model described in \eqref{model:reg_con}--\eqref{model:t_process}. We generate $n=T=50$ and $70$ points on the trajectory by the same random walk as in Section~\ref{sec:infill}. The true parameters for spatial-temporal process $z(\gamma(t),t)$ in \eqref{model:st_process} are $\phi_1 = 1/2, \phi_2 = 1/2$ and $\delta_z = 1$. Then, we produce $y$ from \eqref{model:reg_con} using $\sigma = 1$, each $\beta_j$ generated from \eqref{model:t_process} with $\xi=1/2$, $\delta_{\beta} = 1$ and elements of $x(\gamma(t),t)$ generated from $N(0,4)$.

We analyze each of the $30$ datasets using continuous time and discrete time trajectory models. For predictive stacking, we use $\phi\in \{2, 1/2\}$, $\nu\in\{2, 1/2\}$, $\delta_{\beta}\in\{1/2,1/10\}$ and $\delta_z\in\{1/2,1/10\}$ for the discrete model in \eqref{model:lin_sys_dis}  . For the continuous model in \eqref{model:lin_sys_con}, we set $\phi_i\in\{1, 1/4 \}$ for $i=1,2$, $\xi\in \{1,1/4\}$, and $\delta_{\beta}$ and $\delta_{z}$ in $\{3,1/3\}$. We employ $20$-fold expanding window cross-validation to determine the stacking weights as earlier.

\begin{table}[t]
\caption{Comparison of methods for data generated by continuous-time trajectory models. MSE$y$ represents the mean squared error of the denoised response variable $y$, rMSE$z$ the relative MSE (rMSE) of the spatial-temporal effect $z$, rMSE$\beta_1$ and rMSE$\beta_2$ the rMSE of the coefficient $\beta_1$ and $\beta_2$, respectively, and rMSE$\sigma$ the rMSE of the $\sigma$. \label{tab:sim_comparison2}}
\centering
\begin{tabular}{lccccccc}
\toprule
& \multicolumn{3}{c}{\textbf{n=50}} & \multicolumn{3}{c}{\textbf{n=70}} \\
\cmidrule(lr){2-4} \cmidrule(lr){5-7}
\textbf{Metrics} & \textbf{Discrete} & \textbf{Continuous} & \textbf{NSDLM} & \textbf{Discrete} & \textbf{Continuous} & \textbf{NSDLM} \\
\midrule
MSE$y$ & 0.810 & 0.725 & 0.997 & 0.863 & 0.674 & 0.902 \\
rMSE$z$ & 0.508 & 0.311 & — & 0.434 & 0.264 & — \\
rMSE$\beta_1$ & 0.088 & 0.093 & 2.737 & 0.109 & 0.134 & 6.282 \\
rMSE$\beta_2$ & 0.078 & 0.150 & 3.041 & 0.103 & 0.075 & 1.914 \\
rMSE$\sigma$  & 0.346 & 0.498 & 0.043 & 0.480 & 0.506 & 0.042 \\
MLPD & -1.279 & -1.272 & -1.414 & -1.350 & -1.213 & -1.394 \\
DIC & 139.965 & 124.344 & 222.319 & 177.520 & 173.052 & 284.848 \\
WAIC & 132.214 & 148.231 & 304.411 & 167.744 & 213.590 & 385.777 \\
\bottomrule
\end{tabular}
\end{table}

The results of the predictions by the three methods are presented in Table~\ref{tab:sim_comparison2}. As in Table~\ref{tab:sim_comparison1}, the continuous and discrete time models outperform NSDLM. The accuracy of the response variable $y$ and spatial-temporal effect $z$ indicate that the continuous model performs better than the discrete model. Regarding the model fitting, again as seen in Table~\ref{tab:sim_comparison1}, we see that both the continuous and discrete time models excel over NSDLM.

\begin{figure}[t]
\centering
\includegraphics[width=0.8\textwidth]{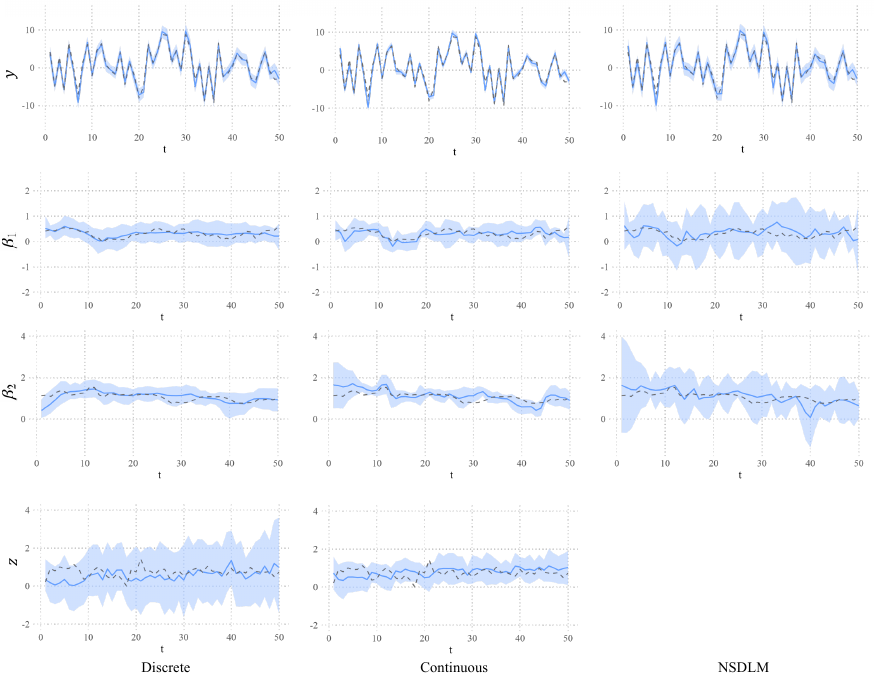}
\caption{Posterior distributions of parameters from discrete time, continuous time and NSDLM models for one representative dataset generated from a continuous time trajectory model. 
Solid blue lines are posterior means, shaded blue areas represent 95\% credible intervals, and dashed lines represent true values. }\label{fig:sim2}
\end{figure}

Figure~\ref{fig:sim2} is analogous to Figure~\ref{fig:sim1} for a representative dataset from the continuous time experiment showing posterior mean and 95\% intervals for $y(\gamma(t),t)$, $z(\gamma(t),t)$ and two slopes as a function of time. While all three methods seem comparable in capturing the truth (dashed line), the precision for the continuous model is higher than the other two models with NSDLM clearly having the widest bands for the regression slopes.  

\begin{table}[t]
\caption{Posterior summary of $\sigma^2$ of each method on each scenario when $n=50$. The true value of $\sigma^2$ is $1$. Both $\delta_z$ and $\delta_{\beta}$ are fixed to the true values.}
\centering
\begin{tabular}{@{}lcc@{}}
\toprule
\multirow{2}{*}{Method} & \multicolumn{2}{c}{Data Generating Process} \\
\cmidrule(l){2-3}
& Discrete & Continuous \\
\midrule
Discrete & $0.80$ (${0.54},{1.22}$) & $0.10$ ($0.06, 0.16$) \\
Continuous & $8.77$ ($2.04, 18.51$) & $1.02$ (${0.74}, {1.48}$) \\
\bottomrule
\end{tabular}
\label{table:sig}
\end{table}

Finally, we turn to the stacked posterior inference for $\sigma^2$ in Table~\ref{table:sig}. We present the posterior mean and the $95\%$ credible intervals obtained from one representative dataset generated by the discrete-time model and another from the continuous-time model when both $\delta_{\beta}$ and $\delta_z$ are fixed at their true values and $n=50$. The $2\times 2$ table presents the stacked estimates for $\sigma^2$ when each of these models is estimated from the two representative datasets. We find that the credible intervals from the discrete and continuous time models are able to capture the true value ($\sigma=1$) for the data generated from them, which seems to be consistent with the result in Theorem~\ref{thm:var}, but may not be able to capture the true value of $\sigma$ when they are not the data generating model. Furthermore, if $\delta_\beta$ and $\delta_z$ are not specified at their true values, inference for $\sigma^2$ suffers. This phenomenon has also been investigated by \cite{zhang2023exact} and is largely attributable to the fact that $\sigma^2$ is not consistently estimable in Gaussian process models \citep{zhang2004inconsistent, tang2021identifiability}.  

\section{Application}\label{section:application}
We apply \eqref{model:reg_con} to an actigraph dataset from the Physical Activity through Sustainable Transport Approaches in Los Angeles (PASTA-LA) described in \cite{alaimo2023bayesian}. Actigraph data are collected through wearable devices including sensors and a smartphone application, providing high-resolution and repeatable measurements for monitoring human activity. There is a growing body of research on relationships between \textit{Magnitude of Acceleration} (MAG) \citep{ishikawa2008physical, lyden2014method, staudenmayer2015methods, bai2018two} and physical activity measures such as \textit{energy expenditure} (EE) \citep{crouter2006novel, freedson2012assessment, taraldsen2012physical}. We consider instantaneous body vector MAG as the primary endpoint \citep{doherty2017large}. 

We seek to estimate an individual's MAG along a traversed path after accounting for the impact of explanatory variables on metabolism. These estimates from the continuous time model, which represent statistical learning of an individual's physical activity profile on any given day, are then used to predict the subject's metabolic activity on any arbitrary trajectory and comprise a personalized recommendation system for the subject. The models we have developed here can yield more effective pathways and environments to enhance physical activity levels of subjects and lead to overall improvements in metabolic levels. 

\begin{figure}[t]    
\centering
\includegraphics[width=0.9\textwidth]{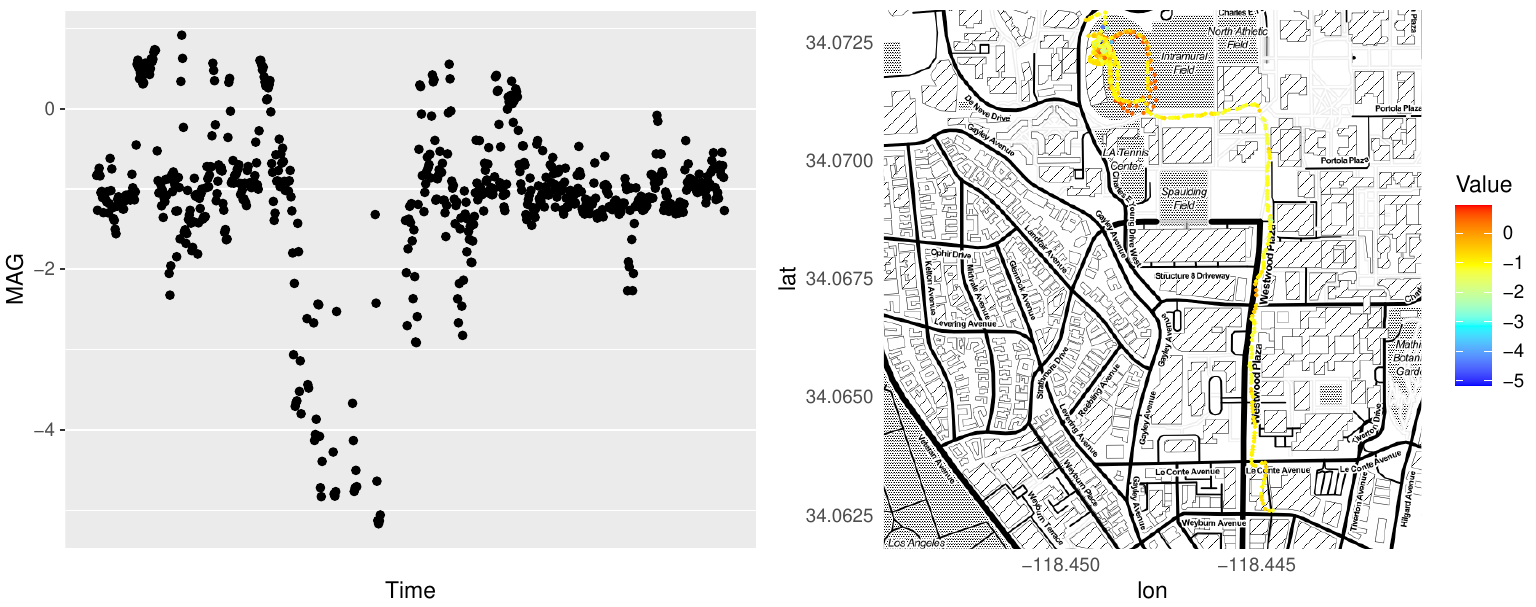}
\caption{$650$ time-series points of log-transformed MAG (left), and spatial movement of the person with log-transformed MAG values (right).
\label{fig:MAG}}
\end{figure}

Focusing on one individual, which is often the goal in personalized health science research seeking data driven recommendation systems for metabolic activity, we consider MAGs observed at $650$ unevenly spaced time points. The left panel of Figure~\ref{fig:MAG} displays the MAG values and the right panel plots the MAGs over the path traversed by this individual. We fit the model in \eqref{model:reg_con} using ``Slope'', representing the gradient at the spatial location, and ``NDVI'', representing normalized vegetation index, which is a measure of greenness at the location, as the two explanatory variables in $x(\gamma(t),t)$
. For predictive stacking, the candidates for the parameters $\phi_1, \phi_2, \xi$ are set $\{100, 1\}$ and those for $\delta_{\beta},\delta_z$ are set $\{20,1/5\}$. Of the $650$ observations, $130$ are randomly selected and used as test data, and the remaining $520$ are utilized as training data. We compute the posterior and predictive distributions for the continuous time trajectory model and Bayesian linear regression on the training data. For comparison, we apply a Bayesian linear regression model to this data using the ``brm'' function from the \citep[``brms'' package][]{R-brms} in \texttt{R} and adopt a (default) $t_3(0,2.5)$ prior for $\sigma$ and flat priors for coefficients of the explanatory variables.

\begin{table}[t]
    \caption{ MSPE and MLPD of the interpolations by Bayesian linear regression and the continuous time trajectory model with predictive stacking. \label{tab:app-interpolation} }
    \centering
    \begin{tabular}{ccc}
        \toprule
        \textbf{ } & \textbf{Bayesian linear regression} & \textbf{Continuous time trajectory model} \\
        \midrule
        MSPE & 0.787 & 0.108 \\
        MLPD & -1.312 & -0.883 \\
        \bottomrule
    \end{tabular}
\end{table}

Table~\ref{tab:app-interpolation} indicates the superiority of the proposed methods over Bayesian linear regression. There is significant difference in MSPE with the proposed model exhibiting considerably lower errors compared with Bayesian linear regression. The MLPD, indicating the goodness of the posterior distribution, unsurprisingly, demonstrates that the proposed model outperforms Bayesian linear regression by accounting for spatial-temporal structure in the data. The benefits of driving the inference using a spatial-temporal process are clear from this analysis. In particular, linear regression and ordinary least squares estimates. while informing about the impact of explanatory variables on MAGs, is severely limited in predictive capabilities as it fails to account for 
spatial-temporal information
.

Figure~\ref{fig:pred_acti} presents 3 maps to display spatial interpolation for actigraph data. The left panel plots the 130 raw observations over a path traversed by the subject under consideration. The values of the (transformed) MAG are calibrated using colors shaded from deep blue (lowest) to red (highest). The middle panel displays the interpolated MAGs using the posterior predictive means from the stacked posterior distribution in \eqref{eq: poterior density_stacked} derived from the continuous time model. We note that estimated MAGs along this path effectively capture the features of the observed MAGs. The right panel depicts the posterior predictive means of the latent process using \eqref{eq: poterior density_stacked} derived from the continuous time model. The variation in the right panel suggests that substantial spatial structure remains on the trajectory after accounting for ``Slope'' and ``NDVI''. The utility of the middle and right panels are distinct. The former is useful for understanding MAG as a feature associated with a path or trajectory. Our framework is able to predict the MAG at a completely arbitrary path, where no measurements have been measured, for a subject had an individual with a given set of personalized health attributes traversed that path. The latter, on the other hand, helps investigators glean lurking factors that may explain some of the residual spatial structure on a trajectory after accounting for the explanatory variables such as ``Slope'' or ``NDVI''.   

\begin{figure}[t]
    \centering
    \includegraphics[width=\textwidth]{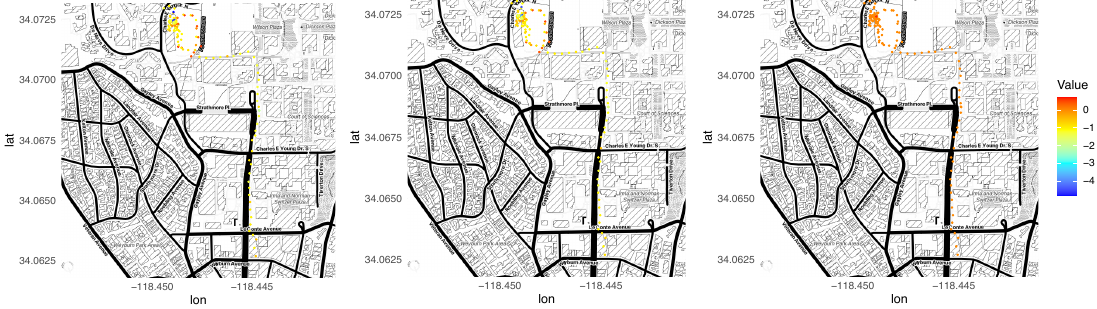}
    \caption{Left: 130 observed test data of $y$. Middle: 130 predictions of $y$ by the continuous time trajectory model. Right: 130 predictions of $z$ by the continuous time trajectory model.  \label{fig:pred_acti}}
\end{figure}

\begin{figure}[t]
    \centering
    \includegraphics[width=0.8\textwidth]{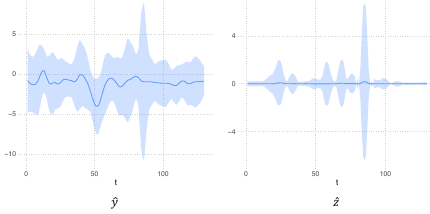}
    \caption{Predictive bands of $y$ (left) and $z$ (right) from continuous time trajectory model.\label{fig:band_acti}}
\end{figure}

Another key inferential element for spatial energetics is the estimate of an individual's daily profile of MAG, which allows medical professionals to recommend changes, if and as deemed appropriate, in the subject's daily mobility habits. As in the preceding figure, here, too, these daily profiles can be plotted for the outcome or for the residual. Figure~\ref{fig:band_acti} presents these plots. The left panel plots the posterior predictive mean and 95\% credible interval band for the MAG along the hours of the day to elicit the daily physical activity pattern of the subject and to better distinguish times of higher activity from those with lesser activity. The right panel presents the analogous plot for the residual process after accounting for trajectory effects represented by slope and greenness. 

Given the complications associated with streaming measurements at high resolutions from wearable devices, it is customary to encounter swaths of time intervals that have not recorded measurements either due to technical malfunction or user behavior. Our model based inferential framework uses the posterior predictive distributions to impute such missing values. The left panel in Figure~\ref{fig:infill_imputation} presents the reconstructed MAG for the subject under consideration at some epochs using the continuous time model. The right panel presents posterior predictive credible intervals that reveal the model's ability to effectively capture 130 held-out values for predictive validation. 

\begin{figure}[t]
    \centering
    \includegraphics[width=0.45\textwidth]{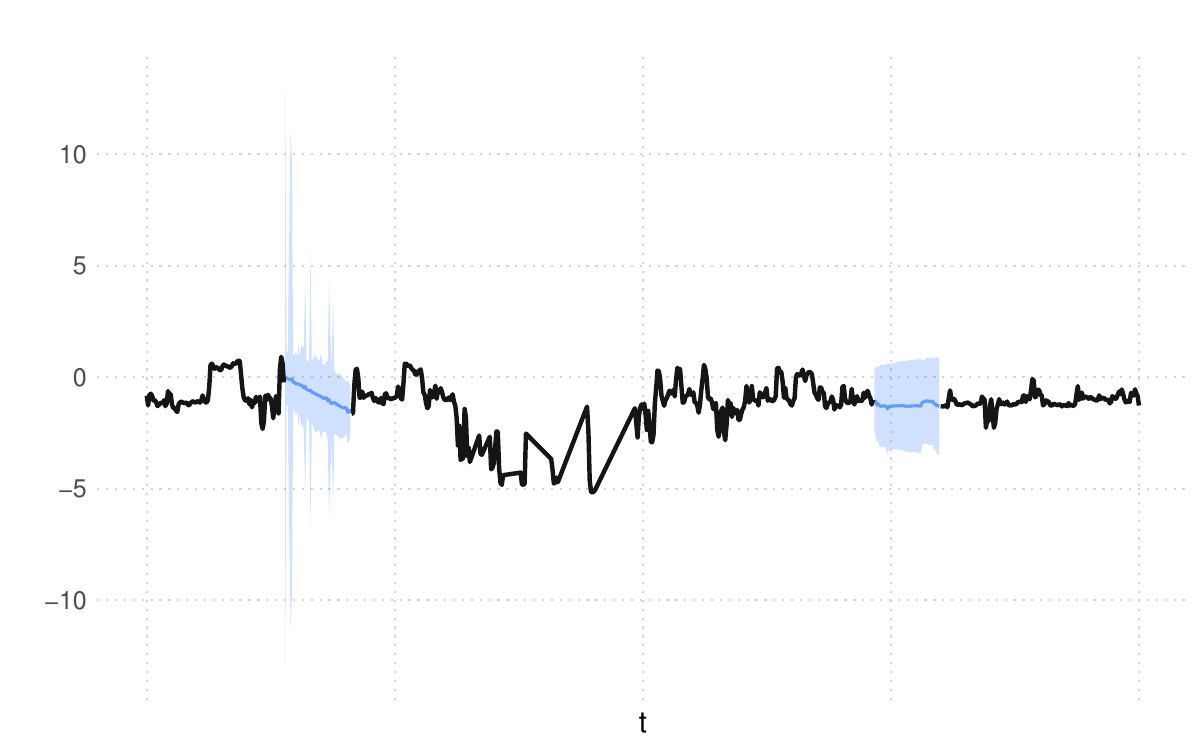}
    \includegraphics[width=0.45\textwidth]{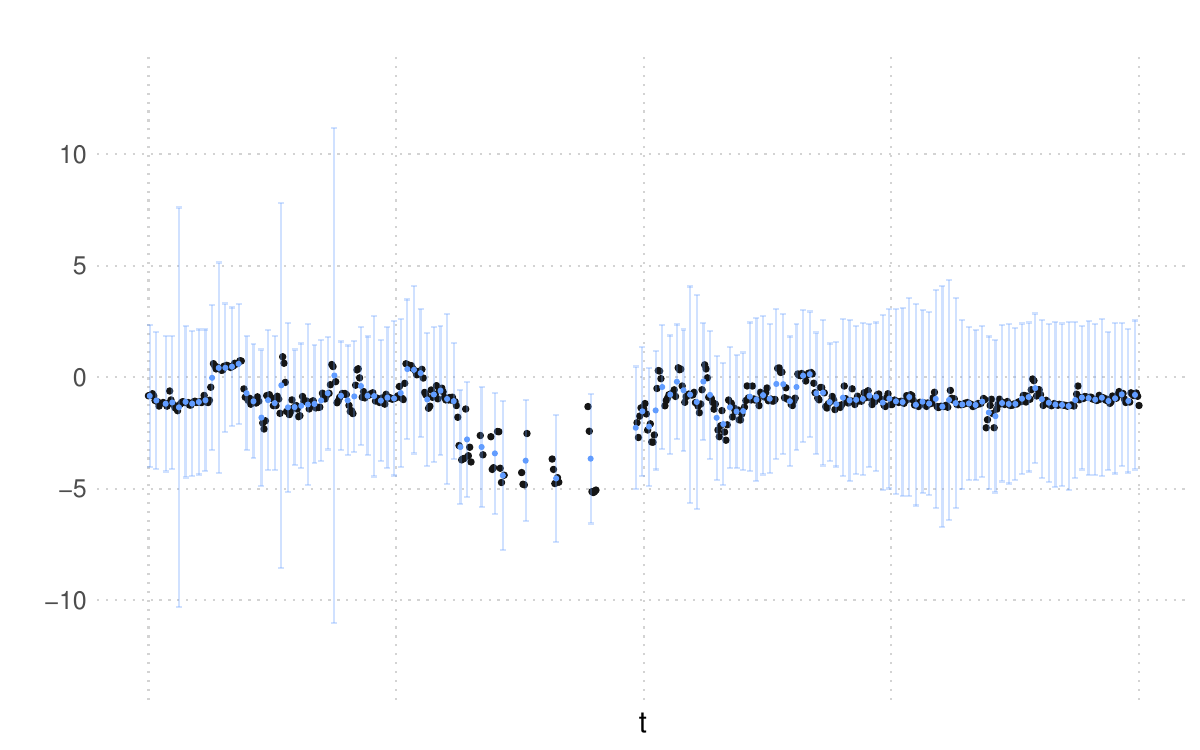}
    \caption{Left: Bayesian imputation of missing measurements at some epochs using the continuous time trajectory model. Right: 95\% credible intervals and posterior predictive means of 130 held-out MAG values using the continuous time trajectory model.  \label{fig:infill_imputation}}
\end{figure}

Finally, we compare the continuous time, discrete time and NSDLM using DIC and WAIC. We extract $150$ distinct time points from the above data, ensuring that they are equally spaced and compatible with the discrete-time trajectory model in \eqref{model:condis} and the NSDLM introduced in Section~\ref{sec:sim_proposed}. For stacking of \eqref{model:condis}, the candidates for the parameters were $\phi\in \{1, 1/10\}$, $\nu\in \{1, 1/3\}$ and $\delta_{\beta},\delta_z\in \{5,1/5\}$. Table~\ref{tab:app-fitting} reveals that the continuous time trajectory model is preferred (lower values) to the others in either of these metrics, while the discrete time model considerably outperforms NSDLM. In designing a physical activity recommending system that trains one model, our analysis suggests using the continuous time model is preferable although the discrete time model should also be competitive.

\begin{table}[t]
    \caption{Goodness of fit of the continuous time trajectory model, the discrete time trajectory model, and the NSDLM, measured by DIC and WAIC. \label{tab:app-fitting} }
    \centering
    \begin{tabular}{ccccc}
        \toprule
        \textbf{} & \textbf{Continuous} & \textbf{Discrete}  & \textbf{NSDLM} \\
        \midrule
        DIC  & -60.318 & -17.996 & 93.249  \\
        WAIC & -87.106 & -40.116 & 150.508   \\
        \bottomrule
    \end{tabular}
\end{table}

\section{Summary} \label{section:discuss}
We have devised a Bayesian inferential framework for spatial energetics that aims to analyze data collected from wearable devices containing spatial information over paths or trajectories traversed by an individual. Data analytic goals include estimating underlying spatial-temporal processes over trajectories that are posited to be generating the observations. A salient requirement for appropriately modeling spatial dependence in such applications is to model spatial locations as a function over time. We introduce such dependence in two broad classes of models: one that treats time as discrete and another that treats time as continuous. The former builds on Bayesian dynamic linear models and the latter employs spatial-temporal covariance functions to specify the underlying process. For conducting inference, we propose Bayesian predictive stacking as an effective method, where fully tractable conjugate posterior distributions up to certain parameters are assimilated, or stacked, to deliver Bayesian inference using a stacked posterior. Our framework offers some theoretical results to justify why predictive stacking renders effective posterior inference.                

\section*{Computer Programs}
Computer programs used in the manuscript for generating data for our simulation experiments in Section~\ref{section:simulation} and the application presented in Section~\ref{section:application} have been developed for execution in the \texttt{R} statistical computing environment. The programs are available for download in the publicly accessible GitHub repository \href{https://github.com/TomWaka/BayesianStackingSpatiotemporalModeling}{https://github.com/TomWaka/BayesianStackingSpatiotemporalModeling}.

\section*{Acknowledgments}
Tomoya Wakayama was supported by research grants 22J21090 from JSPS KAKENHI and JPMJAX23CS from JST ACT-X. Sudipto Banerjee was supported, in part, by research grants R01ES030210 and R01ES027027 from the National Institute of Environmental Health Sciences (NIEHS), R01GM148761 from the National Institute of General Medical Science (NIGMS) and DMS-2113778 from the Division of Mathematical Sciences (DMS) of the National Science Foundation.

\section*{Appendix}

\subsection*{Notation}
The notation $[A \mid B]$ represents a block matrix formed by horizontally concatenating $A\in \R^{p\times q}$ and $B\in \R^{p\times r}$. Given a $p\times q$ matrix $A$ and a $m\times n$ matrix $B$, $A \otimes B$ is the Kronecker product producing a $pm \times qn$ block matrix with $(i,j)$th block is $a_{ij} B$, while $A\oplus B$ denotes the block diagonal matrix with $A$ and $B$ along the diagonal. A $p$-dimensional random variable $x$ is said to be distributed as a multivariate t-distribution $t_{\nu}(\mu,\Sigma)$ with parameters $(\nu,\mu,\Sigma)$ if it has the density
\begin{equation*}
f(x; \mu, \Sigma, \nu) = \frac{\Gamma(\frac{\nu+p}{2})}{\Gamma(\frac{\nu}{2})(\nu\pi)^{p/2}\mathrm{det}(\Sigma)^{1/2}} \left(1+\frac{1}{\nu}(x-\mu)^{\top}\Sigma^{-1}(x-\mu)\right)^{-\frac{\nu+p}{2}},
\end{equation*}
where $\Gamma(\cdot)$ is the gamma function and $\mathrm{det}(\cdot)$ is the determinant of a matrix.

\subsection*{Proof of Lemma~\ref{lem:equivalence-t}}

\begin{proof}
Let $\mathbb{P}_t^*$ be the probability law endowed on a finite spatial realization of $y_t(s)$ with true parameters and let $\mathbb{P}_t'$ be that with parameters $(\sigma_*, \phi', \delta_z')$. For all $t=1,\ldots,T$, the Mat\'{e}rn based model without trend (i.e., $\beta=0$) is
\begin{align*}
     y_t(s) &=  z_t(s) + \eta_{1t}(s), \quad 
     z_t(s) \overset{\mathrm{ind}}{\sim} GP\left(0,\sigma^2 \Delta_{zt}^{2} K_{\phi}(\cdot,\cdot)  \right),
\end{align*}
where $\eta_{1t} = (\eta_{1t}(s_1),\ldots,\eta_{1t}(s_n))^{\T} \overset{\mathrm{i.i.d.}}{\sim} N(0,\sigma^2I_n)$ over any finite set of $n$ spatial locations within a bounded region, and $\Delta_{zt}^{2} = \sum_{j=1}^t \delta_z^{2j} (\alpha\sigma)^{2(j-1)}$. Applying Theorem 2.1 in \cite{tang2021identifiability}, we obtain that $\mathbb{P}_t'$ and $\mathbb{P}_{t}^*$ are equivalent. The equivalence of the joint distributions follows.
\end{proof}

\subsection*{Intuitive images of observational assumption}
All theorems in Section~\ref{sec: poscon} place the following restriction on the nature of $n$ spatial locations within a bounded region $\mS$:
\begin{equation*}
\max_{s \in \mS} \min_{1 \le i \le n} |s - s_i| \asymp n^{-\frac{1}{d}}.
\end{equation*}
Figure~\ref{fig:scatter} presents a depiction of the above assumption, which indicates that the spatial locations are scattered evenly.

\begin{figure}[t]
    \centering
        \includegraphics[width=0.7\textwidth]{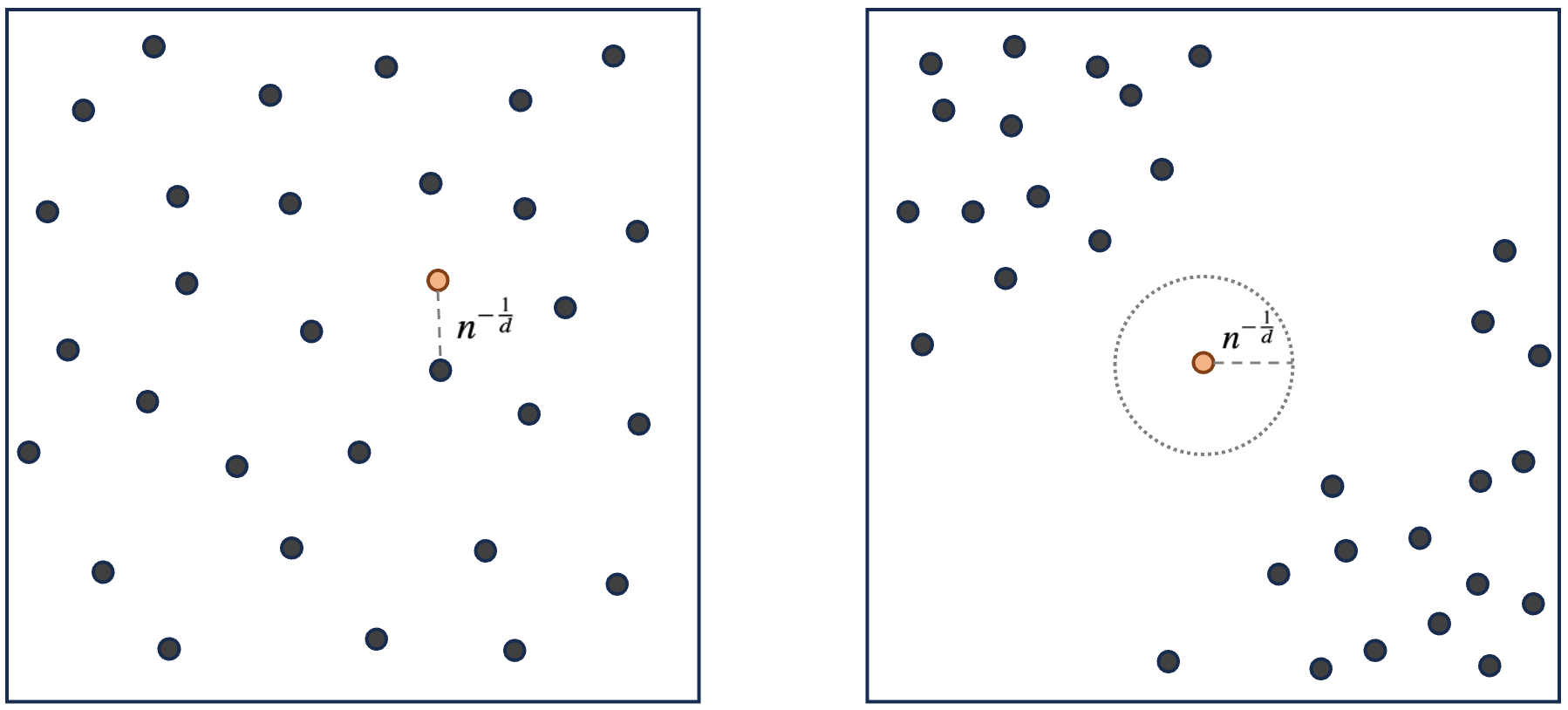}
    \caption{Left: scattered points such that $\max_{s \in \mS} \min_{1 \le i \le n} |s - s_i| \asymp n^{-\frac{1}{d}}$. Right: a polarized set of points on the domain such that $\max_{s \in \mS} \min_{1 \le i \le n} |s - s_i| > n^{-\frac{1}{d}}$. 
 \label{fig:scatter} }
\end{figure}

\subsection*{Proof of Theorem~\ref{thm:var}}

Before proceeding with the proof of Theorem~\ref{thm:var}, we recall the following lemma about the convergence of a random series.
\begin{lemma}\label{lem:asmcon}
Let $\{X_i:i\in \mathbb{N}\}$ be independent random variables with finite second moment and $\{a_i\in \mathbb{R}^{+} :i\in \mathbb{N}\}$ be an increasing positive number sequence such that $a_i \uparrow\infty$. If $\sum_{i=1}^{\infty} Var(X_i)/a_i^2<\infty$, it holds that
\begin{align*}
    \frac{\sum_{i=1}^n (X_i- \mathbb{E}[X_i] ) }{a_n} \to 0  \quad \mathrm{a.s.}.
\end{align*}
\end{lemma}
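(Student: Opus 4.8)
The plan is to recognize Lemma~\ref{lem:asmcon} as the general (non-identically-distributed) form of Kolmogorov's strong law of large numbers and to prove it in two classical steps: first establish almost-sure convergence of the series of scaled summands, then convert this into convergence of the normalized partial sums. To this end, set $Z_i = (X_i - \mathbb{E}[X_i])/a_i$. Since each $Z_i$ is a measurable function of $X_i$ alone, the $Z_i$ inherit independence from the $X_i$; moreover $\mathbb{E}[Z_i] = 0$ and, by hypothesis, $\sum_{i=1}^{\infty}\Var(Z_i) = \sum_{i=1}^{\infty}\Var(X_i)/a_i^2 < \infty$.

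First I would show that $\sum_{i=1}^{\infty} Z_i$ converges almost surely. The partial sums $S_n = \sum_{i=1}^{n} Z_i$ form a mean-zero, square-integrable martingale with respect to the filtration generated by $X_1,\ldots,X_n$, and $\sup_n \mathbb{E}[S_n^2] = \sum_{i=1}^{\infty}\Var(Z_i) < \infty$ by independence. Hence $\{S_n\}$ is $L^2$-bounded and the martingale convergence theorem yields a finite almost-sure limit. Equivalently, one may invoke Kolmogorov's maximal inequality directly: for $m < n$, $\mathbb{P}\bigl(\max_{m < k \le n}|S_k - S_m| > \varepsilon\bigr) \le \varepsilon^{-2}\sum_{i=m+1}^{n}\Var(Z_i)$, whose right-hand side is the tail of a convergent series, so $\{S_n\}$ is almost surely Cauchy.

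Finally I would apply Kronecker's lemma to the numerical sequence obtained on each sample path. Writing $x_i = X_i - \mathbb{E}[X_i]$ and $b_n = \sum_{i=1}^{n} x_i/a_i$, the previous step gives $b_n \to b$ for some finite $b$, almost surely. Summation by parts with $b_0 = 0$ yields $a_n^{-1}\sum_{i=1}^{n} x_i = b_n - a_n^{-1}\sum_{i=1}^{n-1}(a_{i+1}-a_i)\,b_i$; since $a_i \uparrow \infty$, the coefficients $(a_{i+1}-a_i)/a_n \ge 0$ telescope to $(a_n - a_1)/a_n \to 1$, so the second term is a Toeplitz (weighted Ces\`aro) average of the convergent sequence $\{b_i\}$ and therefore tends to $b$. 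The two contributions cancel, giving $a_n^{-1}\sum_{i=1}^{n}(X_i - \mathbb{E}[X_i]) \to 0$ almost surely, as claimed.

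The only genuinely probabilistic input is the almost-sure convergence of $\sum_i Z_i$, i.e., Kolmogorov's convergence criterion, and I expect this to be the crux, since it rests on the maximal inequality (or, equivalently, the $L^2$ martingale convergence theorem). The remaining Kronecker step is purely deterministic summation by parts, and verification of the hypotheses (independence, zero mean, summability of the scaled variances) is immediate from the assumptions. If citing standard references is permitted, both ingredients are textbook and the argument reduces to checking these hypotheses.
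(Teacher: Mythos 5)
Your proposal is correct and takes essentially the same route as the paper's proof: both define the scaled centered variables $(X_i-\mathbb{E}[X_i])/a_i$, obtain almost-sure convergence of their series from Kolmogorov's one-series criterion, and conclude via Kronecker's lemma. The only difference is one of detail — the paper cites these two ingredients (Kolmogorov's theorem and Kronecker's lemma) as known results, whereas you additionally unfold their proofs via the maximal inequality/$L^2$-martingale argument and explicit summation by parts.
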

\begin{proof}
Let $Y_i = (X_i - \mathbb{E}[X_i] ) /a_i$. Since $\mathbb{E}[Y_i]=0$ and $\sum_{i=1}^{\infty} Var(Y_i)<\infty$, it follows from Kolmogorov's one series theorem (e.g., Theorem 2.5.6. in \cite{durrett2019probability}) that $\sum_{i=1}^{\infty} Y_i<\infty$ almost surely. Then, from Kronecker's lemma, $a_n^{-1}\sum_{i=1}^n (X_i- \mathbb{E}[X_i] )$ converges to $0$ almost surely.
\end{proof} 

We now present the main proof of Theorem~\ref{thm:var}.

\begin{proof}
We rewrite the notation introduced in Proposition~\ref{prop:post} as $n_t = n_{t-1}+n$, $n_ts_t = n_{t-1}s_{t-1} + ({y_t} - {f}_t )Q_t^{-1}({y_t} - {f}_t ), {f}_t =  \alpha {m}_{t-1}, Q_t =  R_t  + I_n, {m}_{t} = {f}_t + R_t  Q_t^{-1} ({y}_t-{f}_t), R_t = \alpha ^2 W_{t-1} + \delta_z^{'2} K_{\phi'} (\chi)$ and $W_t = R_t - R_t^{\T}Q_t^{-1}R_t$. Here, we prove $\frac{n_t s_t - n_{t-1}s_{t-1}}{n} \rightarrow t\sigma_*^2$ by mathematical induction on $t$. 

The base step $t=1$ yields 
\begin{align*}
    n_1s_1 - n_0s_0 &= ({y}_1 - \alpha {m}_0)^{\T} Q_1^{-1} ({y}_1 - \alpha {m}_0)\\
    & =  {y}_1 ^{\T} \left( (\alpha ^2 + \delta_z'^{2} )K_{\phi'} (\chi)+I_n\right)^{-1} {y}_1 
\end{align*}
Let $U_n$ be a unitary matrix such that $U_n K_{\phi'} U_n^{\T}$ is diagonal and let $\lambda_i^{(n)}$ be the $i$th eigenvalue of $K_{\phi'}$ for $i=1\ldots,n$. Since $U_n {y}_1$ follows a zero-centered multivariate normal distribution with a diagonal covariance matrix whose $i$th diagonal is $\sigma_*^2(1 + \delta_z'^2\lambda_i^{(n)})$ under $\mathbb{P}'$, we obtain 
\begin{equation*}
    n_1s_1 - n_0s_0 = \sum_{i=1}^n \frac{\sigma_*^2(1 + \delta_z'^2\lambda_i^{(n)}) }{1+ (\alpha ^2 + \delta_z'^{2}) \lambda_i^{(n)} } u_i^2,
\end{equation*}
where $u_i$ follows the standard normal distribution. Let $A_i =\left(\sigma_*^2(1 + \delta_z'^2\lambda_i^{(n)}) \right)\big/\left(1+ (\alpha ^2 + \delta_z'^{2}) \lambda_i^{(n)}\right) $. Because $\lambda_i^{(n)}\le C n i^{-2\nu/d - 1}$ for all $i=1,\ldots,n$ from Corollary 2 in \cite{tang2021identifiability}, $(\sum_{i=1}^n A_i )/ (n\sigma_*^2)$ converges to $1$ as $n\to\infty$. By $\sum_{i=1}^{\infty} A_i^2/i^2 <\infty $ and Lemma~\ref{lem:asmcon}, we obtain
\begin{equation*}
    \frac{n_1s_1 - n_0s_0}{n} = \frac1n \sum_{i=1}^n A_i u_i^2~\rightarrow~\sigma^2_*, \quad \mathbb{P}'-\text{a.s.}
\end{equation*}
Owing to the equivalence of $\mathbb{P}_*$ and $\mathbb{P}'$, $\frac{n_1s_1 - n_0s_0}{n} \rightarrow \sigma^2_*$ holds $\mathbb{P}_*$-almost surely. 

Then, as an inductive step, we assume that $\frac{n_ts_t}{n} \rightarrow t\sigma_*^2$ holds $\mathbb{P}_*$-almost surely and we consider the next period. We have
\begin{align*}
    n_{t+1}s_{t+1} - n_{t}s_{t}  = ({y}_{t+1} - \alpha {m}_t)^{\T} Q_{t+1}^{-1} ({y}_{t+1} - \alpha {m}_t),
\end{align*}
where $Q_{t+1} = (I_n + \alpha^2W_t + \delta_z'^2 K_{\phi'} )$ and $W_t = R_t - R_t^{\T}Q_t^{-1}R_t$.  Let $\lambda_{t,i}^{(n)}$ be the $i$th eigenvalue of $W_{t}$. Because $W_t = (R_t^{-1} + I_n)^{-1}$ and $R_t = \alpha^2 W_{t-1} +\delta_z'^2 K_{\phi'}$, it holds that, under $\mathbb{P}'$,
\begin{equation*}
    \lambda_{t,i}^{(n)} = \frac{ (\alpha^2+ \delta_z'^2)\lambda_{t,i}^{(n)} }{ 1 + (\alpha^2+ \delta_z'^2)\lambda_{t,i}^{(n)}}.
\end{equation*}
Note that $\lambda_{t,i}^{(n)} \to \left(\alpha^2+\delta_z'^2\right)^t\lambda_i^{(n)}$ as $i\to\infty$. Then,
\begin{align*}
    \frac1n({y}_{t+1} - \alpha {m}_t)^{\T} Q_{t+1}^{-1} ({y}_{t+1} - \alpha {m}_t)
    & = \frac1n {y}_{t+1}^{\T}U_n^{\T} U_n Q_{t+1}^{-1}U_n^{\T} U_n  {y}_{t+1}
      - \frac{2\alpha}{n} {y}_{t+1}^{\T} Q_{t+1}^{-1} {m}_t  
      + \frac{\alpha^2}{n} {m}_t^{\T} Q_{t+1}^{-1}  {m}_t
    \\
    & = \frac1n {y}_{t+1}^{\T}U_n^{\T} U_n Q_{t+1}^{-1}U_n^{\T} U_n  {y}_{t+1} +o(1) \\
    & = \frac1n \sum_{i=1}^n \frac{ \sigma_*^2 + v'^2\lambda_i^{(n)} }{ 1 + \alpha^2\lambda_{t,i}^{(n)} + \delta_z'^2\lambda_i^{(n)} } u_i^2  +o(1)
\end{align*}
where $u_i\sim N(0,1)$. The second equality holds because when we recursively expand $m_t$ by its definition, we find that $m_t$ includes a factor of $R_t$. Then, the second and third terms are $o(1)$ due to the eigenvalue decay of $R_t$. Hence, using Lemma~\ref{lem:asmcon} and the equivalence of the distributions, we obtain 
\begin{equation*}
    \frac1n \sum_{i=1}^n \frac{ \sigma_*^2 + v'^2\lambda_i^{(n)} }{ 1 + \alpha^2\lambda_{t+1,i}^{(n)} + \delta_z'^2\lambda_i^{(n)} } u_i^2 ~\rightarrow~ \sigma^2_*, \quad \mathbb{P}_*-\text{a.s.}
\end{equation*}
Therefore, we have
\begin{equation*}
    \frac{n_{t+1}s_{t+1}}{n} \rightarrow (t+1)\sigma_*^2, \quad \mathbb{P}_*-\text{a.s.}
\end{equation*}
This concludes the induction step and we conclude $\frac{n_ts_t - n_{t-1}s_{t-1}}{n} \rightarrow t\sigma_*^2$ holds for each $t=1,2,\ldots,T$. From Chebyshev's inequality, $p (\sigma^2 \mid \mD_{\chi_n,t}) \rightsquigarrow  \delta(\sigma_*^2)$ holds $\mathbb{P}_*$-almost surely.

\end{proof}\subsection*{Proof of Theorem~\ref{thm:ypred}}

\begin{proof}
   Recall that $n_t = n_{t-1}+n$, $n_ts_t = n_{t-1}s_{t-1} + ({y_t} - {f}_t )Q_t^{-1}({y_t} - {f}_t ), {f}_t =  \alpha {m}_{t-1}, Q_t =  R_t  + I_n, {m}_{t} = {f}_t + R_t  Q_t^{-1} ({y}_t-{f}_t), R_t = \alpha ^2 W_{t-1} + \delta_z^{'2} K_{\phi'} (\chi)$ and $W_t = R_t - R_t^{\T}Q_t^{-1}R_t$. We decompose the prediction error for the latent term $\tilde{z}_t(s_0)$ as
    \begin{align}
        \mathbb{E}_* [(Z_{tn}(s_0)-\tilde{z}_t(s_0))^2] &= \mathbb{E}_* [(Z_{tn}(s_0)- \mathbb{E}[\tilde{z}_t(s_0)\mid \mD_{\chi_n,t} ] + \mathbb{E}[\tilde{z}_t(s_0)\mid \mD_{\chi_n,t} ] -  \tilde{z}_t(s_0))^2] \nonumber \\
        &= \mathbb{E}_* [(Z_{tn}(s_0)- \mathbb{E}[\tilde{z}_t(s_0)\mid \mD_{\chi_n,t} ])^2 + (\tilde{z}_t(s_0)-\mathbb{E}[\tilde{z}_t(s_0)\mid \mD_{\chi_n,t} ] )^2] \nonumber \\
        & = \underbrace{\mathbb{E}_* [Var(Z_{tn}(s_0))]}_{E_{n,t}^A} + \underbrace{\mathbb{E}_* [(\tilde{z}_t(s_0)-\mathbb{E}[\tilde{z}_t(s_0)\mid \mD_{\chi_n,t} ] )^2]}_{E_{n,t}^B} \label{eq:decomp-z}
    \end{align}
Focusing on the first term, $E_{n,t}^A$ in \eqref{eq:decomp-z}, we note the following distributions,
\begin{align*}
    Z_{tn}(s_0) \mid \mD_{\chi_n,t},\sigma^2,{m}_t  &\sim N( {c}_0^{\T}C^{-1}{m}_t, \sigma^2\delta_z'^{2} - {c}_0^{\T}C^{-1}{c}_0 ),\\
    {m}_t\mid \sigma^2,{f}_t,R_t &\sim N({f}_t, \sigma^2 R_t(R_t + I_n)^{-2}R_t ),
\end{align*}
where $C = \sigma^2\delta_z'^2K_{\phi}(\chi_n)$ and ${c}_0=(\sigma^2\delta_z'^2K_{\phi}(s,s_0))_{s\in\chi_n}$. According to the law of total variance \citep{blitzstein2019introduction},
\begin{align*}
     Var(Z_{tn}(s_0))
    & =   \mathbb{E}[ Var(Z_{tn}(s_0)) \mid \mD_{\chi_n,t}, \sigma^2, m_t ] + Var( \mathbb{E}[Z_{tn}(s_0)\mid  \mD_{\chi_n,t}, \sigma^2, m_t] )
         \\
    & =  \sigma^2(\delta_z'^2 - \delta_z'^2 \{K_{\phi}(s,s_0)\}_{s\in\chi_n}^{\T} K_{\phi}^{-1}(\chi_n)\{K_{\phi}(s,s_0)\}_{s\in\chi_n} \\
    & \quad + \{K_{\phi}(s,s_0)\}_{s\in\chi_n}^{\T} K_{\phi}^{-1}(\chi_n)  R_t(R_t + I_n)^{-2}R_t K_{\phi}^{-1}(\chi_n)\{K_{\phi}(s,s_0)\}_{s\in\chi_n}^{\T} ).
\end{align*}
From Theorem~\ref{thm:var} we know that $p(\sigma^2 \mid \mD_{\chi_n,t}) \to \delta(\sigma^2_*)$ as $n\to\infty$ under $\mathbb{P}_*$. We also obtain
\begin{align*}
    E_{n,t}^A &= \sigma_*^2(\delta_z'^2 - \delta_z'^2 \{K_{\phi}(s,s_0)\}_{s\in\chi_n}^{\T} K_{\phi}^{-1}(\chi_n)\{K_{\phi}(s,s_0)\}_{s\in\chi_n} \\
    & \quad + \{K_{\phi}(s,s_0)\}_{s\in\chi_n}^{\T} K_{\phi}^{-1}(\chi_n)  R_t(R_t + I_n)^{-2}R_t K_{\phi}^{-1}(\chi_n)\{K_{\phi}(s,s_0)\}_{s\in\chi_n}^{\T} ).
\end{align*}
Furthermore, the second term in \eqref{eq:decomp-z} can be represented as
\begin{equation*}
    E_{n,t}^B = \mathbb{E}_* \left[\left(\tilde{z}_t(s_0)- \{K_{\phi}(s,s_0)\}_{s\in\chi_n}^{\T} K_{\phi}^{-1}(\chi_n) {m}_t \right)^2\right].
\end{equation*} 

The prediction error of $\tilde{y}_t(s_0)$ can be decomposed as 
\begin{align*}
        \mathbb{E}_* \left[(Y_{tn}(s_0)-\tilde{y}_t(s_0))^2\right] &= \mathbb{E}_* \left[(Y_{tn}(s_0)- \mathbb{E}[\tilde{y}_t(s_0)\mid \mD_{\chi_n,t} ] + \mathbb{E}[\tilde{y}_t(s_0)\mid \mD_{\chi_n,t} ] -  \tilde{y}_t(s_0))^2\right]  \\
        &= \mathbb{E}_* \left[(Y_{tn}(s_0)- \mathbb{E}[\tilde{y}_t(s_0)\mid \mD_{\chi_n,t} ])^2 + (\tilde{y}_t(s_0)-\mathbb{E}[\tilde{y}_t(s_0)\mid \mD_{\chi_n,t} ] )^2\right]  \\
        & = \mathbb{E}_* \left[Var(Y_{tn}(s_0))\right] + \mathbb{E}_* \left[\left(Y_t(s_0)-\mathbb{E}[\tilde{y}_t(s_0)\mid \mD_{\chi_n,t} ] \right)^2\right]
    \end{align*}
Then, we have
\begin{equation*}
    \mathbb{E}_* \left[(Y_{tn}(s_0)-\tilde{y}_t(s_0))^2\right] - E_{n,t}^A - E_{n,t}^B \overset{n \to \infty }{\longrightarrow} 2\sigma_*^2 .
\end{equation*}
\end{proof}
\subsection*{Proof of Theorem~\ref{thm:stack}}
\begin{proof}
    \begin{align*}
        \mathbb{E}_* \left[\left(\tilde{y}_t(s_0) - \sum_{g=1}^G a_g \mathbb{E}_g[ \tilde{y}_t(s_0)\mid \mD_{\chi_n,t} ] \right)^2\right] 
        &= \sigma_*^2 + \mathbb{E}_* \left[\left(\tilde{z}_t(s_0) - \sum_{g=1}^G a_g \mathbb{E}_g[ \tilde{z}_t(s_0)\mid \mD_{\chi_n,t} ] \right)^2\right] \\
        &=\sigma_*^2 + \mathbb{E}_* \left[\left\{\sum_{g=1}^G a_g\left(\tilde{z}_t(s_0) -  \mathbb{E}_g[ \tilde{z}_t(s_0)\mid \mD_{\chi_n,t} ]\right) \right\}^2\right].
    \end{align*}
By the Cauchy--Schwarz inequality, the second term is bounded by $G\left(\sum_g a_g^2\right)E_{n,t}^B$, which converges to zero.
\end{proof}

\subsection*{In-fill prediction of the discrete DLM model}
Here, we examine the in-fill predictive performance for the discrete time Bayesian DLM. First, we introduce the data-generating process along with the model in \eqref{model:dlm}. In the period $T = 20$, we uniformly sample $n = 50, 100, \ldots, 500$ spatial locations from the unit square $[0,1]^2\subset\R^2$ to generate training data (the left panel in Figure~\ref{fig:location-discrete}). The elements of initial values of the $2$-dimensional state vectors ${\beta}_0$ and ${z}_0$ are randomly generated from $N({0},4)$. With these initial values, we sequentially produce ${\beta}_t$ and ${z}_t$ using the autoregressive specification (see Section~\ref{sec: bayesian_dlm}) with $\delta_{\beta} = 1$, $\delta_z = 1$, $\sigma = 1$ and $K_{\phi}$ taken as the Mat\'ern kernel and $\phi=1/7$ and $\nu=1$. Here, matrices $G_{\beta,t}$ and $G_{z,t}$ are configured as identity matrices of sizes $p$ and $n$, respectively, and are considered time-invariant. Then, we sampled each element of $X_t$ from $N(0,4)$ and $y$ from \eqref{model:dlm}. In this setting, we consider the prediction of the one-step future data at all locations, including $n$ observed and $100$ newly sampled points, as shown in the right panel of Figure~\ref{fig:location-discrete}; the variation in the accuracy of spatial-temporal predictions with increasing spatial samples is of interest.

\begin{figure}[t]   
\centering
\includegraphics[width=0.8\textwidth]{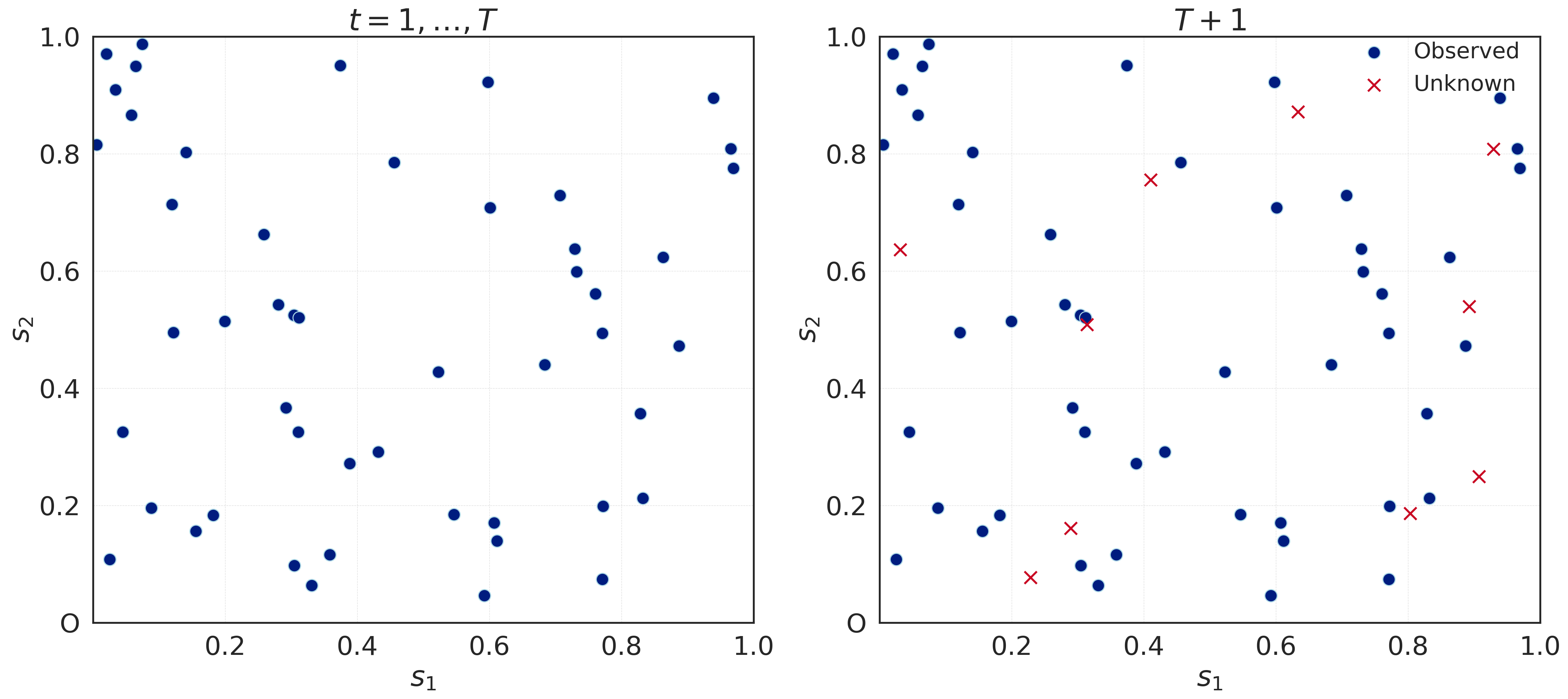}
\caption{Space-time prediction (discrete case). Left: observation locations at $t=1,..,T$. Right: prediction locations at $T+1$, including observation points and new points.\label{fig:location-discrete} }
\end{figure}
\begin{figure}[t] 
\centering 
\includegraphics[width=\textwidth]{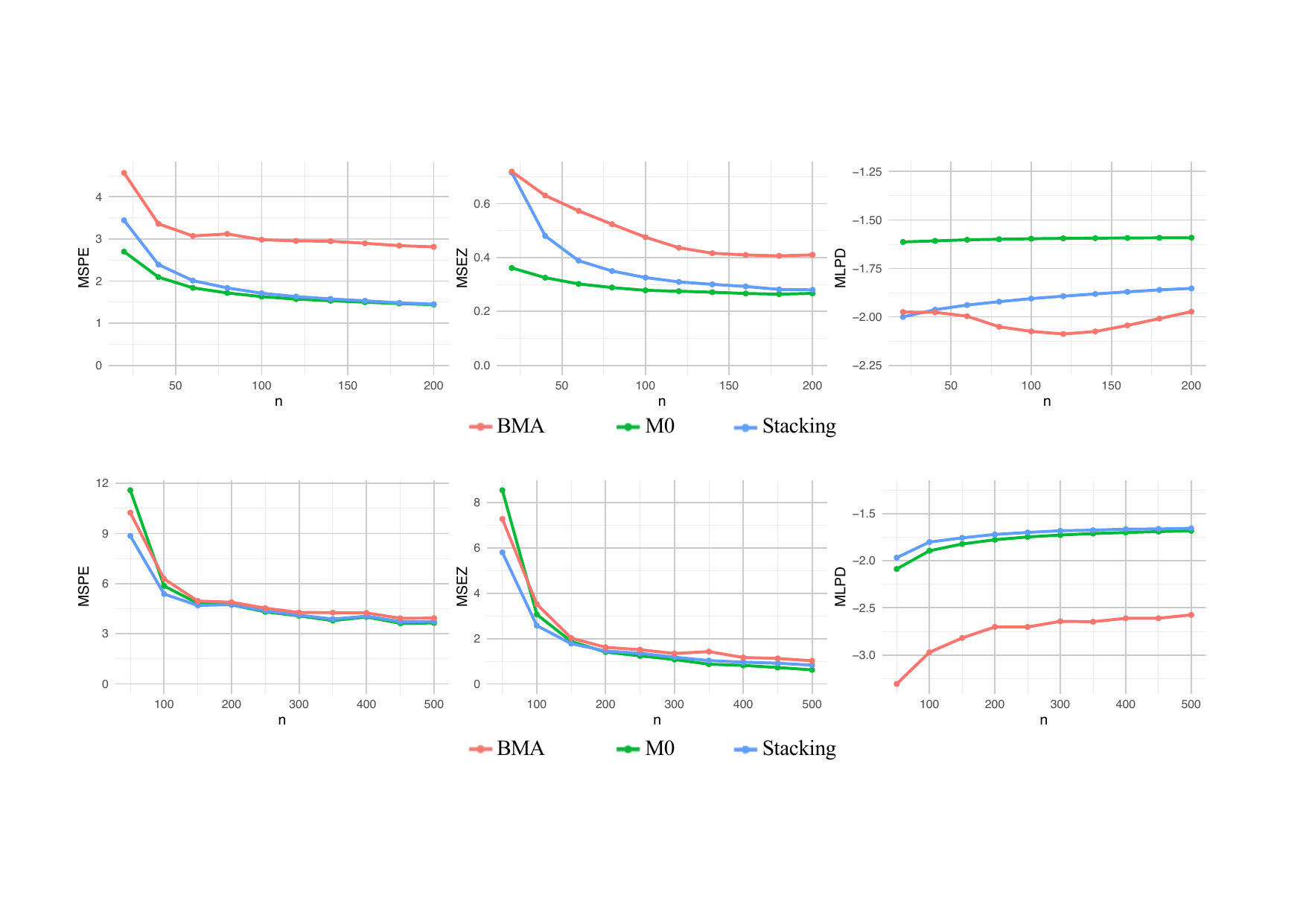}
\caption{Prediction performances of stacking (our proposed method), Bayesian model averaging (BMA), and $M_0$ (method with the oracle parameters) when $n$ grows from $50$ to $500$. Left: mean squared prediction error (MSPE). Middle: mean squared error for $z$ (MSE$z$). Right: mean log predictive density (MLPD). \label{fig:infill-review}}
\end{figure}

We generated $50$ different datasets using the above procedure and analyzed each dataset using the model in \eqref{model:dlm}. For predictive stacking, we defined a set of candidate parameters: $\phi\in \{1/5, 1/10\}$, $\nu\in \{2,1/2\}$, $\delta_{\beta}\in\{2,1/2\}$ and $\delta_z\in \{2,1/2\}$. We employed $20$-fold expanding window cross-validation and selected the stacking weights from $\Delta = \{ \{a_g\}_{g=1}^G \mid \sum_{g=1}^Ga_g=1, a_g\ge 0 \}$ to yield a simplex of the candidate predictions \eqref{eq: poterior density_stacked}. Furthermore, we implemented BMA with a uniform prior on candidate models and $M_0$, an oracle method with the true parameters assigned. As measures of performance, we adopted three metrics: mean squared prediction error (MSPE) and mean squared error for $z$ (MSE$z$) for stacking of means, and mean log predictive density (MLPD) for stacking of distributions. 

Figure~\ref{fig:infill-review} provides an overview of these predictions, where we report the average of the aforementioned metrics over the $50$ datasets. The left and center panels illustrate the enhancement in predictions of both the outcome and spatial effects in the in-fill paradigm with more observed locations. The right panel demonstrates that distributional stacking consistently improves with an increasing $n$, as indicated by the log predictive density. These findings indicate that a higher number of spatial points (as long as the points are dispersed) improves predictions. Furthermore, the stacking results are generally better than those of BMA, underscoring the significance of weight determination.

\subsection*{Discussion of \texorpdfstring{$E_{n,t}^A$}{ean} ~and~\texorpdfstring{$E_{n,t}^B$}{ebn}}
Here, we elaborate on the asymptotic behaviors of $E_{n,t}^A$ and $E_{n,t}^B$, introduced in Section~\ref{sec: poscon}.

First, because $E_{n,t}^A$ is analytically intractable, we numerically investigate its decay as $n$ increases from $50$ to $1600$. The observation locations are uniformly sampled from the unit square $[0,1]^2\subset \R^2$. We compute $E_{n,t}^A$ with $\delta_{\beta}=\delta_z=\sigma=1$, $(\phi, \nu)\in \{(1/2,1),(1/5,1/2), (1/10,1/3)\}$ and $T\in \{2,20\}$. The upper panels in Figure~\ref{fig:EAN} illustrate the predictive variance of $z$ in the absence of a trend, whereas the lower panel shows the predictive variance of $z$ when a trend is present. The training periods are $2$ and $20$ periods on the left and right sides, respectively. In all settings, the predicted variance decreases with an increasing sample size. The increase in $T$ indicates a faster decrease in predictive variance, suggesting that the rate is influenced by the training period $T$.

\begin{figure}[t]
    \centering
    \begin{minipage}{0.8\textwidth}
        \includegraphics[width=\textwidth]{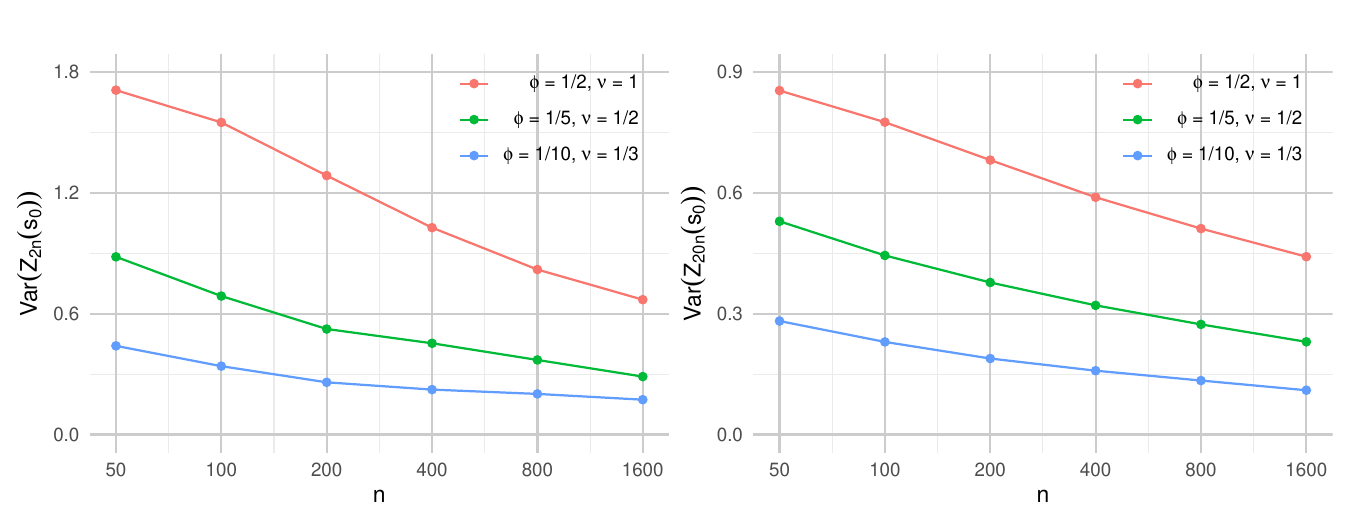}
    \end{minipage}
    \begin{minipage}{0.8\textwidth}
        \includegraphics[width=\textwidth]{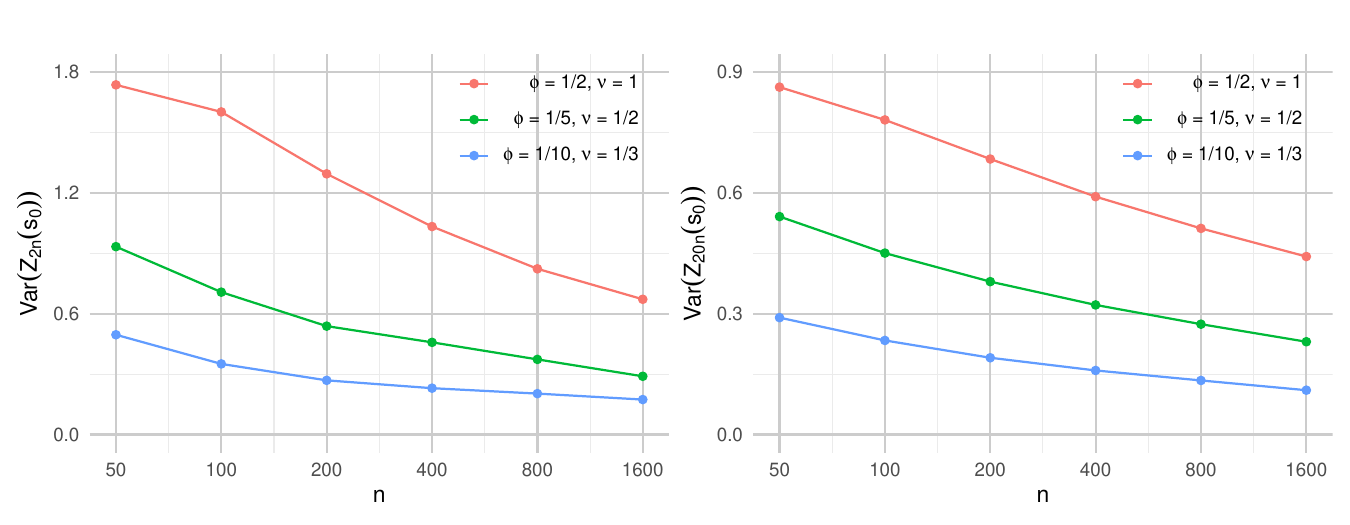}
    \end{minipage}
    \caption{Decay of $E_{n,t}^A$ with an increasing sample size. The upper panels show predictive variances of $z$ without a trend, and the lower panels describe those with a trend. The left column shows those when the training period is $2$, while the right column illustrates those when the training period is $20$. \label{fig:EAN} }
\end{figure}

Next, we consider the decay of $E_{n,t}^B$. For simplicity, we assume $T=1$ and the space has dimensions of $1$. Here, we denote $\chi_n = \{ i/n, i= -n, -n+1,\ldots, n-1, n \}\subset\mathbb{R}$, $\tilde{\chi}_n = \{ i/n, i\in \mathbb{N} \}\subset\mathbb{R}$. Let $\mD_{-0,t}$ be data in $\chi_n \setminus \{0\}$ until time $t$, $\tilde{\mD}_{-0,t}$ be data in $\tilde{\chi}_n \setminus \{0\}$ until time $t$, $e := \mathbb{E}_* [ \tilde{z}(0) - \mathbb{E}[ \tilde{z}(0) \given \mD_{-0,t} ] ],$ and $\tilde{e} := \mathbb{E}_* [ \tilde{z}(0) - \mathbb{E}[ \tilde{z}(0) \given \tilde{\mD}_{-0,t} ]]$. The attenuation of $E_{n,t}^B$ with larger $n$ is then justified based on the following result from \cite{zhang2023exact}

\begin{theorem}
Assume $|e - \tilde{e} | \to 0$ as $n\to \infty$. Then, the following holds as $n\to \infty$:
    \begin{equation*}
        E_{n,t}^B\to 0.
    \end{equation*}
\end{theorem}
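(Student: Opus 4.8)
The plan is to first recognize that $E_{n,t}^B$ is exactly the mean‑squared prediction error of $\tilde{z}_t(0)$ built from the filtered means. Taking $s_0=0$ as the held‑out location and the observations on $\chi_n\setminus\{0\}$, the Gaussian structure of \eqref{model: dlm_no_trend} gives $\mathbb{E}[\tilde{z}_t(0)\mid \mD_{-0,t}] = K_{\phi}(\chi_n,0)^{\T}K_{\phi}^{-1}(\chi_n)m_t$, i.e. the posterior predictive mean is the kriging interpolant of the filtered means, which is precisely the predictor appearing in $E_{n,t}^B$. Hence $E_{n,t}^B$ coincides with $e$, the mean‑squared prediction error on the symmetric grid. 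By the triangle inequality $E_{n,t}^B = e \le \tilde{e} + |e-\tilde{e}|$, and since the hypothesis supplies $|e-\tilde{e}|\to 0$, it suffices to prove that the one‑sided (half‑grid) error $\tilde{e}\to 0$ as $n\to\infty$. The point of the half grid $\tilde{\chi}_n$ is that its monotone, one‑sided structure around $0$ admits a direct infill analysis, whereas the symmetric grid is harder to treat; the assumption is exactly the bridge transferring the half‑grid conclusion back to $\chi_n$.

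To show $\tilde{e}\to 0$ I would decompose the error of the filtered predictor into interpolation and filtering parts. With kriging weights $w_n = K_{\phi}(\tilde{\chi}_n,0)^{\T}K_{\phi}^{-1}(\tilde{\chi}_n)$ and $z_t$ the vector of true latent values on $\tilde{\chi}_n\setminus\{0\}$, write $\tilde{z}_t(0)-w_n m_t = (\tilde{z}_t(0)-w_n z_t) + w_n(z_t-m_t)$. The first (interpolation) term vanishes in $L^2(\mathbb{P}_*)$: the nearest design point lies at distance $1/n\to 0$, so mean‑square continuity of the Mat\'ern process together with the screening effect force $\mathbb{E}_*[(\tilde{z}_t(0)-w_n z_t)^2]\to 0$, with the misspecification $\phi'\neq\phi_*$ handled through the measure equivalence of Lemma~\ref{lem:equivalence-t} so that the true‑law and model variances are comparable. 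The second (filtering) term is where the screening effect must be used quantitatively: rather than the crude bound $\|w_n\|\,\|z_t-m_t\|$, one localizes $w_n$ onto the few design points nearest $0$ and pairs this with the local accuracy of the filtered means $m_t$, itself controlled through the filter recursions and the Mat\'ern eigenvalue decay $\lambda_i^{(n)}\lesssim n\,i^{-2\nu/d-1}$ invoked in the proof of Theorem~\ref{thm:var}, propagated across the $t$ epochs.

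The principal obstacle is this filtering term. As $n$ grows, $K_{\phi}(\tilde{\chi}_n)$ becomes increasingly ill‑conditioned and $\|w_n\|$ need not remain bounded, so a naive Cauchy--Schwarz split loses too much; the delicate point is to establish that the kriging weights concentrate on the design points where the filtered mean is simultaneously accurate, so that their product still tends to zero. Making the screening effect rigorous for a one‑sided densifying grid, rather than the standard two‑sided infill geometry, is the second difficulty, and it is precisely to avoid carrying out this analysis on the symmetric grid that the hypothesis $|e-\tilde{e}|\to 0$ is imposed. Once $\tilde{e}\to 0$ is in hand, the reduction $E_{n,t}^B = e \le \tilde{e}+|e-\tilde{e}|$ of the first paragraph yields $E_{n,t}^B\to 0$.
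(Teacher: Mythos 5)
The first thing to note is that the paper itself contains no proof of this statement: it is presented as ``the following result from \cite{zhang2023exact}'', so your attempt can only be compared against the intended structure of that cited argument. Your skeleton is the right way to use the hypothesis: identify $E_{n,t}^B$ with $e$ (justified, within the paper's framework, by Proposition~\ref{prop:pred-s}, since the model-based predictive mean of $\tilde{z}_t(0)$ given $\mD_{-0,t}$ is exactly the kriging interpolant $K_{\phi'}(\chi_n\setminus\{0\},0)^{\T}K_{\phi'}^{-1}(\chi_n\setminus\{0\})\,m_t$ of the filtered mean appearing in the definition of $E_{n,t}^B$ in Theorem~\ref{thm:ypred}), then bound $E_{n,t}^B = e \le \tilde{e} + |e-\tilde{e}|$ and invoke the hypothesis. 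One caveat you should make explicit: as printed, $e$ and $\tilde{e}$ are defined without squares, and read literally both are identically zero (each predictor is linear in zero-mean data, so the bias vanishes), which would make the hypothesis vacuous; your identification requires reading them as mean-squared prediction errors, which is surely the intent of \cite{zhang2023exact} but is an interpretive step, not a triviality.

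The genuine gap is that, under the hypothesis, the conclusion $E_{n,t}^B\to 0$ is \emph{equivalent} to $\tilde{e}\to 0$, so proving $\tilde{e}\to 0$ is the entire analytic content of the theorem --- and you do not prove it. You name the two obstacles (unbounded kriging weights on the increasingly ill-conditioned $K_{\phi}(\tilde{\chi}_n)$, and a rigorous screening argument for a one-sided densifying grid) and state that they must be overcome, but overcoming them is the theorem; ``the delicate point is to establish that the kriging weights concentrate where the filtered mean is accurate'' is a restatement of what needs proof, not a step of one. A completable route would exploit the structure that makes $\tilde{\chi}_n$ tractable in the first place: with $T=1$ the filtered mean is the posterior mean of $z_1$ given noisy data, so $\tilde{e}$ is the error of misspecified kriging with nugget at $0$ from \emph{equispaced} observations, for which spectral/Toeplitz representations (or, for Mat\'ern with $\nu=1/2$, an explicit Markov computation) and fixed-domain infill results in the spirit of \cite{tang2021identifiability}, combined with the measure equivalence of Lemma~\ref{lem:equivalence-t} to handle $\phi'\neq\phi_*$, yield convergence without any generic screening heuristic. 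As written, your argument establishes only the implication ``if $\tilde{e}\to 0$ and $|e-\tilde{e}|\to 0$ then $E_{n,t}^B\to 0$'', which is strictly weaker than the stated theorem.
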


\subsection*{Procedure of Bayesian Model Averaging}

Bayesian Model Averaging for predictors is given by
\begin{equation*}
p(\tilde{y}\given \mD) = \sum_{g=1}^G p_g(\tilde{y} \given \mD)p(M_g \given\mD),
\end{equation*}
where $\mD$ is the dataset, $M_g$ is the $g$-th candidate model, $G$ is the number of candidate models, and $p(M_g\given\mD)$ is the posterior probability of model $M_g$ given by
\begin{equation*}
p(M_g\given\mD) = \frac{p_g(\mD)p(M_g)}{\sum_{l=1}^G p(\mD)p(M_l)},
\end{equation*}
where $p(M_g)$ is the prior probability of model $M_g$ and $p_g(\mD)$ is the marginal likelihood of model $M_g$. If the prior is assumed to be uniform, $p(M_g)=1/G$ for $g=1,\dots,G$. 

\subsection*{Supplementary analysis of actigraph data}
For the analyses presented in Section~\ref{section:application}, we incorporated time varying estimates of ``Slope'' and ``NDVI''. We extracted $150$ points from the $650$ data points, as we did in Section~\ref{section:application}, to ensure the applicability of the discrete-time model. We then applied the continuous-time trajectory model, the discrete-time trajectory model, and the DLM to this subset. Figure~\ref{fig:app-beta} displays the posterior means and 95\% credible bands for the slopes obtained from each model. The continuous-time trajectory and discrete-time trajectory models show narrower credible intervals than the DLM. This indicates that the trajectory models provide more accurate slope estimates than the DLM since they account for spatial-temporal effects. We note that the aim of the current research requires accounting for these explanatory variables not only to improve the predictive inference presented in the manuscript, but also to infer about the underlying latent process posited to be generating the observations. Rather than identifying global statistical significance of explanatory variables on the subject's MAG, the time-varying impact of the explanatory variables enriches the predictive framework and better accounts for variation in the outcome, which, in turn, translates to improved estimation of the latent process as a spatially-temporally structured residual of the regression. 

\begin{figure}[t]
    \centering
    \includegraphics[width=\textwidth]{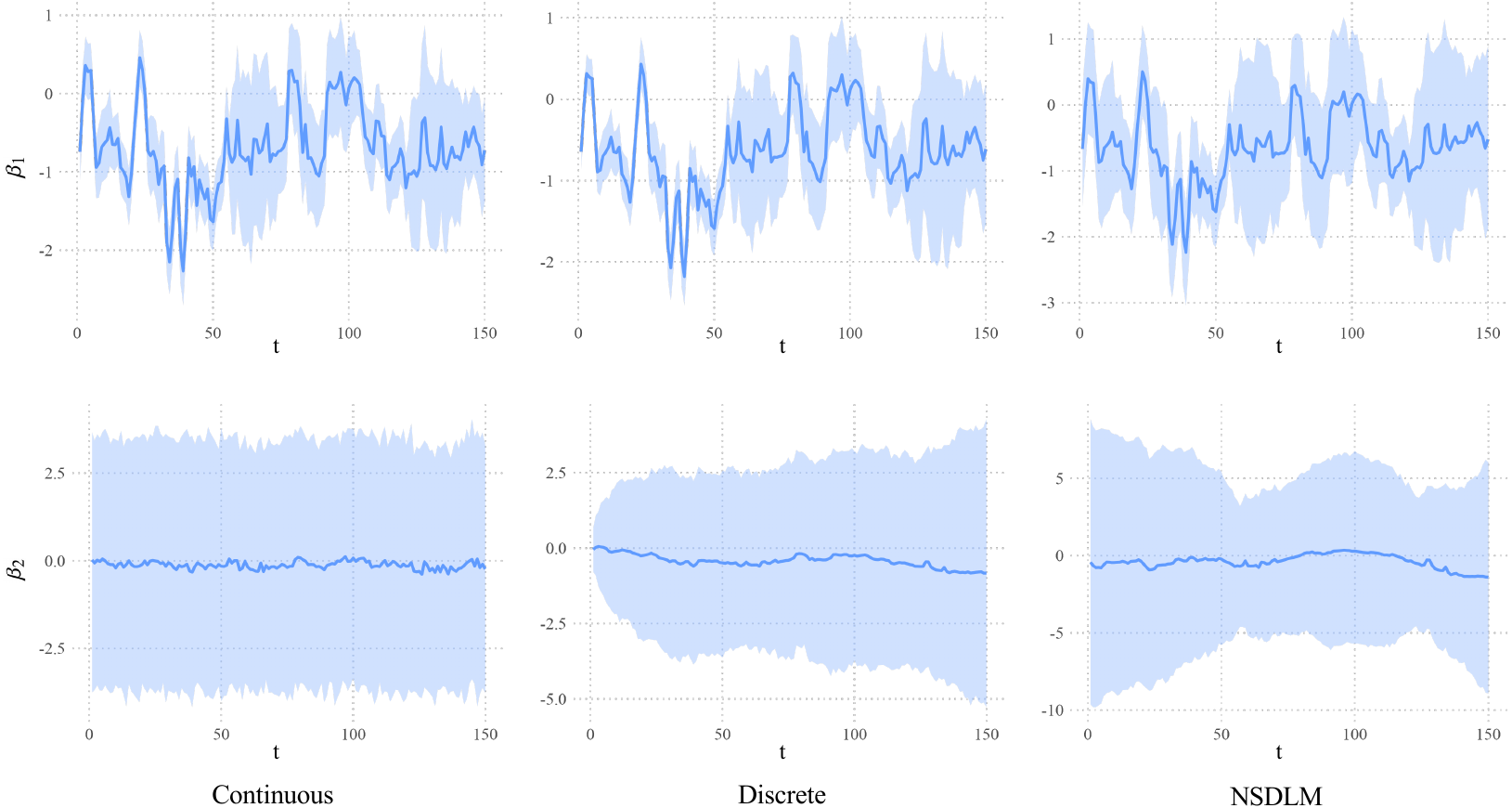}
    \caption{Two estimated slopes of each method. The solid lines and the shaded areas represent the posterior means and the 95\% credible intervals, respectively.\label{fig:app-beta}}
\end{figure}

\bibliographystyle{plainnat}
\bibliography{main}

\end{document}